\newcommand{\C}{{\mathcal C}}
\newcommand{\D}{\mathcal{D}}
\newcommand{\F}{{\mathcal F}}
\newcommand{\vM}{{\sf M}}
\newcommand{\vR}{{\sf R}}
\newcommand{\bbF}{\mathbb{F}}
\newcommand{\bbZ}{{\mathbb Z}}
\newtheorem{corollary}{Corollary}[section]
\newtheorem{example}{Example}[section]
\newtheorem{lemma}{Lemma}[section]
\newtheorem{proposition}{Proposition}[section]
\newtheorem{theorem}{Theorem}[section]
\title{Optimal Partitioned Cyclic Difference Packings for Frequency Hopping and Code
Synchronization}
\author{    Yeow~Meng~Chee,~\IEEEmembership{Senior Member,~IEEE},
		Alan~C.~H.~Ling,~and~Jianxing~Yin
		\thanks{The research of Y. M. Chee is supported in part by
		the National Research Foundation of Singapore under Research
		Grant NRF-CRP2-2007-03, and the
		Nanyang Technological University under Research Grant M58110040.
		The research of J. Yin is supported by NSFC Project Number 10671140.}%
		\thanks{Y. M. Chee is with the Division of Mathematical Sciences,
		School of Physical and Mathematical Sciences,
		Nanyang Technological University,
		21 Nanyang Link,
		Singapore 637371 (e-mail: {\tt ymchee@ntu.edu.sg}).}
		\thanks{A. C. H. Ling is with the Department of Computer Science, 
		University of Vermont, Burlington, 
		Vermont 05405, USA (e-mail: {\tt aling@emba.uvm.edu}).}
		\thanks{J. Yin is with the 
		Department of Mathematics, Suzhou University, 215006 Suzhou, 
		People's Republic of China (e-mail: {\tt jxyin@suda.edu.cn}).}
}
\date{}                                           
\begin{document}

\maketitle

\begin{abstract}
\boldmath
Optimal partitioned cyclic difference packings (PCDPs) are shown to give
 rise to optimal frequency-hopping
sequences and optimal comma-free codes. New constructions for PCDPs, based on
almost difference sets and cyclic difference matrices, are given.
These produce new infinite families of optimal PCDPs (and hence optimal frequency-hopping
sequences and optimal comma-free codes). The
 existence problem for optimal PCDPs in ${\mathbb Z}_{3m}$, with
$m$ base blocks of size three, is also solved for all $m\not\equiv 8,16\pmod{24}$.
\end{abstract}

\begin{keywords}
Almost difference sets, code synchronization, 
comma-free codes, cyclic difference matrices,
frequency-hopping sequences, partitioned difference packings
\end{keywords}

\section{Introduction}

\PARstart{F}{requency} hopping spread spectrum (FHSS) \cite{Pickholtzetal:1982}
is an important communication technique to
combat eavesdropping, Rayleigh fading, reduce interleaving depth and associated delay,
and enable efficient frequency reuse, giving rise to robust security and reliability.
As such, FHSS is
widely used in military radios, CDMA and GSM networks, radars and sonars, and
Bluetooth communications. 

In FHSS, an ordered list of frequencies, called a frequency-hopping sequence (FH sequence),
is allocated to each transmitter-receiver pair. Interference can occur when
two distinct transmitters use the same frequency simultaneously. In evaluating
the goodness of FH sequence design, the Hamming correlation function is
used as an important measure. Fuji-Hara {\em et al.} \cite{Fuji-Haraetal:2004}
introduced a new class of combinatorial designs and showed that they
are equivalent to FH sequences optimal with respect
to Hamming correlation. We call these combinatorial designs {\em
partitioned cyclic difference
packings} (PCDPs) in this paper.

PCDPs arise in another context.
In considering the construction of comma-free codes for synchronization over erroneous 
channels, Levenshte{\u\i}n \cite{Levenshtein:1971b} introduced {\em difference system
of sets} (DSS) and showed how DSS can be used to construct comma-free codes. We
establish connections between PCDP and DSS (and hence comma-free codes), especially
PCDPs that give rise to DSS and comma-free codes optimal with respect to redundancy.

As general results, we give new constructions of PCDPs via almost difference sets
and cyclic difference matrices. This gives new infinite families of optimal PCDPs.
The existence problem for optimal PCDPs in $\bbZ_{3m}$, with $m$ base blocks of size three,
is also solved for all $m\not\equiv 8,16\pmod{24}$.

\section{Mathematical Preliminaries}
\label{intro}

For a
positive integer $n$, the set $\{1,2,\ldots,n\}$ is denoted $[n]$, and
$\bbZ_n$ denotes the ring $\bbZ/n\bbZ$. The set $\bbZ_n\setminus\{0\}$ is
 denoted $\bbZ_n^\star$. The set of (nonzero) quadratic residues in $\bbZ_n$ is denoted
 $\bbZ_n^\square$ and the set of quadratic nonresidues of $\bbZ_n$ is denoted
 $\bbZ_n^\boxtimes$. For succinctness, we write $a_b$ for an element $(a,b)\in\bbZ_m\times\bbZ_n$.
 
 Given a collection
 \begin{equation*}
 \D = \{D_1, D_2, \ldots, D_m\}
 \end{equation*}
 of subsets (called {\em base blocks}) of $\bbZ_n$,
 define the {\em difference function} $\Phi_\D:\bbZ_n^\star\rightarrow\bbZ$ such that
 \begin{equation*}
  \Phi_\D(g) = \sum_{i=1}^m |(D_{i}+ g) \cap D_{i}|.
 \end{equation*}
 For positive integers $n$, $\lambda$, and multiset of positive integers $K$, a
 {\em cyclic difference packing} (CDP), or more precisely
 an $(n,K,\lambda)$-CDP, is
 a collection $\D$ of subsets of $\bbZ_n$ such that
 \begin{enumerate}[(i)]
 \item $K=[ |D| : D\in\D ]$; and
 \item $\lambda=\max_{g\in\bbZ_n^\star} \Phi_\D(g)$.
 \end{enumerate}
 If, in addition, $\D$ partitions $\bbZ_n$, then $\D$ is a
 {\em partitioned cyclic difference packing} (PCDP), or more precisely an
 $(n,K,\lambda)$-PCDP. For succinctness, we normally write the multiset $K$ in
 exponential notation: $[k_1^{a_1} k_2^{a_2} \cdots k_s^{a_s}]$ denotes the
 multiset containing $a_i$ occurrences of $k_i$, $i\in [s]$.
 The notion of PCDP is first introduced by Fuji-Hara {\em et al.} \cite{Fuji-Haraetal:2004}
 in their investigation of frequency-hopping sequences, where it is referred to as
 ``a partition type difference packing''.
 
It is not hard to verify that the following are equivalent definitions of a PCDP:
 \begin{enumerate}[(i)]
 \item $\D$ is an $(n,K,\lambda)$-PCDP if and only if $\D$ partitions $\bbZ_n$, and
 for any fixed $g\in\bbZ_n^\star$, the equation $x-y=g$ has at most
 $\lambda$ solutions $(x,y)\in \cup_{D\in\D} D\times D$.
 \item For a set $D\subseteq\bbZ_n$, let
 \begin{equation*}
 \Delta D =  \{a-b:a,b\in D, a\not=b\}.
 \end{equation*}
 Then $\D$ is an $(n,K,\lambda)$-PCDP if and only if $\D$ partitions $\bbZ_n$, and
 the multiset 
 \begin{equation*}
 \Delta \D = \bigcup_{i=1}^m \Delta D_i 
 \end{equation*}
 contains each element of $\bbZ_n$ at most $\lambda$ times.
 \end{enumerate}
 In the particular case where an $(n,K,\lambda)$-PCDP $\D$
 satisfies $\Phi_\D(g)=\lambda$ for all $g\in\bbZ_n^\star$, it is known as a {\em partitioned
 cyclic difference family} (PCDF), or more precisely an $(n,K,\lambda)$-PCDF.

 Given two positive integers $n$ and  $m < n$, it is obvious that
 any partition $\D$ of $\bbZ_n$ is an
 $(n, K, \lambda)$-PCDP for some $\lambda$. Furthermore, we have
 \begin{equation}
 \label{1.1}
 \lambda \geq    \left\lceil
 \frac{\sum_{i=1}^{m} |D_{i}|(|D_{i}| - 1)}{n-
                   1}\right\rceil    =  \left\lceil \frac{\sum_{i=1}^{m} |D_{i}|^2 -
                                       n}{n - 1}\right\rceil,
 \end{equation}
 since the multiset $\Delta\D$ contains
 $\sum_{i=1}^{m}|D_{i}|(|D_{i}| - 1)$ elements.
 The problem here we are concerned with is the
 construction of an $(n, K, \lambda)$-PCDP of $m$ base blocks with
 its {\em index} $\lambda$ as small as possible. Given $n$ and $m<n$, the minimum $\lambda$
 for which there exists an $(n,K,\lambda)$-PCDP of $m$ base blocks is
 denoted $\rho(n,m)$. An $(n,K,\lambda)$-PCDP of $m$ base blocks is {\em optimal} if
 $\lambda=\rho(n,m)$.
 
 From (\ref{1.1}), it is clear that
 \begin{equation} \label{1.2}
 \rho(n,m)  \geq  \left\lceil \frac{\sum_{i=1}^{m}|D_{i}|^2 -
                                       n}{n - 1}\right\rceil.
 \end{equation}
 The right side of (\ref{1.2}) cannot be
 determined uniquely by the parameters $n$ and $m$. To see when it
 attains the minimum for fixed $n$ and $m<n$, we write
 $n  = m\mu + \epsilon$ with $0 \le \epsilon \le m-1$. 
 It is well known that under the constraint $\sum_{i=1}^{m} |x_i|=n$, the sum
 $\sum_{i=1}^{m} x_i^2$ is minimized if and only if $|x_i-x_j|\leq 1$ for any
 $i,j\in[m]$. Hence, for an $(n,K,\lambda)$-PCDP $\D=\{D_1,\ldots,D_{m}\}$,
 the sum $\sum_{i=1}^{m} |D_i|^2$ attains the minimum if and only if $\D$
 contains exactly $\epsilon$ base blocks of size $\mu+1$ and $m-\epsilon$ base blocks
 of size $\mu$. Consequently, we have
  \begin{align} \label{1.3}
 \left\lceil \frac{\sum_{i=1}^{m} |D_{i}|^2 - n}{n - 1}\right\rceil 
   &= \left\lceil \frac{\epsilon (\mu + 1)^{2}+(m - \epsilon)\mu^{2} - m\mu - \epsilon}{n -
                   1}\right\rceil \nonumber \\
   & =  \left\lceil \frac{2\epsilon \mu + m\mu^2 - m\mu}{n -
                   1}\right\rceil \nonumber \\
                   & = 
                   \left\lceil \frac{2\epsilon \mu + (n-\epsilon)(\mu-1)}{n -
                   1}\right\rceil \nonumber \\
                  &=  \left\lceil \frac{(n-1)(\mu-1) + (\epsilon \mu + \epsilon + \mu-1)}{n -
                   1}\right\rceil \nonumber \\
                   & =  \left\lceil(\mu-1) +  \frac{\epsilon \mu + \epsilon + \mu-1}{n -
                   1}\right\rceil \nonumber \\
                   & =  \mu.
 \end{align}
The last equality in (\ref{1.3}) follows from the fact that $0 <
\epsilon \mu + \epsilon + \mu-1 \le n-1$, since $n = m\mu + \epsilon
> m$ and $0 \le \epsilon \le m-1$.

 It now follows from (\ref{1.2}) and (\ref{1.3}) that for any positive integers $m$ and $n= m\mu +
 \epsilon > m$ with $0 \le \epsilon \le m-1$,
 \begin{equation} \label{1.4}
 \rho(n,m)  \ge  \mu.
 \end{equation}

 We remark that for given positive integers $n$ and $m < n$, there may exist
 many optimal $(n, K, \lambda)$-PCDPs attaining the bound in
(\ref{1.4}). We also note that the lower bound on the
 function $\rho(n,m)$ in (\ref{1.4}) is not always attainable.

 The construction of optimal PCDPs have been studied by a number of authors.
 For more detailed information on PCDPs and known results, the reader is referred to
 \cite{Fuji-Haraetal:2004,Geetal:2006} and the references therein. In this paper, we
 make further investigation into optimal PCDPs. The paper is organized as follows. 
 In Section \ref{applications},
 we present the relationship among PCDPs,
 frequency-hopping sequences, and comma-free codes.
 Sections \ref{ADS}, \ref{CDM}, and \ref{PCDPs} are devoted to constructions of
 PCDPs, by which a number of new infinite classes of optimal PCDPs are
 produced. The existence of (optimal)  $(3m, [3^m],3)$-PCDPs is also determined
 for all $m\not\equiv 8,16\pmod{24}$.
 As a consequence, new infinite families of
 optimal frequency-hopping sequences and comma-free codes are
 obtained.

\section{Applications of PCDP}
\label{applications}

 PCDPs are closely related to frequency-hopping sequences and comma-free codes.
 We develop their relationship in this section.

 \subsection {PCDPs and Frequency-Hopping Sequences}

 Let $F = \{f_1, f_2, \ldots, f_{m}\}$ be a set of available
 frequencies, called a {\em frequency library}.
 As usual, $F^n$ denotes the set of
 all sequences of length $n$ over  $F$. An element of $F^n$ is called a 
 {\em frequency-hopping sequence} (FH sequence). Given two FH sequences
$X = (x_0, x_1, \ldots,$ $x_{n-1})$ and $Y = (y_0, y_1,
 \ldots,$ $y_{n-1})$, define their {\em Hamming correlation} $H_{X,Y}(t)$
 to be
 \begin{equation*}
 H_{X,Y}(t) = {\sum}_{i\in\bbZ_n}h[x_{i}, y_{i+t}], \quad t\in\bbZ_n,
 \end{equation*}
 where
 \begin{equation*}
 h[x, y]  = \begin{cases}
                   1, & \text{if $x = y$} \\
                   0, & \text{otherwise,}
                  \end{cases}
 \end{equation*}
 and all operations among position indices are performed in $\bbZ_n$.
 Further, define
 \begin{equation*}
 H(X) = {\rm max}_{t\in\bbZ_n^\star}H_{X,X}(t).
 \end{equation*}
 An FH sequence $X \in F^n$ is called {\em optimal} if $H(X) \leq
 H(X')$ for all $X' \in F^n$. Here we assume  that all
 transmitters use the same FH sequence, starting from different time
 slots. An FH sequence $X \in F^n$ with $H(X) = \lambda$ is
 called an $(n, m, \lambda)$-FH sequence.

 Frequency-hopping spread spectrum (FHSS)
  and direct sequence spread spectrum are two main spread coding technologies.
 In modern radar and
 communication systems, FHSS
 techniques have become very popular. FH sequences are used
 to specify which frequency will be used for transmission at any
 given time. Fuji-Hara {\em et al.} \cite{Fuji-Haraetal:2004} investigated
 frequency-hopping multiple access (FHMA) systems with a single
 optimal FH sequence using a combinatorial approach. They established the
 correspondence between frequency-hopping sequences and PCDPs. To be
 more precise, they labeled a frequency library $F$ of size $m$ by $\bbZ_m$
 and demonstrated that the set of position indices of
 an $(n, m, \lambda)$-FH sequence $X$  gives an $(n, K,
 \lambda)$-PCDP where $\lambda = H(X)$, and vise versa.
 We state this correspondence  in the following theorem using our notations.

\vskip 10pt
 \begin{theorem}[Fuji-Hara {\em et al.} \cite{Fuji-Haraetal:2004}]
 \label{char-1}
 There exists an optimal  $(n, m, {\lambda})$-FH sequence $X$ over the set of frequencies
 $F = \bbZ_m$ if and only if
 there exists an optimal $(n, K, {\lambda})$-PCDP of $m$ base blocks.
 \end{theorem}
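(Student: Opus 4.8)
The plan is to exhibit an explicit value-preserving bijection between the FH sequences in $F^n=\bbZ_m^n$ and the ordered disjoint covers of $\bbZ_n$ by $m$ labelled base blocks, and to read off the equivalence of optima from it. Given $X=(x_0,\ldots,x_{n-1})\in\bbZ_m^n$, I would assign to each frequency $f\in\bbZ_m$ its set of occurrence positions $D_f=\{i\in\bbZ_n:x_i=f\}$. Since every index $i$ carries exactly one frequency $x_i$, the collection $\D=\{D_f:f\in\bbZ_m\}$ covers $\bbZ_n$ disjointly by $m$ labelled (possibly empty) base blocks, and the assignment is manifestly invertible: from any such $\D$ one recovers $X$ by letting $x_i$ be the unique $f$ with $i\in D_f$.

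Next I would match the two figures of merit. For a fixed shift $t\in\bbZ_n^\star$, the summand $h[x_i,x_{i+t}]$ equals $1$ precisely when $i$ and $i+t$ lie in a common block $D_f$, that is, when $i\in D_f\cap(D_f-t)$ for some $f$. Grouping the sum over $i$ by frequency therefore gives
\[
H_{X,X}(t)=\sum_{f\in\bbZ_m}|D_f\cap(D_f-t)|=\sum_{f\in\bbZ_m}|(D_f+t)\cap D_f|=\Phi_\D(t),
\]
where the middle equality is the translation $i\mapsto i+t$, which is a bijection $D_f\cap(D_f-t)\to(D_f+t)\cap D_f$. Taking the maximum over $t\in\bbZ_n^\star$ yields $H(X)=\max_{t}\Phi_\D(t)=\lambda$, so $X$ is an $(n,m,\lambda)$-FH sequence exactly when $\D$ is an $(n,K,\lambda)$-PCDP with $K=[\,|D_f|:f\in\bbZ_m\,]$.

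Finally I would transfer optimality. Because the correspondence is a bijection between $\bbZ_m^n$ and the disjoint covers of $\bbZ_n$ by $m$ labelled base blocks under which $H(X)$ equals the index of $\D$, the two minimization problems have the same value, namely $\min_{X'\in\bbZ_m^n}H(X')=\rho(n,m)$. Consequently $H(X)=\rho(n,m)$ if and only if $\D$ attains index $\rho(n,m)$; equivalently, $X$ is optimal if and only if $\D$ is optimal. Since the common minimum is attained on both sides, an optimal $(n,m,\lambda)$-FH sequence exists precisely when an optimal $(n,K,\lambda)$-PCDP of $m$ base blocks exists, which is the asserted statement.

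The main obstacle I anticipate is bookkeeping rather than depth. One must track the index shift carefully so that $H_{X,X}(t)$ lands on $\Phi_\D(t)$ as written above; this in fact comes out directly, and even a stray sign would be harmless because the same translation argument shows $|(D_i+g)\cap D_i|=|(D_i-g)\cap D_i|$, so $\Phi_\D$ is symmetric. The more delicate point is confirming that the two optimization domains genuinely coincide, which touches the minor question of whether empty base blocks are admitted. Since $m<n$ forces the index-minimizing cover to use all $m$ blocks with sizes differing by at most one, exactly as in the derivation of (\ref{1.4}), admitting or forbidding empty blocks leaves $\rho(n,m)$ unchanged, so the equivalence of optima is unaffected.
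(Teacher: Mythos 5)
Your core construction is exactly the correspondence the paper relies on: the paper gives no proof of Theorem~\ref{char-1} (it is quoted from Fuji-Hara \emph{et al.}), but the paragraph preceding it sketches precisely your dictionary, in which the occurrence sets $D_f=\{i\in\bbZ_n : x_i=f\}$ of the frequencies serve as the base blocks. Your identity $H_{X,X}(t)=\Phi_\D(t)$, proved via the shift bijection $i\mapsto i+t$, is correct and is the real substance of the equivalence.

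The one step that does not hold up as written is your disposal of unused frequencies. You assert that, ``exactly as in the derivation of (\ref{1.4}),'' index-minimality forces a cover to use all $m$ blocks with sizes differing by at most one. But the computation in (\ref{1.3})--(\ref{1.4}) only identifies which size distributions minimize the \emph{counting lower bound} $\lceil(\sum_i|D_i|^2-n)/(n-1)\rceil$; it says nothing about the structure of an index-minimizing partition (an optimal PCDP need not have balanced block sizes, and the bound (\ref{1.4}) is not even always attained, as the paper notes). What your optimality transfer actually needs is the statement that a sequence $X'$ using only $m'<m$ frequencies cannot satisfy $H(X')<\rho(n,m)$. This is true, but the correct justification is a refinement argument rather than an appeal to (\ref{1.4}): the partition associated with $X'$ has $m'\le m<n$ nonempty blocks, so some block has at least two elements; splitting a block into two nonempty parts only deletes elements from the difference multiset $\Delta\D$, hence never increases the index. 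Iterating until exactly $m$ nonempty blocks remain yields a PCDP of $m$ base blocks with index at most $H(X')$, so $\rho(n,m)\le H(X')$. With that one-line repair in place of your final paragraph, both directions of the equivalence, including the equality of the two optimal values $\lambda$, go through as you describe.
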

 \vskip 10pt

 Theorem \ref{char-1} reveals that in order to construct optimal FH
 sequences, one
 needs only to construct optimal PCDPs. This  serves
 as the motivation behind our consideration of PCDPs.

 \subsection {PCDPs and Comma-Free Codes}

 Consider the process of transmitting data over a channel, where the data being
 sent is a stream of symbols from an alphabet $Q$ of
 size $q$.  The data stream consists
 of consecutive messages, each being a sequence of $n$ consecutive
 symbols:
 \begin{equation*}
 \cdots \underbrace{x_1 \cdots x_n}\underbrace{y_1 \cdots y_n}\cdots
 \end{equation*}
 The synchronization problem that arises at the receiving end is the
 task of correctly partitioning the data stream into messages of length
 $n$, as opposed to incorrectly conceiving a sequence of $n$ symbols
 that is the concatenation of the end of one message with the beginning
 of another message as a single message:
 \begin{equation*}
 \cdots \underbrace{x_{i+1} \cdots x_n}\underbrace{y_1 \cdots y_i}\cdots 
 \end{equation*}

 One way to resolve the synchronization problem uses comma-free codes.
 A code is a set $\C \subseteq Q^n$, with its elements called {\em codewords}.
 $\C$ is termed a {\em comma-free code} if the
{\em  concatenation}
\begin{equation*}
 T_i(x, y)= x_{i+1} \cdots x_ny_1 \cdots y_i
 \end{equation*}
 of any two not necessarily distinct  codewords $x=(x_1 \cdots x_n)$ and $y = (y_1
 \cdots y_n)$ is never a codeword. More
 generally, associated with a code $\C \subseteq Q^n$, one can define
 its {\em comma-free index} $I(\C)$ as
 \begin{equation*}
 I(\C) = \min_{\text{$x,y,z\in\C$ and $i\in[n-1]$}} d_H(T_i(x, y), z),
 \end{equation*}
 where $d_H(\cdot,\cdot)$ denotes the Hamming distance function. If 
 $I(\C)>0$, then $\C$ is a comma-free code,
 and hence we can distinguish a codeword from a concatenation of two
 codewords even in the case when up to $\lfloor(I(\C)-1)/2\rfloor$ errors
 have occurred \cite{Golombetal:1958,Tonchev:2005}.

 Codes with prescribed comma-free index can be constructed by using
 {\em difference systems of sets} (DSS), a combinatorial structure
 introduced by Levenshte{\u\i}n \cite{Levenshtein:1971b}
 (see also \cite{Levenshtein:2004,Tonchev:2005}).
 An $(n, K, \eta)$-DSS is a collection $\F= \{D_1,D_2, \ldots, D_{m}\}$ of
 $m$ disjoint subsets of $\bbZ_n$ such that the multiset
 \begin{equation*}
 \{a - b \pmod{n} : \text{$a \in D_i$, $b \in D_j$, $i,j\in [m]$ and $i\not=j$}\}
 \end{equation*}
 contains each element of $\bbZ_n^\star$ at least $\eta$ times,
 where $K = [|D| : D\in \F]$. Application of DSS
 to code synchronization
 requires that the {\em redundancy}
 \begin{equation*}
 r_m(n, \eta) = \sum_{i=1}^{m}|D_i|
 \end{equation*}
 be as small as possible. Levenshte{\u\i}n \cite{Levenshtein:1971b} proved that
 \begin{equation} \label{Leve}
 r_m(n, \eta) \ge  \sqrt{\frac{\eta m(n - 1)}{m - 1}}.
 \end{equation}
For more detailed information on comma-free codes, the reader is
referred to \cite{Levenshtein:2004,Tonchev:2005} and the
 references therein. Here, we are interested in  the link between
 PCDP and DSS, which is stated in the following theorem.

\vskip 10pt
 \begin {theorem}
 \label {Link} 
 Let $m$ and $n = m\mu+ \epsilon > m$ be positive integers
 and  $0 \le \epsilon \le m-1$. If an $(m\mu + \epsilon, [(\mu+1)^{\epsilon}\mu^{m - \epsilon}],
 \mu)$-PCDP exists, then so does an $(n, [(\mu+1)^{\epsilon}\mu^{m - \epsilon}], \eta)$-DSS
 of minimum redundancy $n$, where $\eta = n - \mu = (m-1)\mu + \epsilon.$
 \end {theorem}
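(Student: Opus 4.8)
The plan is to show that the collection $\D=\{D_1,\ldots,D_m\}$ realizing the PCDP serves, unaltered, as the required DSS. Because $\D$ partitions $\bbZ_n$ its blocks are pairwise disjoint, so it is an admissible DSS candidate; and since a partition satisfies $\sum_{i=1}^m|D_i|=n$, such a DSS automatically has redundancy $n$. Two things then remain: (a) to verify that each nonzero residue occurs at least $\eta=n-\mu$ times among the \emph{external} differences, and (b) to confirm that $n$ meets the Levenshte{\u\i}n bound (\ref{Leve}), so that $n$ is genuinely the minimum redundancy.

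For (a) I would count differences globally. Fix $g\in\bbZ_n^\star$. Since $a\mapsto(a,a-g)$ is a bijection, there are exactly $n$ ordered pairs $(a,b)\in\bbZ_n\times\bbZ_n$ with $a-b=g$, and each is \emph{internal} (both entries in one block) or \emph{external} (entries in different blocks). The internal pairs with difference $g$ number exactly $\sum_{i=1}^m|(D_i+g)\cap D_i|=\Phi_\D(g)\le\mu$ by the PCDP hypothesis, so the multiplicity of $g$ among the external differences is
\begin{equation*}
n-\Phi_\D(g)\ge n-\mu=\eta.
\end{equation*}
Hence every element of $\bbZ_n^\star$ appears at least $\eta$ times, and $\D$ is an $(n,[(\mu+1)^\epsilon\mu^{m-\epsilon}],\eta)$-DSS.

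For (b), this DSS has redundancy $r=\sum_{i=1}^m|D_i|=n$. As redundancy is an integer, (\ref{Leve}) shows that no DSS with these parameters can have redundancy below $\lceil\sqrt{\eta m(n-1)/(m-1)}\,\rceil$, so it suffices to prove this ceiling equals $n$, i.e.
\begin{equation*}
(n-1)^2<\frac{\eta m(n-1)}{m-1}\le n^2 .
\end{equation*}
Writing $\eta=(m-1)\mu+\epsilon$ and $n=m\mu+\epsilon$, the right-hand inequality simplifies to $n(m-1-\epsilon)+\epsilon\ge 0$ and the left-hand one to $\eta m-(n-1)(m-1)=m-1+\epsilon>0$; both hold by $0\le\epsilon\le m-1$. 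Therefore the bound rounds up to exactly $n$, matching the redundancy of the construction.

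The entire conceptual step is (a): the identity ``external $=$ total $-$ internal'' turns the PCDP's \emph{upper} bound $\Phi_\D(g)\le\mu$ on internal coincidences into the DSS's \emph{lower} bound $\eta$ on external differences, which is the whole point. The only delicate part is the arithmetic in (b), where one must check that the Levenshte{\u\i}n bound---though in general irrational---always rounds up to precisely $n$, so that the resulting DSS is not just valid but redundancy-optimal.
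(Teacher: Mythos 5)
Your proposal is correct and follows essentially the same route as the paper: you use the PCDP itself as the DSS, convert the upper bound $\Phi_\D(g)\le\mu$ on internal differences into the lower bound $n-\Phi_\D(g)\ge n-\mu=\eta$ on external differences via the identity ``internal $+$ external $=n$'' (the paper phrases this with its function $\Gamma_\D$), and then show the Levenshte{\u\i}n bound (\ref{Leve}) lies in $(n-1,n]$ so that integrality forces minimum redundancy $n$. Your algebraic verification of the squeeze $(n-1)^2<\eta m(n-1)/(m-1)\le n^2$ is in fact spelled out more explicitly than the paper's, but it is the same argument.
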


 \begin{proof}
 Let $\D= \{D_{1}, D_{2}, \ldots, D_{m}\}$ be
 an $(n, [(\mu+1)^{\epsilon}\mu^{m - \epsilon}],
 \mu)$-PCDP. By definition of a PCDP, we have
 \begin{equation*}
 \mu = \max_{g\in\bbZ_n^\star} \Phi_\D(g).
 \end{equation*}
Let us define $\Gamma_\D:\bbZ_n^\star\rightarrow\bbZ$ such that
 \begin{equation*}
 \Gamma_\D(g) =\sum_{i,j\in[m], i\not=j}|(D_i + g)
 \cap D_j|.
 \end{equation*}
 It follows that
 \begin{equation*}
 \Phi_\D(g) + \Gamma_\D(g) = |(\bbZ_n + g)\cap \bbZ_n| = n,
 \end{equation*}
 for any $g \in\bbZ_n^\star$. Furthermore,
 \begin{align*}
 \min_{g\in\bbZ_n^\star} \Gamma_\D(g)
 &=  \min_{g\in\bbZ_n^\star} (n - \Phi_\D(g)) \\
 &=  n - \max_{g\in\bbZ_n^\star} \Phi_\D(g) \\
 & =  n - \mu.
 \end{align*}
 
 Hence, $\D$ is an $(n, [(\mu+1)^{\epsilon}\mu^{m - \epsilon}],\eta)$-DSS for
 $\eta=n-\mu$ and with redundancy $r_m(n,\eta)=n$.
 We now prove this redundancy to be minimum. Since $\eta = n - \mu$ and $n = m\mu+ \epsilon$,
 the right side of the inequality (\ref{Leve}) equals
\begin{align*}
    \sqrt{\frac{\eta m(n - 1)}{m - 1}}
  &=\sqrt{\frac{m(n-\mu)(n-1)}{m-1}}\\
  &=\sqrt{\frac{m(n-1)n - m\mu(n-1)}{m-1}}\\
  &=\sqrt{\frac{m(n-1)n - (n-\epsilon)(n-1)}{m-1}}\\
 &=\sqrt{\frac{(m-1)(n-1)n + \epsilon(n-1)}{m-1}}.
 \end{align*}
 This implies
 \begin{equation*}
 n - 1 < \sqrt{\frac{\eta m(n - 1)}{m - 1}} < n,
 \end{equation*}
 since $0 \le \epsilon \le (n-1)(m-1).$  
 \end{proof}
 \vskip 10pt

 Theorem \ref {Link} shows that the DSS derived from a PCDP of
 minimum index has minimum redundancy, and hence produces optimal
 comma-free codes with respect to the bound (\ref{Leve}). This  serves
 to provide another motivation behind the study of PCDPs.

 \section{Constructions from Almost Difference Sets}
 \label{ADS}

  An {\em almost difference set} in an additive group $G$  of order $n$,
  or an $(n, k, \lambda; t)$-ADS in short, is 
  a $k$-subset $D$ of $G$ such that
  the multiset
 $\{a - b: a, b \in D\ \mbox{and}\ a\not= b\}$ contains $t$ nonzero elements of $G$, each
 exactly $\lambda$ times, and each of the remaining $n-1-t$ nonzero elements
 exactly $\lambda + 1$ times.
 This is equivalent to saying that
 \begin{equation*}
 \Phi_{\{D\}}(g) \:= |(D + g) \cap D|.
 \end{equation*}
 takes on the value $\lambda$ exactly $t$ times and the value
 $\lambda + 1$ exactly $n - 1 - t$ times,  when $g$ ranges over
 all the nonzero elements of $G$. An obvious necessary condition
 for the existence of a $(n, k, \lambda; t)$-ADS is
 \begin{equation*}
 (\lambda + 1)(n-1) \equiv t \pmod {k(k-1)}.
 \end{equation*}
 In the extreme case where $t = n - 1$,
 an $(n, k, \lambda; t)$-ADS is  an $(n, k, \lambda)$ {\em difference set}
 in the usual sense (see \cite{Jungnickeletal:2007}).
 It should be apparent to the reader that 
 an $(n, k, \lambda; t)$-ADS in $\bbZ_n$ is an
 $(n, \{k\}, \lambda+1)$-CDP.

In this section, we construct new optimal PCDPs from almost difference sets.
We begin with the following result.

\vskip 10pt
 \begin {proposition}
 \label {Lem2k}
 Let $n = 2\mu$ be a positive integer and let $D$ be a $\mu$-subset of $\bbZ_n$.
 Let $\widehat{D} = \bbZ_n \setminus D$.
 If one of $D$ and $\widehat{D}$ is an $(n, \mu, \lambda; t)$-ADS in $\bbZ_n$,
 then so is the other.
 \end {proposition}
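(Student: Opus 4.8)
The plan is to show that the difference functions of $D$ and $\widehat{D}$ coincide on every nonzero group element, from which the ADS property transfers immediately. The key observation is that when $|D| = \mu = n/2$, the complement $\widehat{D}$ also has size $\mu$, so both sets are candidates for being an $(n,\mu,\lambda;t)$-ADS with the same parameters. What I want to establish is the identity $\Phi_{\{\widehat{D}\}}(g) = \Phi_{\{D\}}(g)$ for all $g \in \bbZ_n^\star$.

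First I would fix $g\in\bbZ_n^\star$ and compute $|(\widehat{D}+g)\cap\widehat{D}|$ by an inclusion–exclusion argument over the partition $\bbZ_n = D \sqcup \widehat{D}$. Writing $\widehat{D}+g = \bbZ_n\setminus(D+g)$, an element $x$ lies in $(\widehat{D}+g)\cap\widehat{D}$ precisely when $x\notin D+g$ and $x\notin D$. Counting the complement, the standard four-way count gives
\begin{equation*}
|(\widehat{D}+g)\cap\widehat{D}| = n - |D+g| - |D| + |(D+g)\cap D|.
\end{equation*}
Since $|D+g| = |D| = \mu$ and $n = 2\mu$, the first three terms cancel to $n - 2\mu = 0$, leaving
\begin{equation*}
\Phi_{\{\widehat{D}\}}(g) = |(\widehat{D}+g)\cap\widehat{D}| = |(D+g)\cap D| = \Phi_{\{D\}}(g).
\end{equation*}

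With this identity in hand, the conclusion is essentially automatic. Using the characterization of an ADS in terms of its difference function (as recorded in the excerpt), $D$ is an $(n,\mu,\lambda;t)$-ADS if and only if $\Phi_{\{D\}}(g)$ takes the value $\lambda$ exactly $t$ times and the value $\lambda+1$ exactly $n-1-t$ times as $g$ ranges over $\bbZ_n^\star$. Because $\Phi_{\{\widehat{D}\}}$ agrees with $\Phi_{\{D\}}$ pointwise on $\bbZ_n^\star$, the two functions realize the identical value distribution, so $\widehat{D}$ inherits exactly the same ADS parameters, and conversely. This symmetric argument handles both directions of the ``if and only if'' at once.

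I expect the routine inclusion–exclusion count to be the only place requiring care, and even there the main point is just bookkeeping: the cancellation $n - |D+g| - |D| = 0$ relies critically on the hypothesis $n = 2\mu$, which is what makes the complementation preserve block size. There is no real obstacle here; the proposition is a clean duality statement, and the whole proof reduces to the one-line difference-function identity above.
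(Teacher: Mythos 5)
Your proposal is correct and follows essentially the same route as the paper: both reduce the proposition to the pointwise identity $\Phi_{\{D\}}(g)=\Phi_{\{\widehat{D}\}}(g)$ for all $g\in\bbZ_n^\star$ and then transfer the ADS value distribution. Your inclusion--exclusion computation of $|(\widehat{D}+g)\cap\widehat{D}|$ is just a rearrangement of the paper's double-counting over the partition $\{D,\widehat{D}\}$ (and its translate), both hinging on $|D|=|\widehat{D}|=\mu=n/2$.
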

 
 \begin{proof}
We need only prove that
\begin{equation*}
 \Phi_{\{D\}}(g) = \Phi_{\{\widehat{D}\}}(g)
 \end{equation*}
 for any $g \in \bbZ_{2\mu}^{\star}.$ In fact, since  $\{D, \widehat{D}\}$ is a partition of $\bbZ_{2\mu}$,
 and $|D| = |\widehat{D}|= \mu$, we have
 \begin{equation*}
 |D \cap (D + g)| + |D \cap (\widehat{D} + g)| = \mu
 = |D \cap (\widehat{D} + g)| + |\widehat{D} \cap (\widehat{D} + g)|. 
 \end{equation*}
 Hence,
 \begin{equation*}
 \Phi_{\{D\}}(g)= \mu - |D \cap (\widehat{D} + g)| =
 \Phi_{\{\widehat{D}\}}(g).
 \end{equation*}
 This equality does not depend
 on the choice of $g\in\bbZ_{2\mu}^\star$. 
 \end{proof}
 \vskip 10pt

 As an immediate consequence of Proposition \ref{Lem2k}, we have the following corollary.

\vskip 10pt
 \begin {corollary}
 \label {Lem2k-cor} 
 Let $n=2\mu$, $n\equiv 2\pmod{4}$.
 Then  $\rho(n,2)$ cannot attain the lower bound $\mu$ in (\ref{1.4}), that is,
 $\rho(n,2) \ge \mu + 1 = (n+2)/2$.
 \end {corollary}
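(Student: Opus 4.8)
The plan is to argue by contradiction. Suppose $\rho(n,2)=\mu$, so that some $2$-block PCDP $\D$ of $\bbZ_n$ has index $\lambda=\mu$. First I would record the arithmetic fact that $n\equiv 2\pmod 4$ forces $\mu=n/2$ to be \emph{odd}; this parity is the whole engine of the proof. Next I would reduce to the balanced configuration: with $m=2$ and $n=2\mu$ we have $\epsilon=0$, and the computation leading to (\ref{1.3}) shows that the bound $\mu$ in (\ref{1.4}) is met only when the two base blocks have equal size $\mu$. Thus $\D=\{D,\widehat D\}$ with $|D|=\mu$ and $\widehat D=\bbZ_n\setminus D$, which is exactly the setting of Proposition \ref{Lem2k}.

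I would then invoke Proposition \ref{Lem2k} in the form actually established in its proof, namely $\Phi_{\{D\}}(g)=\Phi_{\{\widehat D\}}(g)$ for every $g\in\bbZ_n^\star$. Since $\Phi_\D(g)=\Phi_{\{D\}}(g)+\Phi_{\{\widehat D\}}(g)=2\,\Phi_{\{D\}}(g)$, the difference function $\Phi_\D$ takes only even values. On the other hand, applying (\ref{1.2}) together with the evaluation (\ref{1.3}) to this balanced $\D$ gives $\lambda=\max_{g\in\bbZ_n^\star}\Phi_\D(g)\ge\mu$. But $\mu$ is odd while $\max_g\Phi_\D(g)$ is even, so equality $\lambda=\mu$ is impossible; the smallest even integer that is at least the odd number $\mu$ is $\mu+1$, whence $\lambda\ge\mu+1$. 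This contradicts $\lambda=\mu$ and yields $\rho(n,2)\ge\mu+1=(n+2)/2$.

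The step I expect to carry the real weight is the reduction to equal block sizes, because Proposition \ref{Lem2k} speaks only about a set and its complement, i.e.\ about the balanced partition, and it is only there that the parity obstruction $\Phi_\D\equiv 0\pmod 2$ is available. For the balanced partition the argument above is decisive, so the delicate point is to be sure that no \emph{unbalanced} $2$-block partition attains index $\mu$ either. I would handle this by making explicit the statement, already implicit in the derivation of (\ref{1.3}), that the lower bound $\mu$ in (\ref{1.4}) is attained only by the balanced partition: any split into blocks of sizes $\mu+j$ and $\mu-j$ with $j\ge 1$ strictly increases $\sum_i|D_i|^2$ beyond $2\mu^2$, and I would verify through (\ref{1.2}) that this forces the index to exceed $\mu$ in the range at issue. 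With the balanced case being the only candidate for meeting the bound, Proposition \ref{Lem2k} and the parity of $\mu$ then close the argument.
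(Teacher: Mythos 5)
Your treatment of the balanced partition is exactly the paper's proof: Proposition \ref{Lem2k} gives $\Phi_{\{D\}}(g)=\Phi_{\{\widehat D\}}(g)$, so $\Phi_\D(g)=2\Phi_{\{D\}}(g)$ is even, while $n\equiv 2\pmod 4$ makes $\mu$ odd; hence no balanced partition has index $\mu$. The genuine gap lies in the step you yourself flag as carrying the real weight --- the reduction to equal block sizes --- and the justification you propose for it is false. For block sizes $\mu+j$ and $\mu-j$, the counting bound (\ref{1.1})/(\ref{1.2}) reads
\begin{equation*}
\lambda\;\ge\;\left\lceil\frac{2\mu^2+2j^2-2\mu}{2\mu-1}\right\rceil
=\left\lceil\,\mu+\frac{2j^2-\mu}{2\mu-1}\,\right\rceil ,
\end{equation*}
and this ceiling equals $\mu$ (not $\mu+1$) whenever $2j^2\le\mu$: for instance, for $j=1$ and any $\mu\ge 3$ it is $\left\lceil\mu-\frac{\mu-2}{2\mu-1}\right\rceil=\mu$. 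The strict increase of $\sum_i|D_i|^2$ beyond $2\mu^2$ is simply absorbed by the ceiling, so counting can never rule out an unbalanced $2$-block partition of index $\mu$; your proposed verification would fail already at $j=1$.

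What closes the gap is not counting but the parity argument itself, which extends to arbitrary $2$-block partitions. For any partition $\{D_1,D_2\}$ of $\bbZ_n$ and any $g\in\bbZ_n^\star$, comparing the two decompositions $D_1=(D_1\cap(D_1+g))\sqcup(D_1\cap(D_2+g))$ and $D_1+g=((D_1+g)\cap D_1)\sqcup((D_1+g)\cap D_2)$ shows $|D_1\cap(D_2+g)|=|(D_1+g)\cap D_2|=:c(g)$, whence
\begin{equation*}
\Phi_\D(g)=\bigl(|D_1|-c(g)\bigr)+\bigl(|D_2|-c(g)\bigr)=n-2c(g),
\end{equation*}
which is even because $n$ is even, with no assumption whatsoever on $|D_1|$ and $|D_2|$. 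Since (\ref{1.1}) together with the minimization in (\ref{1.3}) gives $\lambda\ge\mu$ for every $2$-block partition, and $\mu$ is odd, every such partition has index at least $\mu+1$, which is the corollary. In fairness, the paper's own proof is also silent about unbalanced partitions (it rules out $(n,[(n/2)^2],\mu)$-PCDPs and then asserts $\rho(n,2)\ge\mu+1$), so your instinct that this case needs an explicit argument is sound; it is the tool you chose for it that cannot do the job.
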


 \begin{proof}
  By assumption, $n= m\mu + \epsilon$ with $m= 2, \mu =
 n/2$ and $\epsilon = 0$. Since $n \equiv 2\pmod{4}$,
 $\mu$ is odd.  On the other hand, for any $(2\mu, [\mu^2], \lambda)$-PCDP $\D=
 \{D_1, D_2\}$, we have
 \begin{equation*}
 \Phi_\D(g) = \Phi_{\{D_1\}}(g) + \Phi_{\{D_2\}}(g) = 2\Phi_{\{D_1\}}(g)
 \end{equation*}
 by Proposition \ref{Lem2k}. So, $\lambda$
  must be even.  Hence,  an $(n, [(n/2)^2], \mu)$-PCDP cannot
  exist, which implies $\rho(n,2) \ge \mu + 1 =
 (n+2)/2$.  
 \end{proof}
 \vskip 10pt

 Now we turn to constructions.

\vskip 10pt
 \begin {proposition}
 \label {AdsCon-1} 
 Let $n \equiv 0\pmod{4}$.
 If there exists an $(n, n/2, (n-4)/4; n/4)$-ADS in $\bbZ_n$, then
 there exists an optimal $(n, [(n/2)^2], n/2)$-PCDP.
 \end {proposition}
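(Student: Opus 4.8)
The plan is to build the desired PCDP directly from the given ADS by pairing it with its complement. First I would set $\widehat{D} = \bbZ_n \setminus D$, where $D$ is the hypothesized $(n, n/2, (n-4)/4; n/4)$-ADS, so that $\D = \{D, \widehat{D}\}$ is a partition of $\bbZ_n$ into two blocks of size $n/2$ --- exactly the block structure $[(n/2)^2]$ required. Since $n = 2\mu$ with $\mu = n/2$ and $|D| = |\widehat{D}| = \mu$, Proposition \ref{Lem2k} applies: $\widehat{D}$ is again an $(n, n/2, (n-4)/4; n/4)$-ADS and, crucially, $\Phi_{\{D\}}(g) = \Phi_{\{\widehat{D}\}}(g)$ for every $g \in \bbZ_n^\star$.

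Next I would compute the difference function of the partition. Because $\Phi_\D$ is additive over base blocks, $\Phi_\D(g) = \Phi_{\{D\}}(g) + \Phi_{\{\widehat{D}\}}(g) = 2\Phi_{\{D\}}(g)$. The ADS hypothesis tells me that $\Phi_{\{D\}}(g)$ takes only the two values $(n-4)/4$ and $(n-4)/4 + 1 = n/4$ as $g$ ranges over $\bbZ_n^\star$. Doubling, $\Phi_\D(g)$ takes only the values $n/2 - 2$ and $n/2$, whence $\max_{g\in\bbZ_n^\star} \Phi_\D(g) = n/2$. Therefore $\D$ is an $(n, [(n/2)^2], n/2)$-PCDP, as required.

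Finally, optimality follows from the general lower bound. With $m = 2$ and $\epsilon = 0$ we have $n = 2\mu$, so (\ref{1.4}) gives $\rho(n,2) \geq \mu = n/2$; since the constructed PCDP has index exactly $n/2$, it meets the bound and is optimal.

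I do not expect any genuine obstacle here: once Proposition \ref{Lem2k} is invoked, the argument reduces to the arithmetic identity $2\bigl((n-4)/4 + 1\bigr) = n/2$. The only point that deserves care is verifying that the value $n/2$ is actually attained, rather than merely serving as an upper bound. This holds because the ADS has $n - 1 - n/4 = 3n/4 - 1 > 0$ elements $g$ at which $\Phi_{\{D\}}(g) = n/4$, which guarantees the index equals $n/2$ exactly and not something smaller.
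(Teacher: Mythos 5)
Your proof is correct and follows essentially the same route as the paper's own: partition $\bbZ_n$ into $D$ and its complement, use Proposition \ref{Lem2k} to conclude $\Phi_\D(g) = 2\Phi_{\{D\}}(g)$, read off the two values $(n-4)/2$ and $n/2$ from the ADS hypothesis, and invoke the bound (\ref{1.4}) for optimality. Your extra check that the value $n/2$ is actually attained (because $n-1-t = (3n-4)/4 > 0$) simply makes explicit a counting fact the paper states directly.
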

 
 \vskip 10pt
 \begin{proof}
    Let $D$ be the given
 $(n, n/2, \lambda; t)$-ADS in $\bbZ_n$, where $\lambda = (n-4)/4$ and $t = n/4$.
Let $\widehat{D}=\bbZ_n\setminus D$.
Then $\D = \{ D,\widehat{D}\}$ is a partition of $\bbZ_n$ and as
in the proof of Proposition \ref{Lem2k}, we
 have
\begin{align*}
 \Phi_\D(g) & =  |D \cap (D + g)| + |\widehat {D} \cap (\widehat {D} + g)|\\
                 & = \Phi_{\{D\}}(g) + \Phi_{\{\widehat {D}\}}(g)\\
                 & =  2\Phi_{\{D\}}(g).
 \end{align*}
 Hence, for any $g \in \bbZ_n^\star$, 
 $\Phi_\D(g)$ takes on the value $(n-4)/2$ exactly $t = n/4$ times
 and takes on the value $n/2$ exactly $n- 1 - t = (3n-4)/4$ times.
 Therefore, $\D$ is an $(n, [(n/2)^2], n/2)$-PCDP. Its optimality follows immediately
 from (\ref{1.4}).   
 \end{proof}
 
\vskip 10pt
 \begin {proposition}
 \label {AdsCon-2} 
 Let $n \equiv 2\pmod{4}$.
 If there exists an $(n, n/2, (n-2)/4; (3n-2)/4)$-ADS in $\bbZ_n$, then there exists
 an optimal $(n, [(n/2)^2], (n+2)/2)$-PCDP.
 \end {proposition}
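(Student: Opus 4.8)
The plan is to mirror the proof of Proposition~\ref{AdsCon-1}, building the desired PCDP from the given ADS together with its complement; the only real departure will be in how optimality is argued. First I would set $\widehat{D}=\bbZ_n\setminus D$ and form $\D=\{D,\widehat{D}\}$. Since $|D|=|\widehat{D}|=n/2$, this is a partition of $\bbZ_n$ with block-size multiset $[(n/2)^2]$. Invoking the identity established in the proof of Proposition~\ref{Lem2k}, for every $g\in\bbZ_n^\star$ I would write
\begin{equation*}
\Phi_\D(g)=\Phi_{\{D\}}(g)+\Phi_{\{\widehat{D}\}}(g)=2\Phi_{\{D\}}(g).
\end{equation*}

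Next I would read off the values of $\Phi_\D$ from the ADS parameters. Because $n\equiv2\pmod4$, the quantities $\lambda=(n-2)/4$, $\lambda+1=(n+2)/4$, $t=(3n-2)/4$, and $n-1-t=(n-2)/4$ are all integers. By definition of the ADS, $\Phi_{\{D\}}(g)$ takes the value $(n-2)/4$ exactly $(3n-2)/4$ times and the value $(n+2)/4$ exactly $(n-2)/4$ times as $g$ ranges over $\bbZ_n^\star$. Doubling, $\Phi_\D(g)$ takes the value $(n-2)/2$ exactly $(3n-2)/4$ times and the value $(n+2)/2$ exactly $(n-2)/4$ times. Hence $\max_{g\in\bbZ_n^\star}\Phi_\D(g)=(n+2)/2$, so $\D$ is an $(n,[(n/2)^2],(n+2)/2)$-PCDP.

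The step I expect to require the most care is optimality, since this is precisely where the argument diverges from Proposition~\ref{AdsCon-1}. Writing $n=2\mu$ with $\mu=n/2$, the generic bound~(\ref{1.4}) only gives $\rho(n,2)\ge\mu=n/2$, which is strictly smaller than the index $(n+2)/2=\mu+1$ achieved by $\D$; so, unlike in the $n\equiv0\pmod4$ case, one cannot conclude optimality from~(\ref{1.4}) alone. To close this gap I would instead invoke Corollary~\ref{Lem2k-cor}: since $n\equiv2\pmod4$, that corollary shows the value $\mu$ is unattainable and sharpens the bound to $\rho(n,2)\ge\mu+1=(n+2)/2$. Matching this with the index of $\D$ shows $\rho(n,2)=(n+2)/2$, whence $\D$ is optimal.
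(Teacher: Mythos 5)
Your proposal is correct and follows essentially the same route as the paper: form $\D=\{D,\bbZ_n\setminus D\}$, double $\Phi_{\{D\}}$ as in Proposition~\ref{AdsCon-1} to get index $(n+2)/2$, and then invoke Corollary~\ref{Lem2k-cor} (rather than the generic bound~(\ref{1.4})) to certify optimality. Your version simply spells out the value-counting and the parity obstruction that the paper leaves implicit.
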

 
 \begin{proof}
 Employing the same technique as in the proof of Proposition \ref{AdsCon-1}, we
 form an $(n, [(n/2)^2], (n+2)/2)$-PCDP. The fact that its index attains the minimum
 follows from Corollary \ref{Lem2k-cor}.    
 \end{proof}
\vskip 10pt

 Almost difference sets in Abelian groups have been well studied in terms of
 sequences with optimal autocorrelation \cite{Lempeletal:1977,Arasuetal:2001} and are
 known to exist for certain parameters $n, k, \lambda$ and $t$.
 Before stating the known
 results on almost difference sets in $\bbZ_n$, some
 terminologies from finite fields are needed. Let $q$ be a prime
 power. The finite field of $q$ elements is denoted $\bbF_q$. Let ${\omega}$ be a
 primitive element of $\bbF_q$. For $e$ dividing $q-1$, define
 $D_i^{(e,q)} = {\omega}^{i}\langle\omega^{e}\rangle$,
 where $\langle\omega^{e}\rangle$ is the unique multiplicative subgroup of $\bbF_q$
 spanned by $\omega^{e}$. For $0 \le  h \neq r \le e-1$, define
 \begin{equation*}
    (h, r)_e = |(D_h^{(e,q)} + 1) \cap D_r^{(e,q)}|.
 \end{equation*}
 These constants $(h, r)_e$ are known as {\em cyclotomic numbers
 of order $e$}. The number $(h, r)_e$ is the number of solutions to the equation
$x + 1 = y$, where $x \in D_h^{(e,q)}$ and $y \in D_r^{(e,q)}$. The following results are known.

\vskip 10pt
 \begin {proposition}[Lempel {\em et al.} \cite{Lempeletal:1977}]
 \label{Ads-1}
 Let $q$ be an odd prime power and let $D = \log_{\omega}(D_1^{(2,q)} - 1)$. Then
 \begin {enumerate}[(i)]
 \item $D$ is a $(q - 1, (q-1)/2, (q-3)/4,(3q-5)/4)$-ADS in $\bbZ_{q-1}$, provided
             $q \equiv 3\pmod{4}$;
 \item $D$ is a $(q - 1, (q-1)/2, (q-5)/4,(q-1)/4)$-ADS in $\bbZ_{q-1}$ provided
             $q \equiv 1\pmod{4}$.
 \end {enumerate}
 \end {proposition}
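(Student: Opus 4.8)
The plan is to pass from the additive group $\bbZ_{q-1}$ to the multiplicative group of nonzero elements $\bbF_q^\star$ via the discrete logarithm, turning the difference function of $D$ into a cyclotomic counting problem. Write $N = D_1^{(2,q)}$ for the set of quadratic nonresidues and let $\chi$ denote the quadratic character of $\bbF_q$ (with $\chi(0)=0$). Since $1$ is a residue, $0\notin D_1^{(2,q)}-1$, so the logarithm is well defined and $|D|=(q-1)/2$; moreover $a\in D$ exactly when $\omega^a+1\in N$. Hence, for nonzero $g\in\bbZ_{q-1}$, putting $c=\omega^{-g}$ and $u=\omega^a$ gives
\begin{equation*}
\Phi_{\{D\}}(g) = |\{u\in\bbF_q^\star : u+1\in N \text{ and } cu+1\in N\}|.
\end{equation*}
As $g$ runs over $\bbZ_{q-1}^\star$, the element $c$ runs over $\bbF_q^\star\setminus\{1\}$, and $\chi(c)=\chi(\omega)^{-g}=(-1)^g$ because $\omega$ is a nonresidue. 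Thus the parity of $g$ controls one of the signs that will govern the value of the difference function.

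Next I would replace each membership condition by the indicator $\tfrac12(1-\chi(\cdot))$. After deleting the two points $u=-1$ and $u=-c^{-1}$ (where a factor vanishes) and noting $u=0$ is outside the range, this writes $\Phi_{\{D\}}(g)$ as one quarter of a constant term $q-3$, minus the two linear sums $\sum\chi(u+1)$ and $\sum\chi(cu+1)$, plus the quadratic sum $\sum\chi\big((u+1)(cu+1)\big)$. Each linear sum vanishes over all of $\bbF_q$ and contributes only its three deleted terms, while the quadratic sum is handled by the standard evaluation $\sum_u\chi(au^2+bu+d)=-\chi(a)$, valid when the leading coefficient and the discriminant are nonzero; here the discriminant is $(c-1)^2\neq0$ because $c\neq1$. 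Collecting terms and using $\chi(c^{-1})=\chi(c)$ should yield the closed form
\begin{equation*}
\Phi_{\{D\}}(g) = \tfrac14\big[(q-2) + \chi(c-1)\,(\chi(c)+\chi(-1)) - \chi(c)\big].
\end{equation*}

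Finally I would split according to $\chi(-1)=(-1)^{(q-1)/2}$ (the two cases of the proposition) and according to the parity of $g$ through $\chi(c)=(-1)^g$. Inspecting the resulting sign patterns, with $\chi(c-1)=\pm1$ the only remaining freedom, shows in each case that $\Phi_{\{D\}}(g)$ takes only the two values $\lambda$ and $\lambda+1$, where $\lambda=(q-3)/4$ when $q\equiv3\pmod4$ and $\lambda=(q-5)/4$ when $q\equiv1\pmod4$. Rather than count these patterns by a further cyclotomic-number evaluation, I would fix the multiplicity $t$ of the value $\lambda$ from the first moment
\begin{equation*}
\sum_{g\in\bbZ_{q-1}^\star}\Phi_{\{D\}}(g) = |D|(|D|-1) = \frac{(q-1)(q-3)}{4},
\end{equation*}
which, combined with the identity $\sum_g\Phi_{\{D\}}(g)=t\lambda+(q-2-t)(\lambda+1)$ valid because only two values occur, gives a single linear equation solving to $t=(3q-5)/4$ and $t=(q-1)/4$ respectively.

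I expect the character-sum bookkeeping of the middle step to be the main obstacle: one must correctly account for the deleted points $u\in\{0,-1,-c^{-1}\}$ in each of the three sums and keep the two global signs $\chi(-1)$ and $\chi(c)$ separate from the local sign $\chi(c-1)$ that varies with $g$. Once the closed form is established, the two-value property together with the moment identity makes the determination of the pair $(\lambda,t)$ essentially automatic.
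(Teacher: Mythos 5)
The paper itself contains no proof of this proposition: it is imported as a known result of Lempel, Cohn, and Eastman \cite{Lempeletal:1977}, so there is no internal argument to compare yours against; what you have produced is a self-contained proof of a cited black box, and it is correct. I checked the one delicate step. With $S=\bbF_q\setminus\{0,-1,-c^{-1}\}$ (three distinct points since $c\neq 0,1$), the three sums come out as $\sum_{u\in S}\chi(u+1)=-1-\chi(1-c^{-1})$, $\sum_{u\in S}\chi(cu+1)=-1-\chi(1-c)$, and $\sum_{u\in S}\chi\bigl((u+1)(cu+1)\bigr)=-\chi(c)-1$, the last by the standard evaluation with leading coefficient $c$ and discriminant $(c+1)^2-4c=(c-1)^2\neq 0$; using $\chi(1-c^{-1})=\chi(c)\chi(c-1)$ and $\chi(1-c)=\chi(-1)\chi(c-1)$, these assemble to exactly your closed form $\Phi_{\{D\}}(g)=\tfrac14\bigl[(q-2)+\chi(c-1)\,(\chi(c)+\chi(-1))-\chi(c)\bigr]$. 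The case analysis then confines $\Phi_{\{D\}}$ to the two consecutive values $\{(q-3)/4,(q+1)/4\}$ when $q\equiv 3\pmod 4$ and $\{(q-5)/4,(q-1)/4\}$ when $q\equiv 1\pmod 4$, and your first-moment equation $t\lambda+(q-2-t)(\lambda+1)=(q-1)(q-3)/4$ gives $t=(3q-5)/4$ and $t=(q-1)/4$ respectively, matching the stated parameters.

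Compared with the route in the cited source, which works directly with the cyclotomic numbers of order two $(h,r)_2$ (whose values for $q\equiv 1,3\pmod 4$ are classical), your argument is essentially the same computation in character-sum clothing: the evaluation $\sum_u\chi(au^2+bu+d)=-\chi(a)$ is precisely what underlies those cyclotomic numbers. What your version buys is self-containedness (no table of cyclotomic numbers is invoked) and, more notably, the moment trick: instead of counting, for each parity of $g$, how many $c$ satisfy $\chi(c-1)=\pm 1$ --- which would be a second appeal to cyclotomy --- you determine the multiplicity $t$ from one linear equation, which is legitimate precisely because the closed form has already restricted $\Phi_{\{D\}}$ to two values differing by one.
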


\vskip 10pt
 \begin {proposition}[Ding {\em et al.} \cite{Dingetal:2001}]
 \label{Ads-2}
 Let $p \equiv  5\pmod{8}$ be an odd prime. It is known that
 $p = s^2 + 4t^2$ for some $s$ and $t$ with $s \equiv \pm 1\pmod{4}$.
 Set $n = 2p$. Let $i, j, l \in \{0, 1, 2, 3\}$ be three pairwise
 distinct integers, and
 \begin{align*}
 D_{(i,j,l)} = & \left[\{0\} \oplus (D_i^{(4, p)}\cup D_j^{(4, p)})\right]  \bigcup \\
 & \left[\{1\} \oplus (D_l^{(4, p)}\cup D_j^{(4,p)})\right]  \bigcup \{(0, 0)\}.
 \end{align*}
 Then $D_{(i, j, l)}$ is an $(n, n/2, (n-2)/4,(3n-2)/4)$-ADS in $\bbZ_{2} \oplus \bbZ_p$, being
             isomorphic to $\bbZ_{2p}$ when
 \begin{enumerate}[(i)]
 \item $t = 1$ and  $(i,j, l) \in \{(0, 1, 3)$, $(0, 2, 3)$, $(1, 2, 0)$, $(1, 3, 0)\}$
 or
 \item $s = 1$ and  $(i, j, l) \in \{(0, 1, 2)$, $(0, 3, 2)$, $(1, 0, 3)$, $(1, 2, 3)\}$.
 \end{enumerate}
 \end{proposition}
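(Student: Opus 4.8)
The plan is to verify the defining autocorrelation condition for an ADS directly, by reducing $\Phi_{\{D\}}(g)=|(D+g)\cap D|$ to a sum of cyclotomic numbers of order $4$ and invoking their classical evaluations for $p\equiv 5\pmod 8$. First I would confirm the size: writing $S=D_i^{(4,p)}\cup D_j^{(4,p)}$ and $T=D_l^{(4,p)}\cup D_j^{(4,p)}$, each of $S,T$ has $(p-1)/2$ elements, so $|D_{(i,j,l)}|=(p-1)/2+(p-1)/2+1=p=n/2$, as required. The target is then to show that, as $g$ ranges over $(\bbZ_2\oplus\bbZ_p)^\star$, the function $\Phi_{\{D\}}(g)$ takes the value $\lambda=(n-2)/4=(p-1)/2$ exactly $(3n-2)/4=(3p-1)/2$ times and the value $\lambda+1=(p+1)/2$ on the remaining $(p-1)/2$ nonzero elements.

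Next I would split $g$ according to its $\bbZ_2$-component, using $D_{(i,j,l)}=(\{0\}\oplus S)\cup(\{1\}\oplus T)\cup\{(0,0)\}$ and the fact that a difference $a-b=g$ has first coordinate $0$ exactly when $a,b$ lie in the same layer. Collecting terms, for $g=(0,c)$ with $c\neq 0$ one obtains $\Phi_{\{D\}}((0,c))$ as the number of solutions of $u-v=c$ inside $S\cup\{0\}$ plus the number inside $T$; for $g=(1,c)$ one obtains the cross-layer solution counts between $T$ and $S\cup\{0\}$ in both directions. Each such count is, for $c$ lying in the cyclotomic class $D_m^{(4,p)}$, a sum of the four cyclotomic numbers $(b-m,a-m)_4$ with $a,b$ ranging over the relevant index pairs, plus a bounded correction (of size $0$, $1$, or $2$) from pairs involving the extra point $0$; here one uses that $-1\in D_2^{(4,p)}$ since $p\equiv 5\pmod 8$. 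The single exceptional element $g=(1,0)$ is immediate: $\Phi_{\{D\}}((1,0))=2|S\cap T|=2|D_j^{(4,p)}|=(p-1)/2=\lambda$, because $S\cap T=D_j^{(4,p)}$ when $i,j,l$ are distinct.

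With the counts expressed in cyclotomic numbers, I would substitute the classical evaluations of the order-four cyclotomic numbers for $p\equiv 5\pmod 8$ (there are five distinct values), written in terms of $s,t$ from $p=s^2+4t^2$; each has the shape $\tfrac1{16}(p+\alpha+\beta s+\gamma t)$ for small integers $\alpha,\beta,\gamma$, with the sign of $t$ tied to the choice of $\omega$ (a bookkeeping point that must be tracked). Summing the four numbers contributing to a given $\Phi_{\{D\}}(g)$ produces a main term near $p/2$ plus a residual that is linear in $s,t$ and depends only on the class $m$ of $c$ and on $(i,j,l)$. To keep the casework finite I would exploit the group automorphism $(\delta,x)\mapsto(\delta,\omega x)$ of $\bbZ_2\oplus\bbZ_p$, which sends $D_{(i,j,l)}$ to $D_{(i+1,j+1,l+1)}$ and hence preserves the ADS parameters; this identifies $(0,1,3)$ with $(1,2,0)$ and $(0,2,3)$ with $(1,3,0)$ in part (i), and pairs up the triples in part (ii), so only two representative triples need explicit checking in each part.

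The heart of the argument, and the step I expect to be the main obstacle, is the verification that exactly when $t=1$ (part (i)) or $s=1$ (part (ii)) the residual terms collapse so that every $\Phi_{\{D\}}(g)$ equals $\lambda$ or $\lambda+1$, with $\lambda+1$ occurring on precisely $(p-1)/2$ elements. This is where the listed triples are forced: for a generic triple the linear-in-$s,t$ residuals would spread $\Phi_{\{D\}}$ over more than two values, and it is the interplay of the shared class $D_j^{(4,p)}$ across the two layers with the corrections coming from $(0,0)$ that leaves only a $\{0,1\}$-valued residual. I would organize this as a short table of the five cyclotomic numbers evaluated at $t=1$ (resp.\ $s=1$), read off the four-term sums for $g=(0,c)$ and $g=(1,c)$ as $c$ runs over the four classes, and confirm the two-valued distribution and its multiplicities. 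A useful global consistency check is that $\sum_{g\neq 0}\Phi_{\{D\}}(g)=|D|(|D|-1)=p(p-1)$, and indeed $\lambda\cdot\tfrac{3p-1}{2}+(\lambda+1)\cdot\tfrac{p-1}{2}=p(p-1)$, which pins down the counts and guards against arithmetic slips in the cyclotomic bookkeeping.
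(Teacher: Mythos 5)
The paper itself contains no proof of this proposition: it is imported verbatim from Ding, Helleseth and Martinsen \cite{Dingetal:2001}, so there is no internal argument to compare yours against. Your outline is, in essence, the method of that original source: treat $D_{(i,j,l)}$ as an interleaved set in $\bbZ_2\oplus\bbZ_p$, split differences by their $\bbZ_2$-coordinate, and reduce everything to cyclotomic numbers of order $4$. The structural steps you do carry out are all correct: the size count $|D_{(i,j,l)}|=p=n/2$; the layer decomposition of $\Phi_{\{D\}}$ for $g=(0,c)$ and $g=(1,c)$; the exceptional value $\Phi_{\{D\}}((1,0))=2|D_j^{(4,p)}|=(p-1)/2=\lambda$; the fact that $-1\in D_2^{(4,p)}$ when $p\equiv 5\pmod 8$; the automorphism $(\delta,x)\mapsto(\delta,\omega x)$, which sends $D_{(i,j,l)}$ to $D_{(i+1,j+1,l+1)}$ and correctly pairs up the listed triples; and the counting identity $\lambda\cdot\frac{3p-1}{2}+(\lambda+1)\cdot\frac{p-1}{2}=p(p-1)$.

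Nevertheless, as a proof the proposal has a genuine gap: the decisive step --- writing out the five order-$4$ cyclotomic numbers for $p\equiv 5\pmod 8$ in terms of $s,t$, forming the four-term sums plus the $0/1/2$ corrections for each class of $c$ and each representative triple, and verifying that the result is two-valued with values $(p-1)/2,(p+1)/2$ and multiplicities $(3p-1)/2,(p-1)/2$ precisely when $t=1$ (part (i)) or $s=1$ (part (ii)) --- is announced but never performed. That computation is the entire content of the theorem: every step you actually execute (size, $g=(1,0)$, the symmetry reduction, the consistency check) is valid for \emph{any} pairwise distinct triple $(i,j,l)$ and any $s,t$, so nothing in the proposal can distinguish the eight listed triples from the ones for which the conclusion fails, nor explain why the conditions $t=1$ and $s=1$ enter at all. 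Relatedly, the sign ambiguity of $t$ (its dependence on the choice of primitive root $\omega$) is flagged as ``bookkeeping'' but not resolved; since the statement asserts the conclusion for $t=1$ rather than $t=\pm 1$, fixing that convention is not optional --- it is exactly what selects which triples appear in lists (i) and (ii). To complete the argument you would need to exhibit the cyclotomic table and the resulting values of $\Phi_{\{D\}}$ class by class, or else quote those evaluations from the literature, which is effectively what the paper does by citing \cite{Dingetal:2001}.
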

 \vskip 10pt

 Combining the results of Propositions \ref{AdsCon-1}--\ref{Ads-2} gives
 us new optimal PCDPs  as follows.

\vskip 10pt
 \begin {theorem}
 \label{PDHPs-1}
There exist
 \begin {enumerate}[(i)]
 \item an optimal $(q-1, [(q-1/2)^2], (q-1)/2)$-PCDP for any prime
 power  $q \equiv 1\pmod{4}$; and
 \item an optimal $(n, [(n/2)^2], (n+2)/2)$-PCDP if $n = 2p$ or
 $n = q - 1$ where $q \equiv 3\pmod{4}$ is a prime power and
 $p \equiv 5\pmod{8}$ is a prime.
 \end {enumerate}
 \end{theorem}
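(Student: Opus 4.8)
The plan is to derive Theorem \ref{PDHPs-1} as a straightforward composition of the two ADS-to-PCDP construction lemmas (Propositions \ref{AdsCon-1} and \ref{AdsCon-2}) with the known ADS existence results (Propositions \ref{Ads-1} and \ref{Ads-2}). In each case the only real work is to verify that the ADS supplied by the existence result has precisely the parameters demanded as input by the relevant construction, and that the congruence of $n$ modulo $4$ selects the correct construction. Optimality in every case is inherited directly from the conclusions of Propositions \ref{AdsCon-1} and \ref{AdsCon-2}, which already assert it.

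For part (i), I would set $n = q-1$. Since $q \equiv 1 \pmod{4}$ we have $n \equiv 0 \pmod{4}$, so Proposition \ref{AdsCon-1} applies and requires an $(n, n/2, (n-4)/4; n/4)$-ADS in $\bbZ_n$. Proposition \ref{Ads-1}(ii) furnishes a $(q-1, (q-1)/2, (q-5)/4; (q-1)/4)$-ADS; the substitution $n=q-1$ gives $(q-5)/4 = (n-4)/4$ and $(q-1)/4 = n/4$, so the parameters coincide exactly. Feeding this ADS into Proposition \ref{AdsCon-1} produces the optimal $(q-1, [((q-1)/2)^2], (q-1)/2)$-PCDP, as the output index $n/2 = (q-1)/2$ confirms.

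For part (ii) I would split into the two stated cases. When $n=q-1$ with $q \equiv 3 \pmod{4}$, we have $n \equiv 2 \pmod{4}$, so Proposition \ref{AdsCon-2} is the applicable construction and requires an $(n, n/2, (n-2)/4; (3n-2)/4)$-ADS. Proposition \ref{Ads-1}(i) supplies a $(q-1, (q-1)/2, (q-3)/4; (3q-5)/4)$-ADS, and under $n=q-1$ the identities $(q-3)/4 = (n-2)/4$ and $(3q-5)/4 = (3n-2)/4$ hold, giving an exact match and hence, via Proposition \ref{AdsCon-2}, the optimal $(n, [(n/2)^2], (n+2)/2)$-PCDP. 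When $n = 2p$ with $p \equiv 5 \pmod{8}$ prime, again $n \equiv 2 \pmod{4}$, and Proposition \ref{Ads-2} already delivers an $(n, n/2, (n-2)/4; (3n-2)/4)$-ADS in $\bbZ_2 \oplus \bbZ_p$; using that $\gcd(2,p)=1$ identifies $\bbZ_2 \oplus \bbZ_p$ with $\bbZ_{2p}$, so this is an ADS in $\bbZ_n$ with exactly the parameters Proposition \ref{AdsCon-2} demands, and the construction yields the optimal $(2p, [p^2], p+1)$-PCDP.

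There is no serious obstacle here: the result is essentially a dictionary lookup, and the entire content is the arithmetic bookkeeping that the congruence classes align ($q \equiv 1 \pmod 4$ forcing $n \equiv 0$, and $q \equiv 3 \pmod 4$ or $n = 2p$ forcing $n \equiv 2$) and that the ADS parameters match termwise. The one point worth stating explicitly, rather than glossing, is the identification $\bbZ_2 \oplus \bbZ_p \cong \bbZ_{2p}$ in the final case, since Proposition \ref{Ads-2} constructs its ADS in the product group but Proposition \ref{AdsCon-2} is phrased over $\bbZ_n$; this is immediate from the Chinese Remainder Theorem for coprime orders.
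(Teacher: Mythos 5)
Your proposal is correct and takes essentially the same approach as the paper: the paper's entire proof is the remark that Theorem \ref{PDHPs-1} follows by ``combining the results of Propositions \ref{AdsCon-1}--\ref{Ads-2},'' which is precisely the composition you carry out. Your explicit parameter bookkeeping (checking $(q-5)/4=(n-4)/4$, $(3q-5)/4=(3n-2)/4$, etc., and noting the identification $\bbZ_2\oplus\bbZ_p\cong\bbZ_{2p}$) simply makes explicit the verification the paper leaves to the reader.
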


 \section{Constructions from Cyclic Difference Matrices}
 \label{CDM}

     Consider a $k \times n$ matrix $\vM = (m_{ij})$, $i\in[k]$, $j\in[n]$,
whose entries are taken from an additive group $G$ of order $n$.   If for any two
 distinct row indices $r,s\in[k]$, the
 differences
$m_{rj} - m_{sj}$, $j\in[n]$, comprise all the elements of
$G$,
 then the matrix $\vM$ is said to be an $(n, k, 1)$ {\em difference matrix} (DM), or
 an $(n, k, 1)$-DM over $G$. 
 
 Our constructions require a difference
 matrix over the cyclic group of order $n$,
 that is $G = \bbZ_n$. In this case, the difference matrix is called {\em cyclic} and is denoted
 by $(n,k,1)$-CDM. A cyclic DM is {\em normalized} if all entries in its first row and
 first column are zero. The property of a cyclic DM
 is preserved even if we add an element of $\bbZ_n$ to all entries in
 any row or column of the matrix. Hence, without loss of generality,
 one can always assume that a cyclic DM is normalized. If we delete the
 first row from a normalized $(n,k,1)$-CDM, then we obtain a {\em derived} $(n,k-1,1)$-CDM,
 each of whose rows forms a permutation on $\bbZ_n$. 
 We adopt the
 terminology used in
 \cite{Fuji-Haraetal:2004} and call the derived $(n,k-1,1)$-CDM {\em homogeneous}. In a homogeneous
 cyclic DM, every row forms a permutation on the elements of $\bbZ_n$ and the
 entries in the first column are all zero. From the point of view of
 existence, a $(n,k,1)$-CDM is obviously equivalent to a homogeneous
 $(n,k-1,1)$-CDM, and we use the terms $(n,k,1)$-CDM and
homogeneous $(n,k-1,1)$-CDM interchangeably.

 Difference matrices have attracted considerable attention in
 design theory, since they can often be used as building blocks for
 other combinatorial objects.
The multiplication table of
 $\bbF_q$ constitutes a normalized $(q, q, 1)$-DM.
 When $q$ is a prime, it is a normalized $(q, q, 1)$-CDM. Hence, a  homogeneous
 $(q, q-1, 1)$-CDM exists for any prime $q$. Deleting $q - 1-k$ rows
 from this cyclic DM
produces a homogeneous $(q, k,1)$-CDM. We record this fact below.

\vskip 10pt
 \begin {proposition}
 \label {CDM-01} 
 Let $p$ be a prime and $k$ an integer satisfying $2\le k \le p-1$. Then there exists
 a  homogeneous $(p, k, 1)$-CDM.
 \end {proposition}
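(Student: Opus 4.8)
The plan is to realize the array explicitly from the multiplicative structure of $\bbZ_p$, following the remark recorded just before the statement: the multiplication table of $\bbF_p$ is a normalized $(p,p,1)$-CDM. Since a homogeneous $(p,k,1)$-CDM is exactly such a matrix with its all-zero first row deleted, it suffices to produce $k$ rows, each a permutation of $\bbZ_p$ carrying a $0$ in the first column, so that any two of them differ coordinatewise across all of $\bbZ_p$.

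Concretely, I would index the columns by $j\in\bbZ_p$ (listing $j=0$ first) and define $\vM=(m_{ij})$ by $m_{ij}=ij\bmod p$ for rows $i=1,2,\ldots,k$. The verification then splits into three checks, and this is where the hypotheses enter.
\begin{enumerate}[(i)]
\item Each row is a permutation of $\bbZ_p$: since $p$ is prime and $1\le i\le k\le p-1$, each multiplier $i$ is a unit, so $j\mapsto ij$ is a bijection of $\bbZ_p$.
\item The first column vanishes, as $m_{i0}=i\cdot 0=0$ for every $i$.
\item For distinct rows $i\ne i'$ one has $m_{ij}-m_{i'j}=(i-i')j$; because $i-i'$ is again a unit, these values run through all of $\bbZ_p$ exactly once as $j$ ranges over $\bbZ_p$.
\end{enumerate}
Properties (i)--(iii) are precisely the defining conditions of a homogeneous $(p,k,1)$-CDM, so $\vM$ is the desired object.

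The only role of primality is to guarantee that every nonzero residue is invertible, which is exactly what (i) and (iii) require; the bound $k\le p-1$ is forced by the need to draw the $k$ row-multipliers from among the $p-1$ nonzero elements of $\bbZ_p$, while $k\ge 2$ simply ensures there are two rows for (iii) to constrain. I therefore expect no genuine obstacle. Equivalently, and perhaps more in the spirit of the surrounding text, one may invoke the normalized $(p,p,1)$-CDM given by the full multiplication table and then delete its first row together with any $p-1-k$ of the remaining rows; the surviving rows inherit (i)--(iii) verbatim.
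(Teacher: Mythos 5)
Your proposal is correct and takes essentially the same route as the paper: the paper records this proposition as an immediate consequence of the multiplication table of $\bbF_p$ being a normalized $(p,p,1)$-CDM, from which the all-zero first row and any $p-1-k$ further rows are deleted. Your explicit matrix $m_{ij}=ij \bmod p$ is exactly that construction, with the verification of the permutation, normalization, and difference properties spelled out.
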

 \vskip 10pt

 The following product construction for cyclic DMs  is known
 (see, for example, \cite{Ge:2005,Yin:2005}).

\vskip 10pt
 \begin {proposition}
 \label {CDM-02} 
 If a  homogeneous $(n_1, k, 1$)-CDM and a  homogeneous $(n_2, k, 1$)-CDM both exist,
 then so does a  homogeneous $(n_1n_2, k, 1$)-CDM.
 \end {proposition}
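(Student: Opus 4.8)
The plan is to give an explicit \emph{mixed-radix} construction. Write the homogeneous $(n_1,k,1)$-CDM as $A=(a_{ij})$ with rows $i\in[k]$, columns $j\in[n_1]$, and entries in $\bbZ_{n_1}$; similarly write the homogeneous $(n_2,k,1)$-CDM as $B=(b_{ij})$ over $\bbZ_{n_2}$. Recall that homogeneity means every row of $A$ (resp.\ $B$) is a permutation of $\bbZ_{n_1}$ (resp.\ $\bbZ_{n_2}$) and the first column of each is identically zero. I would index the $n_1 n_2$ columns of the target matrix $C$ by pairs $(j_1,j_2)\in[n_1]\times[n_2]$ and define its entries in $\bbZ_{n_1 n_2}$ by
\[
c_{i,(j_1,j_2)} = a_{ij_1} + n_1\, b_{ij_2},
\]
where each $a_{ij_1}$ and $b_{ij_2}$ is taken as its integer representative in $\{0,\dots,n_1-1\}$ and $\{0,\dots,n_2-1\}$ respectively, the result being read modulo $n_1 n_2$, and I would place the column $(1,1)$ first.

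I would verify homogeneity first, as it is routine. The map $(a,b)\mapsto a+n_1 b$ is the standard mixed-radix bijection from $\{0,\dots,n_1-1\}\times\{0,\dots,n_2-1\}$ onto $\{0,\dots,n_1 n_2-1\}$; composing it with the row permutations $j_1\mapsto a_{ij_1}$ and $j_2\mapsto b_{ij_2}$ shows that each row of $C$ is a permutation of $\bbZ_{n_1n_2}$. Since $a_{i1}=b_{i1}=0$ for all $i$, the column $(1,1)$ is identically zero, so $C$ is homogeneous as required.

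The heart of the argument---and the step I expect to require the most care---is the difference property, because $C$ must be cyclic over $\bbZ_{n_1 n_2}$ rather than over the direct product $\bbZ_{n_1}\times\bbZ_{n_2}$, so I cannot simply invoke coprimality. Fix two distinct rows $r,s$. Working in the ring $\bbZ_{n_1 n_2}$, their column-wise difference is
\[
c_{r,(j_1,j_2)}-c_{s,(j_1,j_2)} = (a_{rj_1}-a_{sj_1}) + n_1(b_{rj_2}-b_{sj_2}).
\]
I would show this is a bijection in $(j_1,j_2)$ by a fibered counting argument keyed to the reduction homomorphism $\bbZ_{n_1 n_2}\to\bbZ_{n_1}$. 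Given a target $z\in\bbZ_{n_1 n_2}$, reducing the displayed difference modulo $n_1$ kills the $n_1(b_{rj_2}-b_{sj_2})$ term and leaves $a_{rj_1}-a_{sj_1}\pmod{n_1}$; by the difference-matrix property of $A$ there is a unique column $j_1$ for which this equals $z\bmod n_1$. For that $j_1$ the residual $z-(a_{rj_1}-a_{sj_1})$ lies in the kernel $n_1\bbZ_{n_1 n_2}$, so it equals $n_1 w$ for a unique $w\in\bbZ_{n_2}$ (multiplication by $n_1$ being a bijection from $\bbZ_{n_2}$ onto the multiples of $n_1$ in $\bbZ_{n_1 n_2}$); the difference-matrix property of $B$ then supplies a unique $j_2$ with $b_{rj_2}-b_{sj_2}\equiv w\pmod{n_2}$. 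Thus each $z$ is attained by exactly one pair $(j_1,j_2)$, so the $n_1 n_2$ differences run through all of $\bbZ_{n_1 n_2}$ once each. As $r,s$ were arbitrary, $C$ is a homogeneous $(n_1 n_2,k,1)$-CDM. The only delicate points are the consistent use of integer lifts in the definition of $C$ and the bijectivity of multiplication by $n_1$, both of which are handled cleanly by the fibering above.
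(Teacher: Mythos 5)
Your proof is correct. Note, however, that the paper itself offers no proof of Proposition~\ref{CDM-02}: it is quoted as a known result with references to Ge and to Yin, so there is no in-paper argument to compare yours against. What you have written is essentially the standard product (extension) construction from that literature, worked out in full: the entry $a_{ij_1}+n_1 b_{ij_2}$ realizes $\bbZ_{n_1 n_2}$ as an extension of $\bbZ_{n_1}$ by $\bbZ_{n_2}$, and your fibered counting via the reduction map $\bbZ_{n_1 n_2}\to\bbZ_{n_1}$ is exactly the right way to handle the fact that $\bbZ_{n_1}\times\bbZ_{n_2}$ need not be cyclic, which is the one genuine obstacle here. Two small observations. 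First, your argument actually proves slightly more than stated: the difference property of the product matrix uses only the difference-matrix property of $A$ and $B$, not their homogeneity; homogeneity enters only to conclude that the rows of $C$ are permutations and that the column indexed $(1,1)$ is zero. So the same construction multiplies arbitrary $(n_1,k,1)$-CDMs, with homogeneity simply carried along. Second, a cosmetic point: once each row of $C$ is known to be a permutation, the ``first column zero'' normalization is automatic up to relabeling (adjoin an all-zero row to recover a normalized $(n_1n_2,k+1,1)$-CDM), so placing the column $(1,1)$ first is a convenience rather than a necessity. Neither observation affects correctness; your proof is complete and self-contained.
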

 \vskip 10pt

Proposition \ref {CDM-01} and Proposition \ref {CDM-02} now gives the
 following existence result.

\vskip 10pt
\begin {proposition}
\label {CDM-1}
 Let $n \geq 3$ be an integer whose prime factors are at least the prime $p$.
 Then for any integer $k$ satisfying  $2\le k \le p-1$, a  homogeneous
 $(n, k, 1$)-CDM exists.
 \end {proposition}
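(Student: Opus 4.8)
The plan is to decompose $n$ into its prime factors and build the desired cyclic difference matrix one factor at a time, using Proposition~\ref{CDM-01} to supply a matrix for each prime and Proposition~\ref{CDM-02} to combine them.

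First I would write the prime factorization of $n$, counting multiplicities, as $n = q_1 q_2 \cdots q_t$, where each $q_j$ is prime. Since $n \geq 3$, we have $t \geq 1$, and by hypothesis every prime factor of $n$ satisfies $q_j \geq p$. The crucial observation is that the admissible range $2 \leq k \leq p-1$ is contained in the range $2 \leq k \leq q_j - 1$ for every $j$, because $k \leq p - 1 \leq q_j - 1$. Hence, for each $j \in [t]$, Proposition~\ref{CDM-01} applied with the prime $q_j$ guarantees the existence of a homogeneous $(q_j, k, 1)$-CDM.

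Next I would assemble these pieces by induction on $t$. The base case $t = 1$ is exactly the statement just obtained from Proposition~\ref{CDM-01}. For the inductive step, suppose a homogeneous $(q_1 q_2 \cdots q_{t-1}, k, 1)$-CDM exists. Combining it with the homogeneous $(q_t, k, 1)$-CDM via Proposition~\ref{CDM-02} yields a homogeneous $(q_1 q_2 \cdots q_t, k, 1)$-CDM, that is, a homogeneous $(n, k, 1)$-CDM, completing the induction.

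There is no genuine obstacle in this argument; it is a straightforward product construction. The only point requiring care is the uniformity of $k$ across all factors: Proposition~\ref{CDM-02} can merge two matrices only when they share the same value of $k$, so one must verify that the single value of $k$ permitted by the hypothesis is admissible for every prime factor simultaneously. This is precisely where the assumption that all prime factors are at least $p$ enters, ensuring that $k \leq p - 1 \leq q_j - 1$ holds uniformly for all $j$.
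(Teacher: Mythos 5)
Your proof is correct and follows exactly the route the paper intends: factor $n$ into primes, invoke Proposition~\ref{CDM-01} for each prime factor (using $k \le p-1 \le q_j - 1$), and combine the pieces with the product construction of Proposition~\ref{CDM-02}. The paper gives no written proof beyond citing these two propositions, and your argument is precisely the straightforward induction they have in mind.
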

 \vskip 10pt

 We also need the following result.

\vskip 10pt
 \begin{proposition}[Ge \cite{Ge:2005}]
 \label {CDM-2}
 Let $n \geq 5$ be an odd integer with $\gcd(n, 27)\not=9$. Then there exists a  homogeneous
 $(n, 3, 1)$-CDM.
 \end {proposition}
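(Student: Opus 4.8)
The plan is to reduce the problem to a short list of ``atomic'' orders via the product construction and then dispatch those atoms directly. Recall from the paragraph preceding Proposition~\ref{CDM-01} that a homogeneous $(n,3,1)$-CDM is the same object as an $(n,4,1)$-CDM; concretely it amounts to three permutations $f_1,f_2,f_3$ of $\bbZ_n$, each fixing $0$, whose three pairwise difference maps $f_r-f_s$ are again permutations of $\bbZ_n$. Writing $n=3^a m$ with $\gcd(m,6)=1$, every prime factor of $m$ is at least $5$, so Proposition~\ref{CDM-1} already furnishes a homogeneous $(m,3,1)$-CDM whenever $m>1$. By Proposition~\ref{CDM-02} it therefore suffices to realize the factor $3^a$ and multiply, and the whole difficulty is concentrated there, where $a\neq 2$ by hypothesis.

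The reason $3\mid n$ is genuinely delicate is twofold. First, the cheap linear construction collapses: if the rows were $j\mapsto c_ij$, the difference maps would be $j\mapsto(c_r-c_s)j$, which are permutations only when each $c_r-c_s$ is a unit of $\bbZ_n$, and reducing modulo $3$ this demands three residues of $\bbZ_3$ that are pairwise distinct and nonzero, which is impossible. Hence for $3\mid n$ the permutations must be built non-linearly. Second, there is a hard bookkeeping constraint: only two permutations of $\bbZ_3$ fix $0$, so no homogeneous $(3,3,1)$-CDM exists and one may never split off a bare factor $3$; and, as is classical, no homogeneous $(9,3,1)$-CDM exists either, so one may never split off a factor $9$. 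This is precisely what forces $\gcd(n,27)\neq 9$, i.e.\ $a\neq 2$.

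To respect these constraints I would prepare two families of direct constructions. First, for each prime $p\geq 5$, a homogeneous $(3p,3,1)$-CDM on $\bbZ_3\oplus\bbZ_p\cong\bbZ_{3p}$: the $\bbZ_p$-coordinate can carry a linear $(p,4,1)$-structure (three multipliers with unit pairwise differences, available since $p\geq 5$), leaving just enough freedom in the $\bbZ_3$-coordinate to repair all three differences simultaneously by a non-linear choice. Second, an explicit homogeneous $(27,3,1)$-CDM on $\bbZ_{27}$ (a starter/orthomorphism triple), together with a tripling recursion that produces a homogeneous $(3n,3,1)$-CDM from one of order $n$ whenever $9\mid n$, so as to climb $27\to 81\to 243\to\cdots$ and reach every $3^a$ with $a\geq 3$.

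These atoms then reassemble by Proposition~\ref{CDM-02}. For $a=0$ use Proposition~\ref{CDM-1} directly; for $a=1$, writing $n=3p_1^{e_1}\cdots p_r^{e_r}$, bundle one prime as $(3p_1)\cdot p_1^{\,e_1-1}\cdot p_2^{e_2}\cdots p_r^{e_r}$, treating $(3p_1)$ with the first atom and every other factor with Proposition~\ref{CDM-1}; for $a\geq 3$ take the $(3^a,3,1)$ atom times the $(m,3,1)$-CDM. In each case the $3$-part is handled without ever isolating a factor $3$ or $9$, so all admissible $n$ are covered. The main obstacle is exactly the $3$-part: establishing the two direct constructions (the $\bbZ_{3p}$ correction and the $\bbZ_{27}$ base) and verifying that the tripling recursion preserves the difference-matrix property; the coprime-to-$3$ factors and the product bookkeeping are routine.
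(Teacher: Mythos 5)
The paper does not prove this proposition at all: it is imported verbatim from Ge \cite{Ge:2005}, so the benchmark is Ge's original proof, whose entire content consists precisely of the explicit constructions that your outline postpones. Your reduction skeleton is sound: writing $n=3^a m$ with $\gcd(m,6)=1$, Proposition~\ref{CDM-1} handles $m$, Proposition~\ref{CDM-02} (which, thanks to homogeneity, needs no coprimality assumption) reassembles the pieces, and your explanation of why $a=2$ is excluded (no homogeneous $(3,3,1)$- or $(9,3,1)$-CDM exists, so a bare factor $3$ or $9$ can never be split off) is correct. But a proof must actually produce the three ``atoms'' you invoke, and you produce none of them. The homogeneous $(3p,3,1)$-CDM for primes $p\geq 5$ is asserted with the phrase ``leaving just enough freedom in the $\bbZ_3$-coordinate to repair all three differences simultaneously by a non-linear choice''; that is a hope, not a construction, and it is exactly the hard step --- the analogous objects in the literature (e.g.\ in \cite{Ge:2005}, \cite{ChangMiao:2003}) come from nontrivial direct or computer-assisted constructions. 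The homogeneous $(27,3,1)$-CDM is likewise only announced, not exhibited.

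The most serious gap is the ``tripling recursion'' producing a homogeneous $(3n,3,1)$-CDM from one of order $n$ whenever $9\mid n$. This cannot be an instance of Proposition~\ref{CDM-02}, since that would require the nonexistent homogeneous $(3,3,1)$-CDM; it is a genuinely new recursive construction, for which you give neither a proof nor a reference, and I know of no such result in the literature. So as written, your argument reduces the proposition to three unproved existence claims, at least one of which may not be true in the stated form. A repair that stays within your framework: exhibit (or cite) homogeneous CDMs of orders $27$, $81$ and $243$; since every integer $a\geq 3$ is a non-negative integer combination of $3$, $4$ and $5$, the non-coprime product of Proposition~\ref{CDM-02} then yields a homogeneous $(3^a,3,1)$-CDM for all $a\geq 3$, eliminating the tripling recursion entirely. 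That, together with explicit $(3p,3,1)$-CDMs for primes $p\geq 5$, would make the outline into a proof; without these ingredients it remains a plan whose difficulty is concentrated exactly where you stopped.
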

\vskip 10pt

Now we develop our constructions to obtain optimal PCDPs from cyclic
DMs.

\vskip 10pt
 \begin {theorem}
 \label{Con-1} 
 Let $m$ and $\mu$ be two positive integers. If a homogeneous
 $(m, \mu, 1$)-CDM exists, then  so does an  optimal $(m\mu, [\mu^m], \mu)$-PCDP.
 \end {theorem}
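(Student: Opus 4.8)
The plan is to construct the $m$ base blocks directly from the columns of the difference matrix, choosing an embedding of the index set $[\mu]\times\bbZ_m$ into $\bbZ_{m\mu}$ so that the ``row'' contribution and the ``matrix-entry'' contribution to a difference occupy complementary positions modulo $\mu$. Let $\vM=(m_{ij})$, $i\in[\mu]$, $j\in[m]$, be the given homogeneous $(m,\mu,1)$-CDM over $\bbZ_m$, so that every row is a permutation of $\bbZ_m$ and, for distinct rows $r,s$, the differences $m_{rj}-m_{sj}$ run over all of $\bbZ_m$ as $j$ ranges over $[m]$. Regarding each entry as an integer in $\{0,\dots,m-1\}$, I would set
\[
D_j=\{(i-1)+\mu\,m_{ij}:i\in[\mu]\}\subseteq\bbZ_{m\mu},\qquad j\in[m],
\]
and take $\D=\{D_1,\dots,D_m\}$.

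First I would verify that $\D$ partitions $\bbZ_{m\mu}$ into $m$ blocks of size $\mu$. Since $(i-1)+\mu\,m_{ij}\equiv i-1\pmod{\mu}$, the $\mu$ elements of a fixed $D_j$ are pairwise distinct, so $|D_j|=\mu$. Moreover, for each fixed $i$ the value $m_{ij}$ runs over all of $\bbZ_m$ as $j$ varies (row $i$ is a permutation), so the elements congruent to $i-1\pmod{\mu}$ are each produced exactly once; thus $(i,j)\mapsto (i-1)+\mu\,m_{ij}$ is a bijection onto $\bbZ_{m\mu}$ and $\D$ is a partition.

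The core step is to count, for each $g\in\bbZ_{m\mu}^\star$, the number of times $g$ occurs in $\Delta\D$ (equivalently, to evaluate $\Phi_\D(g)$). A generic difference inside $D_j$ has the form $(i-i')+\mu(m_{ij}-m_{i'j})$. Reducing modulo $\mu$ gives $i-i'$, which is nonzero whenever $i\neq i'$ because $|i-i'|<\mu$; hence no nonzero multiple of $\mu$ ever appears in $\Delta\D$, and these elements are covered $0$ times. For $g\not\equiv0\pmod{\mu}$, write $u=g\bmod\mu\in\{1,\dots,\mu-1\}$. A contributing triple $(j,i,i')$ must have $i-i'\equiv u\pmod{\mu}$, and a short count shows there are exactly $\mu$ such ordered pairs $(i,i')$. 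For each of them the full congruence modulo $m\mu$, after cancelling the factor $\mu$, prescribes a single value of $m_{ij}-m_{i'j}$ in $\bbZ_m$; the difference-matrix property applied to the distinct rows $i,i'$ then furnishes exactly one column $j$ realizing it. Consequently every such $g$ is covered exactly $\mu$ times, so $\max_{g\in\bbZ_{m\mu}^\star}\Phi_\D(g)=\mu$ and $\D$ is an $(m\mu,[\mu^m],\mu)$-PCDP; optimality follows at once from the bound $\rho(m\mu,m)\ge\mu$ in (\ref{1.4}).

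The one place demanding care---and the main obstacle---is the interplay between reduction modulo $\mu$ and modulo $m\mu$. The embedding is engineered precisely so that the $\mu$-fold scaling of the matrix entries decouples the row part (which controls the residue modulo $\mu$) from the column part (always a multiple of $\mu$); it is this decoupling, combined with the difference-matrix property selecting a unique column for each admissible row pair, that forces the uniform coverage value $\mu$ rather than merely an upper bound. I would be careful to invoke the permutation property of the rows for the partition claim and the difference-matrix property for the counting; notably, the normalization that the first column is zero is not needed anywhere.
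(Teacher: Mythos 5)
Your proposal is correct and follows essentially the same route as the paper: the identical embedding $D_j=\{(i-1)+\mu\,m_{ij}:i\in[\mu]\}$ of the columns into $\bbZ_{m\mu}$, the same partition argument via the row-permutation property, and the same conclusion that $\Phi_\D$ vanishes on the subgroup $\mu\bbZ_m$ and equals $\mu$ elsewhere, with optimality from (\ref{1.4}). The only difference is that you spell out the difference-counting (the $\mu$ admissible ordered row pairs, each fixing a unique column) which the paper asserts without detail.
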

 
 \begin{proof}
   Let  $\vM = (m_{ij})$, $i\in[\mu]$, $j\in[m]$, be
 a homogeneous
 $(m, \mu, 1$)-CDM over $\bbZ_{m}.$
 From $\vM$ we construct another $\mu \times m$ matrix $\vR$ whose
 entries are taken from
$\bbZ_{m\mu}$ by replacing every entry $m_{ij}$ of $\vM$ with
 $i-1 + m_{ij}\mu$, $i\in[\mu]$, $j\in[m]$. Write $D_j$ for the $\mu$-subset of $\bbZ_{m\mu}$
consisting of the elements on the $j$-th column of $\vR$, $j\in[m]$. Write
\begin{equation*}
\D= \{D_{1}, D_{2}, \ldots, D_{m}\}.
\end{equation*}
 Then the properties of a homogeneous
 $(m, \mu, 1$)-CDM guarantee the following conclusions:
 \begin{itemize}
 \item $\D$ partitions $\bbZ_{m\mu}$.
 \item Let $H = \mu \bbZ_m \:= \{j\mu: 0\leq j\leq m-1\}$
 be the unique additive subgroup of order $m$ in $\bbZ_{m\mu}$.
 Then, for any nonzero element $g \in \bbZ_{m\mu}$, we have
 \begin{equation*}
 \Phi_\D(g) = \begin{cases}
        0, & \text{if $g\in H$} \\
        \mu, & \text{otherwise}.
\end{cases}
\end{equation*}
\end{itemize}
Therefore, $\D$ is an $(m\mu, [\mu^m], \mu)$-PCDP, and it is
optimal, since its index meets the bound in (\ref{1.4}).  
\end{proof}
\vskip 10pt

Applying Theorem \ref{Con-1} and Proposition \ref{CDM-1}, we obtain the
following new infinite family of optimal PCDPs.

\vskip 10pt
\begin {theorem}
\label {ConA-1} 
 Let $m \geq 3$ be an integer whose prime factors are not less than prime
 $p$.
 Then for any integer $\mu$ satisfying  $2\le \mu \le p-1$,
 an optimal $(m\mu, [\mu^m], \mu)$-PCDP exists.
\end {theorem}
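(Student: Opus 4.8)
The plan is to obtain this theorem as an immediate corollary by chaining Proposition~\ref{CDM-1} into Theorem~\ref{Con-1}; no new combinatorial construction is required. First I would observe that the hypothesis on $m$ here is precisely the hypothesis on $n$ in Proposition~\ref{CDM-1}: namely $m\geq 3$ with every prime factor of $m$ at least $p$. Setting $n=m$ and $k=\mu$ in that proposition, and using the given range $2\le\mu\le p-1$ (which matches the allowed range $2\le k\le p-1$), I conclude that a homogeneous $(m,\mu,1)$-CDM exists.

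Second, with this CDM in hand, I would invoke Theorem~\ref{Con-1} verbatim. Its hypothesis asks exactly for a homogeneous $(m,\mu,1)$-CDM, and its conclusion produces an optimal $(m\mu,[\mu^m],\mu)$-PCDP. Since optimality---the index meeting the lower bound~(\ref{1.4})---is already part of the statement of Theorem~\ref{Con-1}, nothing further need be verified on that front.

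The only genuine content of the argument is the bookkeeping step of confirming that the parameter ranges of the two cited results align, and there is no real obstacle here: the substantive work was already done in establishing homogeneous-CDM existence (Proposition~\ref{CDM-1}, itself assembled from the prime case in Proposition~\ref{CDM-01} and the product rule in Proposition~\ref{CDM-02}) and the CDM-to-PCDP construction (Theorem~\ref{Con-1}). Thus the proof collapses to a single line: by Proposition~\ref{CDM-1} a homogeneous $(m,\mu,1)$-CDM exists, whence Theorem~\ref{Con-1} yields the desired optimal $(m\mu,[\mu^m],\mu)$-PCDP.
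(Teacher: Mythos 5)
Your proof is correct and is exactly the paper's own argument: the paper derives this theorem by "applying Theorem~\ref{Con-1} and Proposition~\ref{CDM-1}," which is precisely your chain of Proposition~\ref{CDM-1} (existence of a homogeneous $(m,\mu,1)$-CDM) followed by Theorem~\ref{Con-1} (CDM to optimal PCDP). Nothing further is required.
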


\vskip 10pt
\begin{example}
\label {EX-1} 
 In Theorem \ref{Con-1}, take $m = 7$, $\mu = 3$, and consider the homogeneous
 $(m, \mu, 1$)-CDM
 \begin{equation*}
 \vM =
    \begin{bmatrix}
        0&1&2&3&4&5&6\\
        0&2&4&6&1&3&5\\
        0&3&6&2&5&1&4\\
\end{bmatrix}.
\end{equation*}
Replace each entry $m_{ij}$ of $\vM$ with
 $i-1 + 3m_{ij}$, $i\in[3]$, $j\in[7]$, to obtain the
$3 \times 7$ matrix over $\bbZ_{21}$
 \begin{equation*}
 \vR =
    \begin{bmatrix}
        0&3 &6  &9 &12&15&18\\
        1&7 &13 &19&4 &10&16\\
        2&11&20 &8 &17&5 &14\\
\end{bmatrix}.
\end{equation*}
 Finally, take the columns of $\vR$ as base blocks over $\bbZ_{21}$:
 \begin{equation*}
  \begin{array}{llllllllllllllllll}
        D_1=\{0, 1, 2\} &  D_2=\{3, 7, 11\} &  D_3=\{6, 13, 20\} \\
        D_4=\{8, 9, 19\} & D_5=\{4, 12, 17\} & D_6=\{5, 10, 15\} \\
         D_7=\{14, 16, 18\} & \\
\end{array}
\end{equation*}
It is readily checked that  $\D = \{D_{1}, D_{2}, \ldots,
D_{7}\}$ is an optimal $(21, [3^7], 3)$-PCDP, as desired. 
 \end{example}
 \vskip 10pt

The following result is a variant of Theorem \ref{Con-1}.

\vskip 10pt
 \begin {proposition}
 \label{Con-2} 
 Let $p$ be an odd prime. Then an  optimal $(p^2, [\mu^{m-1}(\mu+1)^1], \mu)$-PCDP
 exists, where $m = p+1$ and $\mu = p-1$.
 \end {proposition}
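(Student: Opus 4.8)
The plan is to realize the required object as a \emph{pencil of $p+1$ lines through the origin} in $\bbZ_{p^2}$, in the same spirit as the cyclic difference matrix construction of Theorem~\ref{Con-1}; the role of the CDM is played by the multiplication table of $\bbF_p$ (a normalized $(p,p,1)$-CDM, available since $p$ is prime), whose rows supply the slopes. Since $p^2 = (p+1)(p-1)+1$, we are in the situation $m=p+1$, $\mu=p-1$, $\epsilon=1$, so (\ref{1.4}) already gives $\rho(p^2,p+1)\ge \mu=p-1$; it therefore suffices to exhibit one PCDP of index $\mu$ with the stated block sizes. First I would coordinatize: write each $x\in\bbZ_{p^2}$ uniquely as $x=a+bp$ with $a,b\in\{0,1,\ldots,p-1\}$, so that $a=x\bmod p$ and $H=p\bbZ_{p^2}=\{bp:b\}$ is the (unique) subgroup of order $p$. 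For each slope $s\in\{0,1,\ldots,p-1\}$ set $L_s=\{a+(sa\bmod p)\,p:0\le a\le p-1\}$, and let $L_\infty=H$. These $p+1$ sets each have size $p$, all contain $0$, and --- because a nonzero $x=a+bp$ lies on $L_{ba^{-1}}$ when $a\ne0$ and on $L_\infty$ when $a=0$ --- they partition $\bbZ_{p^2}^\star$. The PCDP is then
\begin{equation*}
\D=\{L_0\}\cup\{L_s\setminus\{0\}:1\le s\le p-1\}\cup\{L_\infty\setminus\{0\}\},
\end{equation*}
i.e. I keep the horizontal line $L_0$ whole and delete the origin from the remaining $p$ lines; this partitions $\bbZ_{p^2}$ into one block of size $\mu+1=p$ and $p$ blocks of size $\mu=p-1$, matching $[\mu^{m-1}(\mu+1)^1]$.

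The technical heart is the difference count for the \emph{full} pencil: I claim each $g\in\bbZ_{p^2}^\star$ occurs exactly $p$ times in $\biguplus_{s}\Delta L_s\uplus\Delta L_\infty$. For a target $g=\alpha+\beta p$ with $\alpha\ne0$, an ordered pair on $L_s$ realizing $g$ has low parts $a_1,a_2$ with $a_1-a_2\equiv\alpha\pmod p$, and its high part equals $s\alpha-c\pmod p$, where the borrow $c\in\{0,1\}$ is $1$ exactly when $a_1<a_2$. Thus $c\equiv s\alpha-\beta\pmod p$ must lie in $\{0,1\}$, which pins $s$ down to one of the two slopes $s_0=\beta\alpha^{-1}$ (no borrow) or $s_1=(\beta+1)\alpha^{-1}$ (borrow); counting admissible $a_2$ gives $p-\alpha$ pairs from $L_{s_0}$ and $\alpha$ pairs from $L_{s_1}$, for a total of $p$, while $L_\infty$ contributes nothing. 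For $g\in H$ the slope lines contribute nothing and $L_\infty$, being a genuine order-$p$ subgroup, contributes $g$ exactly $p$ times.

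This borrow bookkeeping is the step I expect to be most delicate, precisely because $\bbZ_{p^2}$ has only one subgroup of order $p$: the slope lines $L_s$ are \emph{not} subgroups (unlike the $p+1$ lines through the origin in $\bbZ_p\times\bbZ_p$), so the uniform-coverage fact cannot be read off from a clean pencil-of-subgroups argument and must be extracted from the carry analysis above. Everything else is routine, and I would keep the carry computation encapsulated in this single ``pencil lemma.''

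With the pencil count in hand, the rest is assembly. Deleting the origin from a line $L$ removes exactly the differences $\{u:u\in L\setminus\{0\}\}$ and $\{-u:u\in L\setminus\{0\}\}$, hence lowers the multiplicity of a fixed $g\ne0$ by $\mathbf{1}[g\in L]+\mathbf{1}[-g\in L]$. Summing over the $p$ lines from which the origin was deleted (all lines except $L_0$) and using that every nonzero element lies on exactly one line through the origin, I obtain
\begin{equation*}
\Phi_\D(g)=p-\bigl(1-\mathbf{1}[g\in L_0]\bigr)-\bigl(1-\mathbf{1}[-g\in L_0]\bigr)=(p-2)+\mathbf{1}[g\in L_0]+\mathbf{1}[-g\in L_0].
\end{equation*}
Since an element and its negative cannot both lie in $L_0=\{0,1,\ldots,p-1\}$, the last two indicators sum to at most $1$, giving $\Phi_\D(g)\le p-1=\mu$ for all $g\in\bbZ_{p^2}^\star$. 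Thus $\D$ is a $(p^2,[\mu^{m-1}(\mu+1)^1],\mu)$-PCDP, optimal by (\ref{1.4}). As a sanity check I would confirm $p=3$ by hand, where the recipe yields $\{0,1,2\}$, $\{4,8\}$, $\{5,7\}$, $\{3,6\}$ in $\bbZ_9$, whose maximal difference multiplicity is indeed $2=\mu$.
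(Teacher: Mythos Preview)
Your proof is correct and follows essentially the same route as the paper: both constructions amount to taking the $p$ columns of a homogeneous $(p,p-1,1)$-CDM (your slope lines $L_s\setminus\{0\}$ are exactly these columns when the CDM is the multiplication table of $\bbF_p$, and your $L_0$ and $L_\infty\setminus\{0\}$ coincide with the paper's $\widehat{D_1}$ and $D_{p+1}$). Your pencil-of-lines framing and explicit borrow bookkeeping simply unpack the difference count that the paper leaves to the CDM property and the phrase ``it turns out that''; the underlying construction and idea are identical.
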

 
 \begin{proof}
   By Proposition \ref{CDM-01}, there exists 
   $\vM = (m_{ij})$, $i\in[p-1]$, $j\in[p]$, which is a homogeneous
 $(p, p-1, 1$)-CDM over $\bbZ_{p}$.
 As in the proof of Theorem \ref{Con-1}, we construct a $(p-1) \times p$ matrix $\vR$ whose
 entries are taken from
$\bbZ_{p^2}$ by replacing every entry $m_{ij}$ of $\vM$ with
 $i + m_{ij}p$, $i\in[p-1]$, $j\in[p]$. Then we write $D_j$ for the $(p-1)$-subset of $\bbZ_{p^2}$
consisting of the elements on the $j$-th column of $\vR$ for $j\in[p]$. Then
\begin{equation*}
\D= \{D_{1}, D_{2}, \ldots, D_{p}\}
\end{equation*}
 is a cyclic difference packing. Observe that the rows of $\vR$ are indexed by the elements of
 $\bbZ_{p}^\star$. Hence, $\D$ is a partition of $\bbZ_{p^2}\setminus p\bbZ_p$. 
 On the other hand, we have
 \begin{equation*}
 \Phi_\D(g) =\begin{cases}
        0, & \text{if $g \in p\bbZ_p$} \\
        p-2, & \text{otherwise}.
\end{cases}
\end{equation*}
For the desired PCDP, let $\widehat{D_1} = D_1 \cup \{0\}$ and
$D_{p+1}= \{jp: j\in[p-1]\}.$  It turns out that
\begin{equation*}
\F = (\D\setminus \{D_1\})\bigcup \{\widehat{D_1}, D_{p+1}\}
\end{equation*}
is a $(p^2, [\mu^{m-1}(\mu+1)^1], \mu)$-PCDP. Its
optimality is straightforward to verify.   
\end{proof}

\vskip 10pt
\begin {example}
\label {EX-2}
 In Proposition \ref{Con-2}, take $p = 5$ (and hence $m=6$ and $\mu = 4$),
 and consider the homogeneous
 $(5, 4, 1$)-CDM 
 \begin{equation*}
 \vM =
    \begin{bmatrix}
        0&1&2&3&4\\
        0&2&4&1&3\\
        0&3&1&4&2\\
        0&4&3&2&1\\
\end{bmatrix}.
\end{equation*}
Its corresponding $4 \times 5$ matrix $\vR$ over $\bbZ_{25}$ is given by
 \begin{equation*}
 \vR =
    \begin{bmatrix}
        1&6 &11&16&21\\
        2&12&22&7 &17\\
        3&18&8 &23&13\\
        4&24&19&14&9 \\
\end{bmatrix}.
\end{equation*}
Then an optimal $(25, [4^55^1], 4)$-PCDP is formed by the
following base blocks over $\bbZ_{25}$:
\begin{equation*}
  \begin{array}{lllllllllllll}
        \widehat{D_0}=\{0, 1, 2, 3, 4\} & D_1=\{6, 12, 18, 24\} \\
        D_2=\{8, 11, 19, 22\} & D_3=\{7, 14, 16, 23\} \\
         D_4=\{9, 13, 17, 21\} & D_5=\{5, 10, 15, 20\}
\end{array}
\end{equation*}
 \end{example}
 \vskip 10pt

 Based on Proposition \ref{Con-2}, we can establish the following new infinite series of optimal
 PCDPs.

\vskip 10pt
 \begin {theorem}
 \label{Con-3} 
 Let $p$ be an odd prime and $n \geq 2$.
 Then   an  optimal $(p^n, [\mu^{m-1}(\mu+1)^1], \mu)$-PCDP exists, where
 $m = \frac{p^n-1}{p-1}$ and $\mu = p -1$.
 \end {theorem}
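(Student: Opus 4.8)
The plan is to induct on $n$, using Proposition~\ref{Con-2} as the base case $n=2$ and lifting an optimal PCDP on $\bbZ_{p^{n-1}}$ to one on $\bbZ_{p^n}$ for $n\ge 3$. Write $\mu=p-1$, $m=\frac{p^n-1}{p-1}$, and $m'=\frac{p^{n-1}-1}{p-1}$, and record the identities $m=pm'+1$ and $p^n=m\mu+1$. The latter shows via (\ref{1.4}) that $\rho(p^n,m)\ge\mu$ and that $[\mu^{m-1}(\mu+1)^1]$ is exactly the balanced profile; consequently any partition of $\bbZ_{p^n}$ into $m-1$ blocks of size $\mu$ and one block of size $\mu+1$ whose difference function never exceeds $\mu$ is automatically optimal, since every such PCDP has index at least $\rho(p^n,m)\ge\mu$. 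So it suffices to produce such a partition with $\Phi_\D(g)\le\mu$ for all $g\in\bbZ_{p^n}^\star$.

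I would split $\bbZ_{p^n}$ into the subgroup $H=p\bbZ_{p^n}$ of multiples of $p$ (of order $p^{n-1}$) and its complement $U$, the $p^{n-1}(p-1)$ units. For $H$ I would transport the inductive PCDP $\D'$: since $H\cong\bbZ_{p^{n-1}}$ via $k\mapsto kp$, the hypothesis supplies $m'$ blocks partitioning $H$ (namely $m'-1$ of size $\mu$ and one of size $\mu+1$) whose internal differences are precisely the $p$-multiples of the differences of $\D'$. Thus for $g=pg_0\in H^\star$ these blocks contribute $\Phi_{\D'}(g_0)\le\mu$ to the overall difference function, while for a unit $g$ they contribute $0$. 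For $U$ I would invoke a homogeneous $(p^{n-1},p-1,1)$-CDM $S=(s_{i,c})$, which exists by Proposition~\ref{CDM-1} because $p^{n-1}\ge 3$ has only the prime factor $p$ and $2\le\mu\le p-1$. Writing each unit uniquely as $i+ps$ with $i\in\{1,\dots,p-1\}$ and $s\in\bbZ_{p^{n-1}}$, I would take the $c$-th block to be $\{\,i+ps_{i,c}:1\le i\le p-1\,\}$. Since each row of $S$ is a permutation these $p^{n-1}$ blocks partition $U$, and since each block meets every nonzero residue class mod $p$ exactly once, all of its internal differences are units; hence the $U$-blocks contribute $0$ to every $g\in H^\star$.

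It then remains to bound the $U$-contribution on a unit $g$, and here the difference-matrix property does the work: a difference from rows $i\neq i'$ reduces to $i-i'$ modulo $p$, so only the $p-2$ ordered pairs $(i,i')$ with $i-i'\equiv g\pmod p$ are relevant, and for each such pair the column $c$ solving $i+ps_{i,c}-(i'+ps_{i',c})=g$ is unique. Thus every unit $g$ occurs exactly $p-2$ times. Combining the two families gives $\Phi_\D(g)=p-2\le\mu$ on units and $\Phi_\D(g)=\Phi_{\D'}(g/p)\le\mu$ on $H^\star$; the block count is $p^{n-1}+m'=m$ and the profile is $[\mu^{m-1}(\mu+1)^1]$, so the induction closes.

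The step I expect to be delicate is the interaction of the two block families on the multiples of $p$: a priori the $U$-blocks could create differences lying in $H^\star$ which, added to the $\le\mu$ already contributed by the $H$-blocks, would push the index past $\mu$. The transversal choice of one unit per nonzero residue class---forced precisely by working with a CDM over $\bbZ_{p^{n-1}}$---is what annihilates this cross term. A second point worth flagging is that the recursion cannot be pushed down to $n=1$: the $n=1$ ``PCDP'' is the single block $\bbZ_p$, which transported into $H$ is an entire subgroup and would force $\Phi=p>\mu$ in the $H$-direction. This is exactly why Proposition~\ref{Con-2}, with its asymmetric treatment of $0$ and of $H\setminus\{0\}$, is indispensable as the base case rather than being absorbed into the inductive step.
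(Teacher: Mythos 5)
Your proposal is correct and follows essentially the same route as the paper: induction on $n$ with Proposition~\ref{Con-2} as the base case, a homogeneous $(p^{n-1},p-1,1)$-CDM lifted to blocks $\{i+ps_{i,c}: 1\le i\le p-1\}$ partitioning the non-multiples of $p$ with difference function $p-2$ on units and $0$ on $p\bbZ_{p^{n-1}}^{\phantom{\star}}\setminus\{0\}$, and the subgroup $p\bbZ_{p^{n-1}}\cong\bbZ_{p^{n-1}}$ filled by the inductive PCDP. You merely make explicit the counting and optimality checks that the paper compresses into ``employing the same technique as in the proof of Proposition~\ref{Con-2}'' and ``it can be checked,'' and you cite the correct source (Proposition~\ref{CDM-1}) for the needed CDM.
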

 
 \begin{proof}
The proof is by induction on $n$.
  If $n = 2$, the conclusion holds by Proposition \ref{Con-2}.  Now suppose that the assertion is true
  when $n = k \ge 3$. Consider the case $n = k+1$. From Proposition \ref {ConA-1}, we know that  a homogeneous
 $(p^k, p-1, 1$)-CDM exists. Employing the same technique
  as in the proof of Proposition \ref{Con-2}, from this CDM we can
 form a collection
\begin{equation*}
 \D= \{D_{1}, D_{2}, \cdots, D_{p^k}\}
\end{equation*}
 of $(p-1)$-subsets of $\bbZ_{p^{k+1}}$ in such a way that
 \begin{itemize}
 \item $\D$ partitions $\bbZ_{p^{k+1}}\setminus p\bbZ_{p^k}$, and
 \item for any $g \in \bbZ_{p^{k+1}}^\star$,
 \begin{equation*}
 \Phi_\D(g) =\begin{cases}
        0, & \text{if $g \in p\bbZ_{p^k}$} \\
        p-2, & \text{otherwise}.
\end{cases}.
\end{equation*}
\end{itemize}
 Since $p\bbZ_{p^k}$ is isomorphic to $\bbZ_{p^k}$, 
 by our induction hypothesis we can construct an optimal PCDP
 ${\F}$ of index $p-1$ in
 $p\bbZ_{p^k}$, which has exactly $\frac{p^{k}-1}{p-1} - 1$
 base blocks of size $p-1$ and one base block of size $p$.
 It can be checked that $\D\cup \F$ is an optimal
 $(p^{k+1}, [(p-1)^{\frac {p^{k+1}-1}{p-1}-1}p^1], p-1)$-PCDP.   
 \end{proof}

\section{The Existence of $(3m, [3^m], 3)$-PCDPs}
\label{PCDPs}

In this section, the existence of $(3m,[3^m],3)$-PCDP is settled
for all $m\not\equiv 8,16\pmod{24}$. PCDPs with such
parameters are optimal.
Our proof technique requires a generalization of cyclic difference matrices.

Let $G$ be a cyclic group of order $n$ containing a subgroup $H$
of order $w$. A $k \times (n-w)$ matrix $\vM=(m_{ij})$, $i\in[k]$, $j\in[n-w]$,
with entries from $\bbZ_{n}$ is said to be a {\em holey} DM if
for any two distinct row indices $r$ and $s$ of $\vM$, $r,s\in[k]$, the differences
$m_{rj}-m_{sj}$, $j\in[n-w]$, comprise all the element of $G \setminus H$. 
For convenience, we refer to such a matrix
$\vM$ as an $(n,k,1;w)$-HDM over $(G;H)$, or simply an $(n,k,1;w)$-HDM when $G$ and
$H$ are clear from the context. $H$ is the {\em hole} of the holey DM.

The property of a holey DM is preserved even if we
add any element of $\bbZ_{n}$ to any column of the matrix.  Hence, without loss of
generality, one can always assume that the all entries in the first row
of an holey DM are zero. If we delete the first row from such a
holey DM, then we obtain an $(n,k-1,1;w)$-HDM, where the entries of a row consist
of all the elements of $G \setminus H$, and we term the derived
$(n,k-1,1;w)$-HDM {\em homogeneous}.  Because of this equivalence, we use the
terms $(n,k,1;w)$-HDM and homogeneous $(n,k-1,1;w)$-HDM interchangeably.

We introduce one more object which is crucial to the
construction for PCDPs in this section. Let $G$ be a cyclic group of
order $n$. A {\em partial} DM of order $n$ (denoted PDM$(n)$) 
 is a $3 \times (n-3)$ matrix $\vM=(m_{ij})$ with entries from
$G$ such that the entries in each row of $\vM$ are distinct,
and for any two distinct row indices $r,s\in[3]$, 
the differences $m_{rj}-m_{sj}$, $j\in[q-3]$, contains each element of $G$ at most once.
In addition, if the three sets of missing elements $D_i=G
\setminus \cup_{j=1}^{q-3}\{m_{ij}\}$, $i\in[3]$, and the multiset of
differences $\cup_{i=1}^{3}\{x-y: \text{$x,y \in D_i$ and $x \neq y$}\}$
contain each element of $G$ at most three times, we
called the partial DM {\em extendible}.

\vskip 10pt
\begin{example} 
\label{pdm8}
An extendible PDM$(8)$ over $\bbZ_{8}$:
\begin{equation*}
\begin{bmatrix}
        2&3&6&7&4\\
        1&6&3&5&4\\
        7&1&5&2&0 \\
\end{bmatrix}.
\end{equation*}
The three sets of missing elements are 
$\{0,1,5\}$, $\{0,2,7\}$, and $\{4,3,6\}$. 
\end{example}
\vskip 10pt

The following proposition gives the connection between extendible
 partial DMs and $(3m,[3^m],3)$-PCDPs.

\vskip 10pt
\begin{proposition} 
\label{pdmpcdp} 
Suppose there exists an extendible PDM$(m)$. Then there exists a
$(3m,[3^m],3)$-PCDP.
\end{proposition}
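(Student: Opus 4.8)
The plan is to adapt the lifting construction from the proof of Theorem~\ref{Con-1} to the case $\mu=3$, using the three sets of missing elements to furnish the three extra base blocks that bring the block count up to $m$ and that account for the differences lying in the index-$3$ subgroup. Let $\vM=(m_{ij})$, $i\in[3]$, $j\in[m-3]$, be the given extendible PDM$(m)$ over $\bbZ_m$, with missing-element sets $D_1,D_2,D_3$. First I would form a $3\times(m-3)$ matrix $\vR$ over $\bbZ_{3m}$ by replacing each entry $m_{ij}$ with $(i-1)+3m_{ij}$, and take its columns $C_1,\ldots,C_{m-3}$ as $3$-subsets of $\bbZ_{3m}$ (I rename the columns $C_j$ to avoid clashing with the missing sets $D_i$). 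Then I would adjoin three further blocks $B_i=\{(i-1)+3d:d\in D_i\}$, $i\in[3]$, built from the missing elements, and set $\D=\{C_1,\ldots,C_{m-3},B_1,B_2,B_3\}$, a collection of $m$ blocks of size $3$.

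Next I would check that $\D$ partitions $\bbZ_{3m}$: since the entries in row $i$ of $\vM$ are distinct they exhaust $\bbZ_m\setminus D_i$, so the columns cover precisely $\{(i-1)+3a:a\in\bbZ_m\setminus D_i\}$ in residue class $i-1\pmod 3$, while $B_i$ supplies exactly the three missing elements $\{(i-1)+3d:d\in D_i\}$ of that class. The difference count then splits cleanly according to the order-$m$ subgroup $H=3\bbZ_{3m}$ of multiples of three. A within-column difference between rows $r\neq s$ has residue $r-s\not\equiv 0\pmod 3$, hence lies outside $H$; by the defining property of a partial DM each value occurs at most once for each ordered pair $(r,s)$, and there are exactly three ordered pairs $(r,s)$ with $r,s\in[3]$ realizing a prescribed nonzero residue, so every element of $\bbZ_{3m}\setminus H$ is covered at most three times. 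A within-$B_i$ difference joins two elements of the same residue class $i-1$, so it lies in $H$ and equals $3$ times an element of $\Delta D_i$; the extendibility hypothesis, namely that the multiset $\bigcup_{i=1}^3\Delta D_i$ meets each element of $\bbZ_m$ at most three times, therefore guarantees that every nonzero element of $H$ is covered at most three times as well.

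Combining the two cases gives $\Phi_\D(g)\le 3$ for all $g\in\bbZ_{3m}^\star$, so $\D$ is a $(3m,[3^m],\lambda)$-PCDP with $\lambda\le 3$; since (\ref{1.1}) forces $\lambda\ge\lceil 6m/(3m-1)\rceil=3$, the index is exactly $3$ and the packing is in fact optimal. The only delicate step is the residue bookkeeping in the second paragraph: one must verify that column-differences and block-differences occupy the complementary parts $\bbZ_{3m}\setminus H$ and $H$, and that in each part the multiplicity is controlled, via three ordered row-pairs per nonzero residue for the former, and via the correspondence $\Delta D_i\leftrightarrow 3\,\Delta D_i$ together with the extendibility condition for the latter. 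This is where the hypothesis is used in full, and getting the count of ordered pairs and the coset arithmetic exactly right is the main obstacle.
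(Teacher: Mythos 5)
Your proposal is correct and is essentially the paper's own construction: the paper likewise lifts each column into a block by placing row $i$'s entry in a fixed residue class mod $3$ (it uses $\{3c,3b+1,3a+2\}$, i.e.\ row $i\mapsto$ class $3-i$ rather than your $i-1$, a mere relabeling) and turns each missing set $D_i$ into one extra block in that row's class. The paper dismisses the verification as ``easy to check''; your residue/coset bookkeeping (three ordered row pairs per nonzero residue outside $H$, and $\Delta B_i = 3\,\Delta D_i$ inside $H$, plus the lower bound forcing $\lambda=3$) is exactly the check intended and is carried out correctly.
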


\begin{proof}
For each column $[a,b,c]^T$ of the PDM$(m)$, we
construct a base block $\{3c,3b+1,3a+2\}$ of the PCDP. If
$D_i=\{a,b,c\}$ is the 
set of missing element in row $i$, $i\in[3]$, we
construct a base block $\{3a+3-i,3b+3-i,3c+3-i\}$ of the PCDP. This gives 
a total of $m$ base blocks. It is easy to check that the
conditions of an extendible partial DM ensure that the base
blocks form a PCDP. 
\end{proof}
\vskip 10pt

The usefulness of holey DMs stems from the fact that they can be used to produce large
extendible partial DMs by ``filling in'' the hole of a holey DM with a smaller extendible partial
DM.

\vskip 10pt
\begin{proposition}[Filling in Hole]
\label{compose}
Suppose there exist a homogeneous
$(m,3,1;w)$-HDM and an extendible PDM$(w)$.
Then there exists an extendible PDM$(m)$.
\end{proposition}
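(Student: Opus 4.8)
The plan is to carry out the ``filling in the hole'' idea literally: horizontally concatenate the homogeneous HDM, which fills $G\setminus H$, with an embedded copy of the small extendible PDM, which fills the hole $H$. Write $G=\bbZ_m$ and let $H$ be the order-$w$ hole. Since $G$ is cyclic, $H=(m/w)\bbZ_m$ is a subgroup isomorphic to $\bbZ_w$; fix an additive isomorphism $\phi\colon\bbZ_w\to H$, so that $\phi(u-v)=\phi(u)-\phi(v)$. Let $\vM$ be the given homogeneous $(m,3,1;w)$-HDM, a $3\times(m-w)$ array over $G$ each of whose rows runs through all of $G\setminus H$ and each of whose (ordered) row pairs has column differences covering $G\setminus H$ exactly once. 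Let $\vR$ be the given extendible PDM$(w)$, a $3\times(w-3)$ array over $\bbZ_w$ with missing sets $E_1,E_2,E_3$, and let $\vR'$ be its entrywise image $\phi(\vR)$, a $3\times(w-3)$ array over $G$ with all entries in $H$. I would then set $\vL=[\,\vM\mid\vR'\,]$, a $3\times(m-3)$ array over $G$, and claim it is an extendible PDM$(m)$.

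First I would verify the two defining conditions of a PDM$(m)$, using that the two blocks occupy complementary parts of $G$. Fix a row $i$: its $\vM$-part lists all of $G\setminus H$ without repetition, while its $\vR'$-part lists $w-3$ distinct elements of $H$ (row $i$ of $\vR$ has distinct entries and $\phi$ is injective); since $G\setminus H$ and $H$ are disjoint, row $i$ of $\vL$ has distinct entries. For the difference condition, fix an ordered pair of distinct rows $(r,s)$: the $m-w$ columns from $\vM$ contribute each element of $G\setminus H$ exactly once, whereas the $w-3$ columns from $\vR'$ contribute, as $\phi$-images of the corresponding row differences of $\vR$, elements of $H$ each occurring at most once by the PDM$(w)$ difference condition. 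Because $G$ is the disjoint union of $G\setminus H$ and $H$, the two contributions never collide, so the combined difference multiset meets each element of $G$ at most once.

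Next I would check extendibility. Row $i$ of $\vL$ covers $(G\setminus H)\cup(H\setminus\phi(E_i))=G\setminus\phi(E_i)$, so its missing set is $D_i=\phi(E_i)$, a $3$-subset of $H$. Consequently the internal differences of $D_i$ are exactly the $\phi$-images of the internal differences of $E_i$ and again lie in $H$. By extendibility of $\vR$, the multiset $\bigcup_{i=1}^{3}\{u-v:u,v\in E_i,\ u\neq v\}$ meets each element of $\bbZ_w$ at most three times; applying $\phi$ shows $\bigcup_{i=1}^{3}\{x-y:x,y\in D_i,\ x\neq y\}$ meets each element of $H$ at most three times and each element of $G\setminus H$ not at all. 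This is precisely the extendibility condition for $\vL$, so $\vL$ is an extendible PDM$(m)$ as required.

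The construction is essentially forced, so the content is bookkeeping, and the single point that genuinely needs care is the clean separation of the two difference streams: every difference produced by $\vM$ lives in $G\setminus H$, while every difference produced by $\vR'$---including the internal differences of the missing sets---lives in $H$, so the at-most-once and at-most-thrice bounds add within their own cosets without interaction. The only structural inputs are that $H$ is a genuine subgroup with $H\cong\bbZ_w$ (so that $\phi$ respects differences) and that a homogeneous HDM exhausts $G\setminus H$ in each row; granting these, the three counting checks above finish the proof.
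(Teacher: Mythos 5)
Your proof is correct and takes essentially the same approach as the paper: the paper's one-line construction (multiply each entry of the extendible PDM$(w)$ by $m/w$ and adjoin the resulting columns to the homogeneous $(m,3,1;w)$-HDM) is exactly your embedding $\phi$ followed by horizontal concatenation, since multiplication by $m/w$ is the canonical isomorphism $\bbZ_w\to H$. You have simply written out in full the row-distinctness, difference-separation, and extendibility checks that the paper leaves implicit.
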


\begin{proof}
  Multiple each entry of the extendible PDM$(w)$ by 
$m/w$ and add the columns of 
the resulting matrix to the  
homogeneous $(m,3,1;w)$-HDM to obtain an  extendible PDM$(m)$.
\end{proof}
\vskip 10pt

In view of Proposition \ref{pdmpcdp}, we employ a combination of construction
techniques for extendible PDM$(m)$ and $(3m,[3^m],3)$-PCDP.
The technique is recursive and so
we begin with some required small ingredients in the next subsection.

\subsection{Small Ingredients}

\begin{lemma}
\label{pdcp9}
There exists a $(3m,[3^m],3)$-PCDP for $m\in\{3,9\}$
\end{lemma}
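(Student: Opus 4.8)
The plan is to treat both values by direct construction, because these two cases are precisely the ones that escape the machinery developed earlier. For $m=9$, Theorem~\ref{Con-1} would require a homogeneous $(9,3,1)$-CDM, but $9$ falls into the excluded case $\gcd(n,27)=9$ of Proposition~\ref{CDM-2}, so no such matrix is available; and the ``filling in'' route of Proposition~\ref{compose} is blocked as well, since the only candidate hole of size $3$ would need an extendible PDM$(3)$, whereas the (empty) PDM$(3)$ has all three missing sets equal to $\bbZ_3$ and hence produces each nonzero difference nine times, far exceeding $3$. Thus neither recursion reaches $m\in\{3,9\}$, and these must be supplied by hand as base ingredients. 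In both cases optimality is automatic: here $n=3m$, $\mu=3$, $\epsilon=0$, so the bound (\ref{1.4}) gives $\rho(3m,m)\ge 3$, met by any index-$3$ construction.

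For $m=3$ I would exploit a clean sufficient condition. Call a triple \emph{Sidon} if its six differences are pairwise distinct, i.e. $\Delta B$ contains no repetition. A Sidon triple contributes each nonzero element of $\bbZ_9$ to $\Delta\D$ at most once, so any partition of $\bbZ_9$ into three Sidon triples is automatically a $(9,[3^3],3)$-PCDP, since each $g$ then occurs in at most three blocks. One such partition is $\{0,1,3\}$, $\{2,5,6\}$, $\{4,7,8\}$: each block is Sidon, the three blocks tile $\bbZ_9$, and a direct tally shows the differences $1,3,6,8$ occur three times and all others at most twice. This disposes of $m=3$.

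For $m=9$ the cleanest plan is to develop three carefully chosen base triples $B_1,B_2,B_3\subset\bbZ_{27}$ over the order-$3$ subgroup $H=\langle 9\rangle=\{0,9,18\}$, taking the nine blocks $B_i+9j$ with $i\in[3]$, $j\in\{0,1,2\}$. These partition $\bbZ_{27}$ exactly when $B_1\cup B_2\cup B_3$ is a complete residue system modulo $9$ (one element in each $H$-coset), and, since translation preserves $\Delta$, the within-block difference multiset of all nine blocks is $3\,(\Delta B_1\cup\Delta B_2\cup\Delta B_3)$; hence the index is $\le 3$ precisely when the eighteen differences of $B_1,B_2,B_3$ are all distinct, i.e. the three base triples form a perfect difference system. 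I would choose the residues mod $9$ of the three triples to be the Sidon partition $\{0,1,3\},\{2,5,6\},\{4,7,8\}$ found above, which fixes how many differences fall in each residue class of $\bbZ_9$, and then select the lifts (each residue admits the three representatives $\rho,\rho+9,\rho+18$) so that differences sharing a residue class receive distinct values. The main obstacle is exactly this last step: arranging the lifts so that the eighteen differences are genuinely distinct in $\bbZ_{27}$ while preserving the complete-residue-system condition. Should the orbit approach prove awkward to balance, the safe fallback is simply to exhibit nine explicit base blocks over $\bbZ_{27}$ found by a short search and verify the condition ``each element of $\bbZ_{27}$ occurs at most three times in $\Delta\D$'' directly from definition~(ii); either way the verification is routine counting over the $26$ nonzero differences.
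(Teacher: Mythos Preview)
Your proposal is sound in both cases, and your diagnosis of why $m\in\{3,9\}$ escape the recursive machinery is accurate. For $m=3$ your construction is complete and correct; the paper instead uses the single orbit $\{0,1,5\}+3j$, $j\in\{0,1,2\}$, but your Sidon-partition argument is equally valid and arguably cleaner.

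For $m=9$ your orbit-over-$\langle 9\rangle$ plan works, but you stop short of executing it. The lifting step is not hard: with your mod-$9$ pattern $\{0,1,3\},\{2,5,6\},\{4,7,8\}$ one may take, for instance, $B_1=\{0,1,3\}$, $B_2=\{2,14,24\}$, $B_3=\{4,17,25\}$ in $\bbZ_{27}$; these form a complete residue system mod $9$, and their eighteen internal differences $\{1,2,3,5,6,8,10,12,13,14,15,17,19,21,22,24,25,26\}$ are pairwise distinct, so the nine translates $B_i+9j$ give a $(27,[3^9],3)$-PCDP with each nonzero difference hit exactly three times. As written, though, your proposal for $m=9$ is a plan rather than a proof, since no lifts are actually exhibited.

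The paper does precisely what you call the ``safe fallback'': it simply lists nine explicit base blocks over $\bbZ_{27}$ (with no visible orbit structure) and leaves the verification implicit. Your orbit approach, once completed, has the virtue of making the index bound transparent by design---every nonzero difference is covered exactly three times---whereas the paper's ad hoc listing is faster to state but gives no insight into why it works.
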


\begin{proof}
 When $m=3$, take as base blocks $\{0,1,5\}$, $\{3,4,8\}$, and $\{6,7,2\}$.

When $m=9$, take as base blocks
$\{0,1,2\}$,
$\{3,4,6\}$,
$\{5,7,10\}$,
$\{8,11,16\}$,
$\{9,17,22\}$,
$\{12,18,24\}$,
$\{13,20,26\}$,
$\{14,21,25\}$, and
$\{15,19,23\}$.
\end{proof}
\vskip 10pt

\begin{lemma}
\label{pdm16}
There exists an extendible PDM$(m)$ for $m\in\{12,16,18,24,32,54\}$.
\end{lemma}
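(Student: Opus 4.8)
The plan is to prove the lemma by explicit construction, treating the six orders with a mixture of direct search and the Filling-in-Hole operation of Proposition \ref{compose}. Since an extendible PDM$(m)$ is a finite object over $\bbZ_m$, for each target order it suffices to exhibit a $3\times(m-3)$ matrix and to check the two defining conditions: that within each pair of rows every nonzero difference of $\bbZ_m$ occurs at most once, and that the three sets of missing elements, taken together with their internal differences, cover each element of $\bbZ_m$ at most three times. My strategy is to construct directly only the ``irreducible'' orders and to obtain the remaining orders recursively from smaller extendible PDMs already in hand, starting from the extendible PDM$(8)$ of Example \ref{pdm8}.

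Concretely, I would designate $m=12$ and $m=18$ as the base cases and produce each by a direct search over $3\times(m-3)$ arrays, recording the matrix together with its three missing-element sets; these then serve for the larger orders just as Example \ref{pdm8} does. The four composite orders factor through Proposition \ref{compose}. For $m\in\{16,24,32\}$ I take the hole size $w=8$ (legitimate since $8\mid m$) and fill the hole of a homogeneous $(m,3,1;8)$-HDM with the scaled extendible PDM$(8)$; for $m=54$ I take $w=18$ and fill a homogeneous $(54,3,1;18)$-HDM with the scaled extendible PDM$(18)$ produced in the base step. In each case Proposition \ref{compose} guarantees that the output is again an extendible PDM, so once the HDM is in hand no further checking of the extendibility condition is required.

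The real work, and the main obstacle, is therefore the production of the homogeneous holey difference matrices $(m,3,1;w)$ that drive the recursion. For the relevant parameters the hole $H$ is the unique order-$w$ subgroup of $\bbZ_m$, and the three rows must realize each difference in $\bbZ_m\setminus H$ exactly once; I would obtain these either from known direct constructions of holey difference matrices (building on the cyclic difference matrices of Section \ref{CDM} over the quotient $\bbZ_{m/w}$) or, lacking a clean construction, by a targeted computer search. The bookkeeping is heaviest for $m=54$, where the $3\times 51$ array and the interaction between the hole-filling PDM$(18)$ and the HDM differences must be tracked with care; this is the one case I expect to demand the most verification effort.
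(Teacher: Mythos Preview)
Your plan has a genuine obstruction at $m=16$. With hole size $w=8$, the hole $H$ is the unique order-$8$ subgroup of $\bbZ_{16}$, namely $H=2\bbZ_{16}=\{0,2,4,\ldots,14\}$, so $G\setminus H$ is precisely the set of odd residues. In a homogeneous $(16,3,1;8)$-HDM every row must be a permutation of $G\setminus H$, and for any two rows the vector of differences must again cover $G\setminus H$. But the difference of two odd elements of $\bbZ_{16}$ is always even, so these row differences land entirely inside $H$, never in $G\setminus H$. Hence no homogeneous $(16,3,1;8)$-HDM (equivalently, no $(16,4,1;8)$-HDM) can exist, and Proposition~\ref{compose} is unavailable with the parameters you propose. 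The same parity argument shows that taking $w=m/2$ is never viable: $G\setminus H$ is then a single nontrivial coset of $H$, and coset differences fall back into $H$. So at minimum $m=16$ must be handled by some other route, and you should be wary of the ``quotient'' heuristic for producing HDMs when the index $m/w$ is small.

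For comparison, the paper works from the opposite end of the subgroup lattice. For each $m\in\{16,18,24,32,54\}$ it takes a known $(m,4,1;2)$-HDM (hole of order~$2$, from Chang--Miao), drops the all-zero row to get a homogeneous $(m,3,1;2)$-HDM, deletes two specifically chosen columns, and reinserts the single column $[m/2,\,m/2,\,m/2]^{T}$. The resulting $3\times(m-3)$ array is then verified to be an extendible PDM$(m)$, with the three missing-element sets recorded explicitly. Only $m=12$ is done by a standalone direct listing. This surgery on a small-hole HDM sidesteps the need for large-hole HDMs altogether and avoids the obstruction above.
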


\begin{proof}
 An extendible PDM$(12)$ is listed below:
 \begin{equation*}
    \begin{bmatrix}
2&4&5&7& 8& 9& 10& 11&6\\
1&11&8&4&10&5&3 &9 &6\\
5& 1& 4 &11& 9& 2& 8& 7&6\\
\end{bmatrix}.
\end{equation*}
The three sets of elements missing from each row are $\{0,1,3\}$, $\{0,2,7\}$, and $\{0,3,10\}$.

For $m\in\{16,18,24,32,54\}$, 
we start with the $(m,4,1;2)$-HDM constructed in \cite{ChangMiao:2003}. 
First, remove the row of all zeros from each holey DM. Then
remove two columns as prescribed below: 

\begin{itemize}
\item for $m=16$: remove columns $[1,2,3]^{T}$ and $[-1,-2,-3]^{T}$.
\item for $m=18$: remove columns $[1,2,3]^{T}$ and $[-1,-2,-3]^{T}$.
\item for $m=24$: remove columns $[1,2,3]^{T}$ and $[-1,-2,-3]^{T}$.
\item for $m=32$: remove columns $[1,2,3]^{T}$ and $[-1,-2,-3]^{T}$.
\item for $m=54$: remove columns $[1,10,2]^{T}$ and $[2,12,5]^{T}$.
\end{itemize}
Finally, add the column $[m/2,m/2,m/2]^{T}$.  The 
resulting matrices have $m-3$ columns and the sets of missing elements
each row are:
\begin{itemize}
\item for $m=16$: $\{0,1,15\}$, $\{0,2,14\}$, and $\{0,3,13\}$.
\item for $m=18$: $\{0,1,17\}$, $\{0,2,16\}$, and $\{0,3,15\}$.
\item for $m=24$: $\{0,1,23\}$, $\{0,2,22\}$, and $\{0,3,21\}$.
\item for $m=32$: $\{0,1,31\}$, $\{0,2,30\}$, and $\{0,3,29\}$.
\item for $m=54$: $\{0,1,2\}$, $\{0,10,12\}$, and $\{0,2,5\}$.
\end{itemize}
\end{proof}
\vskip 10pt

\begin{lemma}
\label{pdm64}
There exists an extendible PDM$(m)$ for all
$m\in M$, where $M=\{36$, $48$, $64$, $72$, $96$, $108$, $128$, $144$, $162$, $192$, $256$, $288$, $384\}$.
\end{lemma}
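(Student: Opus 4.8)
The plan is to realize every value in $M$ by filling the hole of a holey difference matrix with one of the six seed matrices of Lemma~\ref{pdm16}, using Proposition~\ref{compose} as the engine. Concretely, for each $m\in M$ I would choose a factorization $m=\ell w$ with $w\in\{12,16,18,24,32,54\}$, so that an extendible PDM$(w)$ is already in hand, and then construct a homogeneous $(m,3,1;w)$-HDM over $\bbZ_m$ whose hole is the subgroup of order $w$. Proposition~\ref{compose} converts this pair directly into an extendible PDM$(m)$. Convenient factorizations are easy to list, for example $36=3\cdot12$, $48=3\cdot16$, $64=4\cdot16$, $128=4\cdot32$, $162=3\cdot54$, $256=8\cdot32$, $384=12\cdot32$, and similarly for the remaining values, with cofactors $\ell=m/w$ always drawn from the short set $\{2,3,4,6,8,9,12,16\}$.

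The holey matrices are the only nontrivial ingredient. The natural tool is the inflation construction, which lifts a homogeneous $(\ell,3,1)$-CDM over $\bbZ_\ell$ (with $\ell=m/w$) to a homogeneous $(\ell w,3,1;w)$-HDM over $\bbZ_{\ell w}$, by expanding each column of the CDM into a full coset of the order-$w$ hole so that the three row-difference lists cover every nonzero coset of $\bbZ_m$ exactly once. Whenever the cofactor $\ell$ carries such a CDM, say through Proposition~\ref{CDM-1} or Proposition~\ref{CDM-2}, the HDM, and hence the extendible PDM$(m)$, follows immediately.

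The crux, and where I expect essentially all the difficulty to sit, is that every $m\in M$ is $3$-smooth, so each cofactor $\ell=m/w$ is again a product of $2$'s and $3$'s, and usually even. For such $\ell$ neither Proposition~\ref{CDM-1} (which requires all prime factors at least $5$) nor Proposition~\ref{CDM-2} (which requires $\ell$ odd with $\gcd(\ell,27)\neq9$) supplies a homogeneous $(\ell,3,1)$-CDM, and for $\ell\in\{2,3,4,6,8,9\}$ no such CDM exists at all; pure inflation from a cyclic difference matrix therefore cannot reach these values. The obstacle is thus to manufacture holey matrices whose cofactor absorbs the primes $2$ and $3$. I would meet it by taking the Chang--Miao holey matrices already invoked in Lemma~\ref{pdm16} (the homogeneous $(\cdot,3,1;2)$-HDMs of \cite{ChangMiao:2003}) as seeds and propagating them through a product construction that multiplies a homogeneous $(\ell,3,1)$-CDM against a holey matrix, scaling both the order and the hole by $\ell$; this moves a small hole up to the target order $w$ along the few admissible odd cofactors, and the handful of residual values that admit no such decomposition I would finish by exhibiting explicit $(m,3,1;w)$-HDMs directly, exactly as the seeds of Lemma~\ref{pdm16} are given.

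Once the matrices are in place the remaining verification is purely mechanical: for each case one confirms that the $m-w$ columns of the HDM together with the $w-3$ columns of the filling PDM$(w)$ yield the required $m-3$ columns, that the pairwise row-difference lists cover $\bbZ_m$ with the prescribed multiplicities, and that the three sets of missing elements satisfy the extendibility bound of $3$ demanded by the definition of an extendible partial difference matrix. Organizing $M$ so that every value is reached by one of the above routes, and then carrying out this bookkeeping case by case, is the bulk of the argument.
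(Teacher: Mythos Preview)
Your high-level strategy is exactly the paper's: every $m\in M$ is obtained via Proposition~\ref{compose} by filling the hole of an $(m,4,1;w)$-HDM with a small extendible PDM$(w)$. The gap lies in how you propose to manufacture the holey matrices. Your first ``inflation'' step---lifting a homogeneous $(\ell,3,1)$-CDM to a homogeneous $(\ell w,3,1;w)$-HDM---is not the inflation of Proposition~\ref{product}; that proposition scales the order \emph{and} the hole simultaneously, and there is no construction in the paper that produces a hole of prescribed size $w$ from a CDM of order $\ell=m/w$ alone. Your second route (inflate a Chang--Miao $(\cdot,4,1;2)$-HDM by an $(\ell,4,1)$-CDM to reach hole $2\ell$) is correctly formulated but, as you yourself note, collapses because every target hole $w\in\{12,16,18,24,32,54\}$ would require an $(\ell,4,1)$-CDM with $\ell\in\{6,8,9,12,16,27\}$, and only $\ell=27$ survives Propositions~\ref{CDM-1}--\ref{CDM-2}. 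That leaves essentially the entire list to your unspecified ``exhibit explicit HDMs'' fallback, which is the whole difficulty.

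The paper sidesteps all of this by citing the literature directly: for $m\in M\setminus\{36,108,288\}$, an $(m,4,1;w)$-HDM with $w\in\{8,12,16,24\}$ is already constructed in \cite{ChangMiao:2003}, and for $m\in\{108,288\}$ an $(m,4,1;w)$-HDM with $w\in\{12,24\}$ is available from \cite{Yin:2005}. These are filled with the extendible PDM$(w)$ from Example~\ref{pdm8} (for $w=8$, which you omit from your seed list) or Lemma~\ref{pdm16}. Only $m=36$ is handled separately, by column surgery on a $(36,4,1;2)$-HDM from \cite{Yin:2005} in the same spirit as Lemma~\ref{pdm16}. So the missing idea is simply that Chang--Miao and Yin already supply HDMs with holes of the right sizes; you do not need to build them yourself.
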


\begin{proof}
For $m\in M\setminus\{36,108,288\}$, an
$(m,4,1;w)$-HMD exists with $w \in \{8,12,16,24\}$ \cite{ChangMiao:2003}.
Fill in the hole with an extendible PDM$(w)$ (which
exists by Example \ref{pdm8} or Lemma \ref{pdm16}) to obtain an extendible PDM$(m)$.

For $m=36$, take the $(m,4,1;2)$-HDM constructed in \cite{Yin:2005}, 
remove the two columns $[1,27,2]^{T}$, $[28,2,1]^{T}$, and add the column
$[18,18,18]^{T}$ to obtain a PDM$(36)$.

For $m\in\{108,288\}$, an $(m,4,1;w)$-HMD exists with 
$w \in \{12,24\}$ \cite{Yin:2005}.
Fill in the hole with an extendible PDM$(w)$ (which exists by Lemma \ref{pdm16})
to obtain an extendible PDM$(m)$.
\end{proof}

\subsection{Recursive Constructions}

\subsubsection{Recursive Constructions for Difference Matrices}
\hfill
\vskip 10pt
\begin{proposition}[Inflation, Yin \cite{Yin:2005}]
\label{product}
Suppose there exist an $(n,k,1;w)$-HDM and an $(m,k,1)$-CDM. Then
there exists a $(mn,k,1;mw)$-HDM.
\end{proposition}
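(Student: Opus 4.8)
The plan is to realize the desired $(mn,k,1;mw)$-HDM as a two-level product: the given $(n,k,1;w)$-HDM governs the ``coarse'' behaviour modulo $n$, while the $(m,k,1)$-CDM fills in each fibre of the reduction map. Since every group here is cyclic, I may assume the HDM $M=(m_{ij})$, $i\in[k]$, $j\in[n-w]$, has entries in $\bbZ_n$ with hole $H$ the unique subgroup of order $w$, and the CDM $A=(a_{il})$, $i\in[k]$, $l\in[m]$, has entries in $\bbZ_m$. Let $\pi:\bbZ_{mn}\to\bbZ_n$ be reduction modulo $n$; its kernel is $n\bbZ_{mn}=\{0,n,\dots,(m-1)n\}$, a subgroup of order $m$, and $H':=\pi^{-1}(H)$ is the unique subgroup of $\bbZ_{mn}$ of order $mw$.

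First I would write down the candidate matrix $N$ of size $k\times(mn-mw)$ whose columns are indexed by pairs $(j,l)$ with $j\in[n-w]$ and $l\in[m]$, setting
\[
 N_{i,(j,l)} = \widetilde{m_{ij}} + n\,a_{il} \pmod{mn},
\]
where $\widetilde{m_{ij}}\in\{0,\dots,n-1\}$ and $a_{il}\in\{0,\dots,m-1\}$ are the natural integer lifts. The column count is $(n-w)m=mn-mw$, exactly the number required for a hole of order $mw$.

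The core step is the difference computation. For distinct rows $r,s\in[k]$ and a column $(j,l)$,
\[
 N_{r,(j,l)}-N_{s,(j,l)} = \bigl(\widetilde{m_{rj}}-\widetilde{m_{sj}}\bigr) + n\,(a_{rl}-a_{sl}) \pmod{mn}.
\]
Applying $\pi$ annihilates the second summand and returns $m_{rj}-m_{sj}\in\bbZ_n$, so as $j$ ranges over $[n-w]$ these images run through every element of $\bbZ_n\setminus H$ exactly once, by the HDM property of $M$. I would then fix $j$, hence a coset $\pi^{-1}(g)$ with $g=m_{rj}-m_{sj}$, and let $l$ vary: the term $\widetilde{m_{rj}}-\widetilde{m_{sj}}$ is some lift $\delta$ of $g$, while $n\,(a_{rl}-a_{sl})$ sweeps all of $n\bbZ_{mn}$ exactly once by the CDM property of $A$, so $\delta+n\,(a_{rl}-a_{sl})$ enumerates the whole fibre $\pi^{-1}(g)=\delta+n\bbZ_{mn}$ exactly once. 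Combining the two levels, the row-pair differences cover $\pi^{-1}(\bbZ_n\setminus H)=\bbZ_{mn}\setminus H'$ with each element appearing exactly once, which is precisely the defining property of an $(mn,k,1;mw)$-HDM.

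The one genuine subtlety — the point I would be most careful about — is that the representative $\widetilde{m_{rj}}-\widetilde{m_{sj}}$ produced by the integer lifts is not canonical and may reduce across the modulus, so the argument must not depend on its exact value. This is resolved by noting that this difference always reduces mod $n$ to $g$, hence always lies in the correct fibre $\pi^{-1}(g)$, and that adding the full kernel $n\bbZ_{mn}$ to any single element of a fibre recovers the entire fibre; thus the specific lift is immaterial. The remaining checks — that distinct pairs $(j,l)$ yield distinct differences, which follows because $\pi$ separates distinct $j$ and the CDM separates distinct $l$ within a fixed fibre — are routine.
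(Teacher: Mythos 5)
Your proposal is correct. Note, however, that the paper does not prove this proposition at all: it is quoted as a known result of Yin \cite{Yin:2005}, so there is no internal proof to compare against. Your argument is the standard product (inflation) construction, and it is sound: indexing columns by pairs $(j,l)$, setting $N_{i,(j,l)}=\widetilde{m_{ij}}+n\,a_{il} \pmod{mn}$, and analyzing differences fibrewise via the reduction $\pi:\bbZ_{mn}\to\bbZ_n$ gives exactly the required covering of $\bbZ_{mn}\setminus H'$ once each, with the column count $(n-w)m=mn-mw$ matching. You also correctly dispose of the only delicate point, namely that the integer lift of $m_{rj}-m_{sj}$ is non-canonical: since any lift lies in the fibre $\pi^{-1}(m_{rj}-m_{sj})$ and adding the full kernel $n\bbZ_{mn}$ to any element of a fibre recovers that fibre, the choice of lift is immaterial; likewise $n(a_{rl}-a_{sl})\bmod mn$ depends only on $a_{rl}-a_{sl}\bmod m$, so the CDM property applies verbatim. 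Your write-up thus supplies a self-contained proof of a statement the paper leaves as a citation.
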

\vskip 10pt

In Proposition \ref{product}, the $(n,k,1;w)$-HDM is said to be {\em inflated} by
the $(m,k,1))$-CDM to produce the $(mn,k,1;mw)$-HDM.

\vskip 10pt
\begin{theorem}[Chang and Miao \cite{ChangMiao:2003}]
\label{mult64}
If there exists an $(m,4,1;2)$-HDM, then there exists a $(64m,4,1;4m)$-HDM and a
$(72m,4,1;12m)$-HDM.
\end{theorem}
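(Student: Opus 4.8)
The engine for a theorem of this shape is the Inflation construction (Proposition \ref{product}), which multiplies both the order and the hole of a holey DM by the order of the inflating cyclic DM. The target parameters factor suggestively: a $(64m,4,1;4m)$-HDM is the inflation of a $(64,4,1;4)$-HDM by an $(m,4,1)$-CDM (orders $64\cdot m=64m$, holes $4\cdot m=4m$), and likewise a $(72m,4,1;12m)$-HDM is the inflation of a $(72,4,1;12)$-HDM by an $(m,4,1)$-CDM. So the first plan is: (i) exhibit the two fixed seed matrices, a $(64,4,1;4)$-HDM over $(\bbZ_{64};\langle 16\rangle)$ and a $(72,4,1;12)$-HDM over $(\bbZ_{72};\langle 6\rangle)$, by explicit listing; and (ii) produce a cyclic $(m,4,1)$-CDM to inflate by.

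The obstruction is step (ii): the hypothesis only supplies an $(m,4,1;2)$-HDM, not a full cyclic DM, and its order-$2$ hole cannot be removed by filling—there is no $(2,4,1)$-DM (for $\bbZ_2$ a $(2,k,1)$-DM forces $k\le 2$), so Proposition \ref{compose} is unavailable with $w=2$. Thus the given holey DM cannot be upgraded to an $(m,4,1)$-CDM, and plain inflation of the seeds is blocked. Inflating the \emph{given} HDM directly by a $(64,4,1)$- or $(72,4,1)$-CDM instead yields holes of constant order $128$ and $144$, which equal $4m$ and $12m$ only for the single values $m=32$ and $m=12$. Hence the hole of the output must be made to scale with $m$ by a construction that genuinely uses the holey DM as such.

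Accordingly I would carry out a direct construction over the cyclic group $\bbZ_{64m}$ (respectively $\bbZ_{72m}$), declaring the hole to be the subgroup of order $4m$, namely $\langle 16\rangle$ (respectively the order-$12m$ subgroup $\langle 6\rangle$). The idea is to interleave the fixed $\bbZ_{64}$-gadget with the given $(m,4,1;2)$-HDM: index the columns of the output by the columns of the seed together with those of the given HDM, and define each entry by a formula of mixed-radix type, e.g. (seed entry) $+\,64\cdot$(HDM entry), reduced modulo $64m$. Under the set bijection $s+64h\leftrightarrow(s,h)$ the prescribed hole $\langle 16\rangle$ corresponds to (order-$4$ subgroup of $\bbZ_{64}$) paired with \emph{all} of $\bbZ_m$, so one must check that, for every pair of rows, the column differences cover $\bbZ_{64m}\setminus\langle 16\rangle$ exactly once—the seed supplying every residue outside $\langle 16\rangle$ in the ``$64$-direction,'' while the given HDM, augmented by the extra seed columns, must supply in the ``$m$-direction'' the full set $\bbZ_m$, including the two differences $\{0,m/2\}$ that the holey DM itself omits.

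The main obstacle is precisely this regularity verification. Because an order-$2$ subgroup requires $2\mid m$, the integer $m$ is even, so $\gcd(64,m)>1$ and $\bbZ_{64m}$ does \emph{not} split as $\bbZ_{64}\oplus\bbZ_m$; the mixed-radix addition carries between the two digits, so it is not a group homomorphism and the difference count cannot be separated coordinatewise. The bookkeeping—confirming that each element outside $\langle 16\rangle$ is hit exactly once and each element of the hole never—must therefore be done inside the single cyclic group, where arranging the seed columns to compensate for the two missing $\bbZ_m$-differences without collision is the delicate point. I would treat $\bbZ_{72m}$ by the same scheme, and in both cases it is this coverage check, rather than the description of the matrices, where the real work lies.
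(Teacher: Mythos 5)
Your diagnosis of why the obvious routes fail is essentially correct, and can even be sharpened: an $(m,4,1;2)$-HDM forces $2\mid m$, and no $(v,3,1)$-CDM (hence no $(v,4,1)$-CDM) exists over $\bbZ_v$ for even $v$ (a standard parity fact), so an $(m,4,1)$-CDM is ruled out outright, not merely ``not obtainable by filling the hole''; for the same reason the $(64,4,1)$- and $(72,4,1)$-CDMs you briefly contemplate do not exist at all. The genuine gap is that what you put in place of the blocked routes is a plan, not a proof, and it stops exactly where the theorem begins. You never exhibit the $\bbZ_{64}$- and $\bbZ_{72}$-gadgets, and, more seriously, the interleaving you describe cannot be a bare product construction: a $(64m,4,1;4m)$-HDM must have $64m-4m=60m$ columns, while pairing the $60$ columns of a $(64,4,1;4)$-HDM with the $m-2$ columns of the given HDM yields only $60(m-2)=60m-120$; the same deficit of $120$ columns occurs in the $(72m,4,1;12m)$ case. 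For each pair of rows, those $120$ missing columns must supply exactly the differences pairing elements outside the order-$4$ (resp.\ order-$12$) hole in the $\bbZ_{64}$- (resp.\ $\bbZ_{72}$-) direction with the two differences $0$ and $m/2$ that the given HDM omits, and this must be verified inside $\bbZ_{64m}$ (resp.\ $\bbZ_{72m}$) itself, where, as you observe, the identification $(s,h)\mapsto s+64h$ is not additive because $\gcd(64,m)>1$. Designing those extra columns so that no element of the hole $\langle 16\rangle$ (resp.\ $\langle 6\rangle$) is ever hit while every element outside it is hit exactly once is the entire content of the theorem, and your proposal explicitly defers it (``where the real work lies'') rather than carrying it out.

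For what it is worth, the paper does not prove this statement either: Theorem \ref{mult64} is imported verbatim from Chang and Miao \cite{ChangMiao:2003}, and the authors rely on the construction published there, so there is no in-paper argument to compare yours against. Judged on its own, your attempt correctly maps the obstructions (which is worth something), but it leaves both the construction and the coverage verification undone, and therefore does not establish the result.
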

\vskip 10pt

\subsubsection{Recursive Constructions for PCDP}
\hfill

\vskip 10pt
\begin{proposition}
\label{product2}
Suppose there exist a $(3u,[3^{u}],3)$-PCDP and a homogeneous
 $(v,3,1)$-CDM.  Then 
there exists a $(3uv,3^{uv},3)$-PCDP.
\end{proposition}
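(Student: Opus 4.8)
The plan is to combine the two ingredients by a ``base-$3u$'' composition: the inner PCDP supplies the residues modulo $3u$, and the cyclic difference matrix supplies the higher-order digit, arranged so that each internal difference of the product object inherits the bound $3$ from the PCDP together with the ``each difference occurs exactly once'' property of the CDM.

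First I would fix notation. Write the given PCDP as $\D=\{D_1,\dots,D_u\}$ with $D_i=\{d_{i,1},d_{i,2},d_{i,3}\}$, each $d_{i,r}$ taken as its integer representative in $\{0,1,\dots,3u-1\}$, and write the homogeneous $(v,3,1)$-CDM as $\vM=(m_{rj})$, $r\in[3]$, $j\in[v]$, each $m_{rj}$ represented in $\{0,1,\dots,v-1\}$. For $i\in[u]$, $j\in[v]$ define
\begin{equation*}
C_{i,j}=\{\,d_{i,r}+3u\,m_{rj}\ :\ r\in[3]\,\}\subseteq\bbZ_{3uv},
\end{equation*}
and let $\C=\{C_{i,j}:i\in[u],\,j\in[v]\}$, a collection of $uv$ blocks of size three. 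I would first verify that $\C$ partitions $\bbZ_{3uv}$: every $x\in\bbZ_{3uv}$ has a unique expression $x=a+3u\,c$ with $a\in\{0,\dots,3u-1\}$, $c\in\{0,\dots,v-1\}$; the residue $a$ lies in a unique block $D_i$, say $a=d_{i,r}$, and since row $r$ of $\vM$ is a permutation of $\bbZ_v$ there is a unique column $j$ with $m_{rj}=c$, whence $x\in C_{i,j}$ for exactly one pair $(i,j)$. The three entries of each $C_{i,j}$ are distinct because distinct $d_{i,r}$ already differ modulo $3u$.

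Next I would bound the difference multiset $\Delta\C$. A typical internal difference of $C_{i,j}$, for $r\neq s$, is
\begin{equation*}
\delta_{i,j,r,s}=(d_{i,r}-d_{i,s})+3u\,(m_{rj}-m_{sj})\in\bbZ_{3uv},
\end{equation*}
whose reduction modulo $3u$ equals the internal difference $d_{i,r}-d_{i,s}$ of $\D$. Fix a nonzero target $g$ and put $\alpha=g\bmod 3u$. If $\alpha=0$, then $g$ lies in the order-$v$ subgroup $3u\bbZ_{3uv}$; but $\delta_{i,j,r,s}\equiv d_{i,r}-d_{i,s}\not\equiv 0\pmod{3u}$ for $r\neq s$, so no internal difference is congruent to $0$ and $g$ occurs $0$ times. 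If $\alpha\neq 0$, then $\delta_{i,j,r,s}=g$ forces $d_{i,r}-d_{i,s}\equiv\alpha\pmod{3u}$; by the PCDP property there are at most three ordered triples $(i,r,s)$ with $d_{i,r}-d_{i,s}=\alpha$ in $\bbZ_{3u}$, and for each such triple the remaining requirement reads $m_{rj}-m_{sj}=\gamma$ for the unique $\gamma\in\bbZ_v$ with $3u\gamma=g-(d_{i,r}-d_{i,s})$. Since the rows $r\neq s$ of the CDM realize every element of $\bbZ_v$ as $m_{rj}-m_{sj}$ exactly once, exactly one column $j$ completes each triple, so $g$ occurs at most three times.

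The step I expect to be the crux is precisely this final count: one must see that the inner PCDP contributes the factor ``at most $3$'' while the CDM contributes ``exactly $1$,'' so the product bound stays at $3$ rather than inflating, and one must correctly dispose of the differences that would fall in the order-$v$ subgroup. Granting it, $\C$ is a $(3uv,[3^{uv}],\lambda)$-PCDP with $\lambda\le 3$; since $n=3uv$ and $m=uv$ give $\mu=3$ and $\epsilon=0$, the lower bound (\ref{1.4}) forces $\lambda\ge 3$, so $\lambda=3$ and $\C$ is the required (indeed optimal) $(3uv,[3^{uv}],3)$-PCDP.
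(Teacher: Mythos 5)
Your construction is exactly the paper's: for each base block of the $(3u,[3^u],3)$-PCDP and each column $[d_0,d_1,d_2]^T$ of the homogeneous $(v,3,1)$-CDM, form the block whose $r$-th element is shifted by $3u$ times the row-$r$ entry, and the paper leaves as ``easy to check'' precisely the partition and difference-counting verification you spell out. Your proof is correct and essentially identical in approach, merely more detailed (including the correct observation that differences in the order-$v$ subgroup never occur and the optimality remark via the bound (\ref{1.4})).
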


\begin{proof}
 For each base block $\{a,b,c\}$ in the  $(3u,[3^{u}],3)$-PCDP, 
 we construct $v$ base blocks $\{a+3ud_0,b+3ud_1,c+3ud_2\}$, where 
 $[d_0,d_1,d_2]^T$ is a column of the homogeneous $(v,3,1)$-CDM.
It is easy to check that the resulting collection of base blocks is a PCDP.
\end{proof}
\vskip 10pt

\begin{proposition}
\label{6q}
Suppose there exists a $(6m,4,1;6)$-HDM. Then there exists an $(18m,[3^{6m}],3)$-PCDP.
\end{proposition}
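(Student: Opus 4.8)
The plan is to imitate the column construction of Proposition \ref{pdmpcdp}, applied to the homogeneous form of the given holey matrix, and then to repair the resulting hole by hand. First I would pass from the $(6m,4,1;6)$-HDM to the equivalent homogeneous $(6m,3,1;6)$-HDM $\vM=(m_{ij})$ over $(\bbZ_{6m};H)$, where $H$ is the subgroup of order $6$; each row of $\vM$ runs over $\bbZ_{6m}\setminus H$, and for any two rows the differences $m_{rj}-m_{sj}$ run over $\bbZ_{6m}\setminus H$ exactly once. From each column $[a,b,c]^{T}$ I form the triple $\{3a,\,3b+1,\,3c+2\}\subseteq\bbZ_{18m}$. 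These $6m-6$ triples are pairwise disjoint and cover everything outside the $18$-element set $U=\{3h+r:h\in H,\ r\in\{0,1,2\}\}$, so $6$ further blocks (partitioning $U$) are needed to reach the required $6m$ base blocks.

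Next I would bookkeep the differences produced so far. Because the three entries of each column block lie in distinct residue classes modulo $3$, the column blocks produce no difference divisible by $3$ at all; and the holey-matrix property shows that a nonzero residue-$1$ (resp.\ residue-$2$) element $3t+\rho$ is hit three times unless $t$ lies in $H$ or in $H-1$, where it is hit strictly fewer times. Thus the only slack left for further blocks sits on the multiples of $3$ inside $3H$ and on the residue-$1$, residue-$2$ elements indexed by $t\in H\cup(H-1)$ --- which is exactly where the internal differences of a partition of $U$ must land, since $H$ is a subgroup. The remaining task is therefore to partition $U$ into $6$ triples whose internal differences respect these (asymmetric) residual capacities, after which $\D$ (the column blocks together with the six filling blocks) is a partition of $\bbZ_{18m}$ with $\Phi_\D(g)\le 3$ for all $g$, hence a $(18m,[3^{6m}],3)$-PCDP; optimality is immediate from (\ref{1.4}) since $18m=(6m)\cdot 3$.

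This last step is the crux, and it is where the obvious approach breaks down. One would like the six filling blocks to be ``rainbow'' triples $\{3a,3b+1,3c+2\}$ with $a,b,c\in H$, so that (as in the main construction) they contribute no residue-$0$ difference; but then forcing the residue-$1$ and residue-$2$ counts to fit the capacities would require $b-a$, $c-b$, and $c-a$ each to run over all of $H\cong\bbZ_6$, that is, a $(6,3,1)$-CDM --- which does not exist, since $\bbZ_6$ admits no complete mapping. Consequently some filling blocks must be non-rainbow, deliberately spending the unused capacity on $3H$ in order to relieve the over-constrained residue-$1$/residue-$2$ classes. The natural source of a correct pattern is a small $(18,[3^6],3)$-PCDP transported to $U$ via $3q+s\mapsto 3qm+s$; but this map is \emph{not} difference-preserving, because a difference crossing from residue $r'$ down to a smaller residue $r$ incurs a carry that shifts $t$ by $1$ (from $H$ to $H-1$). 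I would therefore resolve the proposition by exhibiting six explicit triples on $U$ --- a fixed pattern on the coordinates $(h,r)\in H\times\{0,1,2\}$, independent of $m$ --- and verifying the capacity inequalities directly, tracking this carry position by position; that verification is the only genuinely delicate part of the argument.
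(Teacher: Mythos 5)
Your reduction is set up exactly along the paper's route: pass to the homogeneous $(6m,3,1;6)$-HDM, form the $6m-6$ column blocks $\{3a,3b+1,3c+2\}$, and compute the residual capacities left for six blocks partitioning the $18$ uncovered elements. Your bookkeeping is correct (no residue-$0$ differences are consumed; the classes $3\delta m\pm 1$, $\delta m\in H$, retain capacity $2$; the classes $3\delta m\pm 2$ retain capacity $1$; everything else of residue $1,2$ is saturated), and both of your structural observations are sound: an all-rainbow filling pattern is impossible (your sum/complete-mapping obstruction), and transporting an $(18,[3^6],3)$-PCDP along $3q+s\mapsto 3qm+s$ is not difference-preserving because of the carry. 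But then the proposal stops. The six explicit triples, which you yourself call the crux, are never produced, and their existence under the asymmetric capacity constraints is the entire content of the proposition beyond bookkeeping. As submitted, this is an outline with the decisive step missing, not a proof.

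What makes the gap worth flagging concretely is that the paper's own proof founders at exactly the spot you identified. The paper lists the blocks $\{0,1,2\}$, $\{m,m+2,3m\}$, $\{m+1,3m+1,4m+1\}$, $\{2m,3m+2,5m+1\}$, $\{2m+1,4m+2,5m+2\}$, $\{2m+2,4m,5m\}$; as literally written these do not even lie in the uncovered set, and under the only reading that makes them partition it (interpreting $jm+i$ as $3jm+i$) they are precisely the $m=1$ pattern pushed forward by the carry-ignoring map you warned against. They then fail your capacity test: $\{0,1,2\}$ and $\{3m,3m+2,9m\}$ each contain a pair at difference $2$, while the column blocks already realize difference $2$ twice (once from the column with $a-b=1$, once from the column with $b-c=1$), so $\Phi_\D(2)=4>3$; the same violation occurs at $6m-2$. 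The gap is fillable, however: the six triples
\begin{align*}
&\{0,\,2,\,3m\},\quad \{1,\,3m+1,\,9m+2\},\quad \{3m+2,\,9m,\,12m+1\},\\
&\{6m,\,6m+1,\,12m+2\},\quad \{6m+2,\,9m+1,\,15m\},\quad \{12m,\,15m+1,\,15m+2\}
\end{align*}
partition the uncovered set, and a direct check shows that for every $m\ge 2$ they use each class $3\delta m\pm 2$ at most once, each class $3\delta m\pm 1$ at most twice, and among residue-$0$ differences only $\pm 3m$, each twice; combined with the column counts this gives $\Phi_\D(g)\le 3$ for all $g\in\bbZ_{18m}^\star$. (For $m=1$ the matrix is empty and the paper's six blocks do form an $(18,[3^6],3)$-PCDP, as one verifies directly.) Your proposal completed by such a list, with the position-by-position verification you promised, would be a correct proof --- and in fact a repair of the paper's --- but without it the proposition remains unproved.
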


\begin{proof}
Suppose there exists a $(6m,4,1;6)$-HDM, and hence a homogeneous
$(6m,3,1;6)$-HDM.  
For each column $[a,b,c]^{T}$ of the matrix, we construct a base block
$\{3a,3b+1,3c+2\}$. Then add the six base blocks $\{0,1,2\}$, $\{m,m+2,3m\}$, 
$\{m+1,3m+1,4m+1\}$, $\{2m,3m+2,5m+1\}$, $\{2m+1,4m+2,5m+2\}$, $\{2m+2,4m,5m\}$.
This results in an $(18m,[3^{6m}],3)$-PCDP.
\end{proof}
\vskip 10pt

\subsection{General Existence of Difference Matrices}

\begin{figure*}[!t]
\begin{equation}
\label{wide1}
B_1 = \left\{ 
\begin{bmatrix}
y_0 \\
cy_1 \\
(c+1)y_1
\end{bmatrix},  
\begin{bmatrix}
y_1 \\
-cy_1 \\
 -(c-1)y_0
\end{bmatrix}, 
\begin{bmatrix}
-y_0 \\
-cy_0 \\
-(c+1)y_0
\end{bmatrix}, 
\begin{bmatrix}
-y_1 \\
cy_0 \\
 (c-1)y_1
\end{bmatrix}
: y \in \bbZ_m^\boxtimes \right\}. 
\end{equation}
\begin{equation}
\label{wide2}
B_2 = \left\{ 
\pm \begin{bmatrix}
y_1 \\
cy_1 \\
 (1+c)y_0
\end{bmatrix}, 
\pm \begin{bmatrix}
cy_0 \\
y_1 \\
(1+c)y_1
\end{bmatrix},
\pm \begin{bmatrix}
c^2 y_0 \\
 cy_0 \\
 c(1+c)y_0 
 \end{bmatrix}, 
 \pm \begin{bmatrix}
 cy_1 \\
  c^2y_0 \\
   c(c+1)y_1
   \end{bmatrix}
: y \in \bbZ_m^\square \right\}.
\end{equation}
\hrulefill
\vspace*{4pt}
\end{figure*}

\begin{proposition}
\label{case2}
If $m >3 $ is prime and $m\equiv 3 \pmod{4}$, then there exists an extendible 
PDM$(2m)$.
\end{proposition}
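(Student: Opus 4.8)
The plan is to give a direct cyclotomic construction, exhibiting the columns of the desired matrix explicitly as the families $B_1$ and $B_2$ in \eqref{wide1} and \eqref{wide2}. First I would fix the identification $\bbZ_{2m}\cong\bbZ_m\times\bbZ_2$ (legitimate since $m$ is odd) and use the shorthand $a_b=(a,b)$, so that each column of $B_1$ and $B_2$ is a genuine triple over $\bbZ_{2m}$ once a constant $c\in\bbZ_m^\star$ is chosen. The whole construction is driven by the hypothesis $m\equiv 3\pmod 4$, which forces $-1\in\bbZ_m^\boxtimes$; hence multiplication by $-1$ interchanges $\bbZ_m^\square$ and $\bbZ_m^\boxtimes$. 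This is exactly why $B_1$ is indexed by the nonresidues while $B_2$ is indexed by the residues and carries an explicit $\pm$: the $-1$-symmetry is what lets the two families interlock and cover the right difference sets. The free parameter $c$ would be pinned down last, after seeing which cyclotomic relations the verification demands.

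With the columns in hand, I would verify the three defining properties of an extendible PDM in order. Step one: show that within each of the three rows the listed entries are pairwise distinct, and read off the three triples $D_i=\bbZ_{2m}\setminus\{\text{entries in row }i\}$ of missing elements. Step two, the heart of the matter: for each of the three ordered pairs of distinct rows $r,s$, show that the column differences $m_{rj}-m_{sj}$ hit each element of $\bbZ_{2m}$ at most once. I would organize this by splitting $\bbZ_{2m}$ according to its $\bbZ_2$-component and then tracking the $\bbZ_m$-parts, which reduces each case to a statement that certain dilates of $\bbZ_m^\square$ and $\bbZ_m^\boxtimes$ (by $c$, $c\pm1$, $c^2$, and so on) partition $\bbZ_m^\star$ without overlap. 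Step three: compute the combined difference multiset $\bigcup_{i=1}^3\{x-y:x,y\in D_i,\ x\neq y\}$ and check that, together with the row-difference condition, every element of $\bbZ_{2m}$ is covered at most three times, which is the extendibility clause of the definition. Once all three hold, Proposition \ref{pdmpcdp} immediately upgrades the PDM$(2m)$ to a $(6m,[3^{2m}],3)$-PCDP.

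The main obstacle is clearly step two, the row-difference condition. Unlike a free verification, it cannot be checked term by term for general $m$; instead it must be reduced to identities among cyclotomic numbers of order dividing $6$ (the exponents $1,c,c\pm1,c^2$ strongly suggest an order-$6$ structure, with $c$ behaving like a primitive sixth root of unity, $c^2=c-1$). The delicate point is to choose $c$ so that the four difference patterns coming from $B_1$ and the eight coming from the $\pm$-pairs of $B_2$ are mutually disjoint and, after the $-1$-symmetry is applied, exhaust each residue class exactly the allowed number of times; any coincidence among these dilates would produce a repeated difference and break the construction. I expect the computation to come down to a short list of non-vanishing conditions on $c$, $c\pm1$, and $c^2-c+1$ modulo $m$, each of which holds because $m>3$ is prime, so that a valid $c$ always exists. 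Verifying extendibility in step three should then be routine given the explicit triples $D_i$.
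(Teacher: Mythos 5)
You have the right general setting (work over $\bbZ_m\times\bbZ_2\simeq\bbZ_{2m}$, exploit $-1\in\bbZ_m^\boxtimes$, finish with Proposition \ref{pdmpcdp}), but the construction you describe is not viable. The families $B_1$ and $B_2$ are alternatives, not partners: $B_1$, indexed by $y\in\bbZ_m^\boxtimes$, is the \emph{entire} construction for this proposition ($m\equiv 3\pmod 4$), while $B_2$ belongs to the companion Proposition \ref{case1} ($m\equiv 1\pmod 4$, where $-1$ is a residue and a different family is needed). The column count exposes the problem at once: a PDM$(2m)$ is a $3\times(2m-3)$ matrix, yet $B_1$ alone has $4\cdot|\bbZ_m^\boxtimes|=2m-2$ columns --- for a suitable $c$ it is by itself a full homogeneous $(2m,3,1;2)$-HDM over $\bbZ_m\times\bbZ_2$ --- and $B_2$ as written in (\ref{wide2}) would add $4(m-1)$ more. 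With both families present, nonzero differences necessarily repeat, and no choice of $c$ can repair this; the ``interlocking'' of the two families that you invoke does not exist. You also omit the surgery that actually produces the PDM: from $B_1$ one removes the two columns $[1_0,c_0,(c+1)_0]^T$ and $[4_0,4c_0,4(c+1)_0]^T$ (these lie in $B_1$ because $-1,-4\in\bbZ_m^\boxtimes$) and adjoins the column $[0_1,0_1,0_1]^T$, which brings the number of columns to $2m-3$ and makes the missing sets genuine triples, namely $\{0_0,1_0,4_0\}$, $\{0_0,c_0,4c_0\}$, $\{0_0,(c+1)_0,4(c+1)_0\}$, whose internal differences are what the extendibility clause is checked against. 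Without this step each row misses only the two hole elements $0_0,0_1$, and the object is not a PDM$(2m)$ at all.

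The second gap is the condition on $c$. There is no sixth-root-of-unity structure and no order-$6$ cyclotomy here: the row differences arising from the four column types of $B_1$ are $\pm y$, $\pm cy$, $\pm(c-1)y$, $\pm(c+1)y$ with $y$ ranging over $\bbZ_m^\boxtimes$, distributed over the two $\bbZ_2$-classes, and since $-1\in\bbZ_m^\boxtimes$ the requirement that each element of $\bbZ_m\times\bbZ_2$ occur at most once as a difference collapses to the single order-$2$ condition $c^2-1=(c-1)(c+1)\in\bbZ_m^\boxtimes$, i.e., exactly one of $c-1$, $c+1$ is a quadratic residue. This is also precisely where the hypothesis $m>3$ enters: for $m=3$ one has $c^2-1\equiv 0\pmod 3$ for every $c\in\bbZ_3^\star$, so no valid $c$ exists, whereas for every prime $m>3$ such a $c$ does exist. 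Your guessed list of requirements (non-vanishing of $c$, $c\pm 1$, $c^2-c+1$, with $c^2=c-1$) is neither necessary nor sufficient, so even the reduction you defer to ``a short list of conditions on $c$'' would not close the argument.
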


\begin{proof}
 We employ the construction of Dinitz and Stinson \cite{DinitzStinson:1983}
 for a $(2m,4,1;2)$-HDM over $\bbZ_m \times \bbZ_2$.
Choose any $c \in \bbZ_{m}^\star$ such that $c^2 - 1 \in \bbZ_m^\boxtimes$
(this is where $m>3$ is required). Now let
$B_1$ be as defined in (\ref{wide1}). The $2(m-1)$ columns in $B_1$ form a 
homogeneous $(2m,3,1;2)$-HDM over  $\bbZ_m \times \bbZ_2$.
Remove the
columns $[1_0, c_0, (c+1)_0]^T$, $[4_0, 4c_0, 4(c+1)_0]^T$,
and add the column $[0_1,0_1,0_1]^T$.  
It is easy to check that this results in an  extendible 
PDM$(2m)$. 
The sets of elements missing from the first row to the last row are
$\{0_0,1_0,4_0\}$, $\{0_0, c_0, 4c_0\}$, and 
$\{0_0, (c+1)_0, 4(c+1)_0\}$, respectively.
Finally, we
note that $\gcd(m,2)=1$, and hence $\bbZ_{m} \times \bbZ_{2} \simeq \bbZ_{2m}$, which
is cyclic. 
\end{proof}
\vskip 10pt

\begin{proposition}
\label{case1}
If $m \equiv 1 \pmod{4}$ is a prime power, then there exists an extendible PDM$(2m)$.
\end{proposition}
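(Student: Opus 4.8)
The plan is to reproduce the structure of the proof of Proposition \ref{case2}, replacing the column set $B_1$ of (\ref{wide1}) by the set $B_2$ of (\ref{wide2}). Which set is correct is governed by the status of $-1$ as a square: for $m\equiv1\pmod{4}$ one has $-1\in\bbZ_m^\square$, so as $y$ ranges over $\bbZ_m^\square$ the set $B_2$---written with an explicit $\pm$ and hence closed under negation---stays inside $\bbZ_m^\square$ while still supplying every difference together with its negative, which is precisely the negation-closure an HDM demands. Reading the squares in the field $\bbF_m$ so as to include prime powers, the $2(m-1)$ columns of $B_2$ are the natural candidate for a homogeneous $(2m,3,1;2)$-HDM over $\bbF_m\times\bbZ_2$ with hole $\{0\}\times\bbZ_2$.

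First I would fix an admissible $c\in\bbF_m^\star$, the $m\equiv1\pmod{4}$ analogue of the requirement $c^2-1\in\bbZ_m^\boxtimes$ imposed in Proposition \ref{case2}, and confirm that such a $c$ exists for every prime power $m\equiv1\pmod{4}$---as with the restriction $m>3$ there, I anticipate that at most a short list of small $m$ might need to be excluded or dispatched by hand. With $c$ chosen, the next step is the routine but unavoidable bookkeeping that establishes the HDM property: for each of the three row-pairs of $B_2$ one forms the list of entrywise differences and checks that, as $y$ runs through $\bbZ_m^\square$, they run exactly once through $(\bbF_m\times\bbZ_2)\setminus(\{0\}\times\bbZ_2)$. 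I would then delete two carefully selected columns of $B_2$ and append the column $[0_1,0_1,0_1]^T$, in exact parallel with the deletions and addition performed in Proposition \ref{case2}. The appended column absorbs the $0_1$ that the hole leaves absent from every row, so that each row ends up missing $0_0$ together with the two relevant coordinate entries of the deleted columns; the two columns must be picked so that these three missing triples are exactly the ones to be recorded for the extendible PDM.

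The hard part will be the extendibility check. Since each of the three row-pairs of an HDM contributes every nonhole difference exactly once, the surviving columns already carry close to three copies of each element of $(\bbF_m\times\bbZ_2)\setminus(\{0\}\times\bbZ_2)$; the two deletions together with the within-triple differences of the three missing sets must therefore cancel and refill with enough precision that the combined multiset meets the bound $3$ rather than overshooting to $4$. It is exactly this balancing that forces which two columns---and hence which values of $y$ and which entries---may be removed, and carrying it out is where the real work lies. Finally, since $m$ is odd, $\gcd(m,2)=1$: for $m$ prime this yields $\bbZ_m\times\bbZ_2\cong\bbZ_{2m}$ and a genuine cyclic extendible PDM$(2m)$, while for a proper prime power the construction lives in $\bbF_m\times\bbZ_2$, and identifying it with the cyclic model $\bbZ_{2m}$ (equivalently, carrying the corresponding reading through to the downstream PCDP) is the last point to be settled.
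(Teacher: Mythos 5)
Your outline does follow the same route as the paper: take the Dinitz--Stinson columns $B_2$ of (\ref{wide2}) as a homogeneous $(2m,3,1;2)$-HDM over $\bbZ_m\times\bbZ_2$, delete two columns, append $[0_1,0_1,0_1]^T$, and invoke $\bbZ_m\times\bbZ_2\simeq\bbZ_{2m}$. But the two concrete choices that constitute the proof are exactly what you leave open, and your one guess about them is wrong. The admissibility condition on $c$ is not an analogue of $c^2-1\in\bbZ_m^\boxtimes$: the paper takes $c\in\bbZ_m^\boxtimes$, an arbitrary quadratic nonresidue. The reason is that with $-1$ a square, each row pair of $B_2$ produces, in each $\bbZ_2$-coordinate, differences filling two cosets $u\bbZ_m^\square$ and $uc\,\bbZ_m^\square$ (for rows $1,2$ these are $(1-c)\bbZ_m^\square$ and $c(1-c)\bbZ_m^\square$), and these cover $\bbZ_m^\star$ exactly once each precisely because $c$ is a nonsquare; such $c$ exists for every odd $m$, so unlike Proposition \ref{case2} no small cases need excluding. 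Likewise, ``carefully selected columns'' has a concrete answer you never supply: delete $\pm[1_1,c_1,(1+c)_0]^T$, so that with the appended column the missing sets become $\{0_0,1_1,-1_1\}$, $\{0_0,c_1,-c_1\}$, and $\{0_0,(c+1)_0,-(c+1)_0\}$.

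The more serious gap is that your ``hard part'' rests on a misreading of extendibility, and under your reading the construction could never succeed. The definition imposes two \emph{separate} conditions: each row pair of the surviving matrix repeats no difference (automatic after deleting columns from an HDM, which hits each nonhole element exactly once per row pair), and, separately, the multiset of differences taken \emph{within} the three missing sets hits each element at most three times. These two multisets are never added together: in Proposition \ref{pdmpcdp} the column blocks yield only differences $\not\equiv 0\pmod 3$ while the missing-set blocks yield only differences $\equiv 0\pmod 3$, which is exactly why the index-$3$ requirement decouples. If one had to bound the combined multiset by $3$, as you propose, even the paper's own choice would fail: for instance $1_1$ survives in two of the three row pairs after deletion and then appears twice more inside the first missing set, giving multiplicity $4$. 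Under the correct reading the check is short and finite: since $c$ is a nonsquare (hence $c\neq\pm 1$), the within-set differences are $1_1,(-1)_1,c_1,(-c)_1,(c+1)_0,(-(c+1))_0$ each twice and $\pm 2_0,\pm(2c)_0,\pm(2(c+1))_0$ each once, and no element accumulates more than three occurrences. Finally, your closing doubt about proper prime powers is legitimate --- over $\bbZ_m$ the construction needs every nonzero element to be a unit (so $m$ prime), while over $\bbF_m$ the group $\bbF_m\times\bbZ_2$ is not cyclic --- but this is a defect of the paper's statement as well, and the paper only ever invokes this proposition for $m$ prime (Propositions \ref{2,10mod12} and \ref{0mod6}); flagging it does not repair the gaps above.
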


\begin{proof}
We employ the construction of Dinitz and Stinson \cite{DinitzStinson:1983}
 for a $(2m,4,1;2)$-HDM over $\bbZ_m \times \bbZ_2$.
  Let $\omega$ be a primitive root in $\bbZ_{m}$ and let $c\in\bbZ_m^\boxtimes$. 
  Let $t = (m-1)/4$, and define $Q = \{\omega^0,\omega^2, \ldots, \omega^{2t-2}\}$.  Note that
\begin{equation*}
Q \cup (-Q) \cup cQ \cup (-cQ) = \bbZ_m \setminus \{0\}.
\end{equation*}
Now let $B_2$ as defined in (\ref{wide2}).
Remove from $B_2$
the columns $\pm[1_1,c_1,(1+c)_0]^T$ and add the column $[0_1,0_1,0_1]^T$.
The sets of elements missing from the first row to the last row are
$\{0_0, 1_1, -1_1\}$, $\{0_0, c_1, -c_1\}$, and 
$\{0_0, (c+1)_0, -(c+1)_0\}$, respectively.
Finally, we
note that $\gcd(m,2)=1$, and hence $\bbZ_{m} \times \bbZ_{2} \simeq \bbZ_{2m}$,
which is cyclic.
\end{proof}
\vskip 10pt

\begin{theorem}[Yin \cite{Yin:2005}]
\label{hdm}
 Let $m \geq 4$ be a product of the form
$2^{\alpha} 3^{\beta} p_1^{\alpha_1} \ldots p_t^{\alpha_t}$, where $p_j \geq 5$, $j\in[t]$.
Then there exists a $(2m,4,1;w)$-HDM if one of the following conditions is satisfied:
\begin{enumerate}[(i)]
\item $w =2$ and $(\alpha,\beta) \neq (1,0)$ or $(0,1)$;
\item $w=4$ and $(\alpha,\beta) = (1,0)$;
\item $w=6$ and $(\alpha,\beta) = (0,1)$.
\end{enumerate}
\end{theorem}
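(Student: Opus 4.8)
The plan is to prove Theorem~\ref{hdm} by combining the small-case ingredients already established with the two multiplicative (inflation) results, using the prime factorization of $m$ to route each case to an appropriate construction. First I would observe that the three cases are distinguished entirely by the $2$-adic and $3$-adic parts of $m$: case (ii) is the pure power-of-two-times-odd situation $(\alpha,\beta)=(1,0)$, case (iii) is the pure-$3$ situation $(\alpha,\beta)=(0,1)$, and case (i) is everything else. Since a $(2m,4,1;w)$-HDM is equivalent (via inflation in Proposition~\ref{product} and the doubling/tripling results of Theorem~\ref{mult64}) to building up from a small seed HDM, the strategy is to identify for each residue class a seed $(2m_0,4,1;w)$-HDM on a small factor $m_0$ and then inflate it by a $(d,4,1)$-CDM where $d=m/m_0$.

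The key steps, in order, would be as follows. I would write $m=2^{\alpha}3^{\beta}p_1^{\alpha_1}\cdots p_t^{\alpha_t}$ with each $p_j\ge 5$, and first handle the genuinely small prime cases directly: Proposition~\ref{case1} supplies an extendible PDM$(2p)$ (and in particular the underlying $(2p,4,1;2)$-HDM) whenever $p\equiv 1\pmod 4$ is a prime power, and Proposition~\ref{case2} supplies one whenever $p\equiv 3\pmod 4$ is prime with $p>3$. These give the seed HDMs with $w=2$ for all the primes $p_j\ge 5$ and their prime-power parts. Next I would invoke Proposition~\ref{product} (Inflation) to multiply a seed $(2m_0,4,1;w)$-HDM by any $(d,4,1)$-CDM, noting that the needed CDMs of order $d$ (whose prime factors are all at least $5$) exist by Proposition~\ref{CDM-1} since $k=4\le p-1$ when $p\ge 5$. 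Inflation by such a $d$ preserves $w$, which is exactly what case (i) with $w=2$ requires once at least one factor of the right form is present. For case (ii), where $(\alpha,\beta)=(1,0)$ so that the $2$-part is exactly $2$ and there is no factor of $3$, I would use Theorem~\ref{mult64}: starting from an $(m',4,1;2)$-HDM on the odd part and applying the $64m\mapsto(64m,4,1;4m)$ map absorbs the extra factors of two while forcing $w=4$. For case (iii), where $(\alpha,\beta)=(0,1)$, the companion map $m\mapsto(72m,4,1;12m)$ from Theorem~\ref{mult64}, together with the factor $3$, produces the required $w=6$ hole.

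The main obstacle, and the step I would spend the most care on, is the bookkeeping at the boundary between the three cases: I must verify that every admissible $m$ in case (i) actually contains a factor that admits a seed HDM with $w=2$, and that the excluded pairs $(\alpha,\beta)=(1,0),(0,1)$ are precisely the ones where no such $w=2$ seed can be assembled from the ingredients, so that cases (ii) and (iii) are genuinely needed rather than redundant. Concretely, when $t\ge 1$ there is a prime $p_j\ge 5$ giving a $w=2$ seed via Propositions~\ref{case1}--\ref{case2}, and inflation by the remaining cofactor (which has all prime factors $\ge 5$, or is handled by the doubling/tripling of Theorem~\ref{mult64} for the residual powers of $2$ and $3$) completes case (i); the delicate point is confirming that the $2$- and $3$-parts can always be carried along without changing $w$ from $2$ except in the two forbidden configurations. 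I would also double-check the arithmetic that the inflation factor $d$ meets the hypotheses of Proposition~\ref{CDM-1}, namely that $k=4$ does not exceed $p-1$ for its smallest prime factor $p\ge 5$. Once these case distinctions are pinned down, each branch reduces to a direct application of an already-proved inflation lemma, and the existence assertions follow.
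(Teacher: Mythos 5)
This statement is not proved in the paper at all: it is quoted as a known result of Yin \cite{Yin:2005}, so there is no internal proof to compare yours against; judged on its own terms, your reconstruction has a fatal flaw at its core. Proposition \ref{product} does \emph{not} preserve the hole size: inflating an $(n,4,1;w)$-HDM by an $(m,4,1)$-CDM yields an $(mn,4,1;mw)$-HDM, with the hole inflated by the same factor $m$. Your case (i) rests on the assertion that ``inflation by such a $d$ preserves $w$,'' which is false; after inflating a $(2m_0,4,1;2)$-HDM by a $(d,4,1)$-CDM the hole has size $2d$, not $2$. The paper contains no device for shrinking an HDM's hole back down: Proposition \ref{compose} fills a hole with an extendible PDM but outputs an extendible PDM, not an HDM, and the HDM-in-HDM filling lemma (fill an $(n,4,1;w)$-HDM with a $(w,4,1;w')$-HDM to get an $(n,4,1;w')$-HDM) is exactly the missing ingredient a proof of this kind would need. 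The same misreading undermines your cases (ii) and (iii): Theorem \ref{mult64} produces a $(64m,4,1;4m)$-HDM and a $(72m,4,1;12m)$-HDM, i.e., holes of size $4m$ and $12m$ that grow with $m$, not HDMs with the fixed hole sizes $w=4$ and $w=6$ that the statement asserts.

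There is also a coverage gap independent of the hole-size issue. Your only $w=2$ seeds come from Propositions \ref{case1} and \ref{case2}, which handle orders $2q$ with $q$ a prime power congruent to $1 \pmod 4$ or a prime congruent to $3 \pmod 4$ exceeding $3$. These cannot seed $m=4$ (which by case (i) requires an $(8,4,1;2)$-HDM), nor any pure power of $2$, nor values such as $m=12$ whose prime factors are all in $\{2,3\}$; indeed the paper itself obtains its $(8,4,1;2)$-HDM by \emph{invoking} Theorem \ref{hdm}, i.e., in the direction opposite to your derivation. Finally, the ``delicate point'' you propose to spend the most care on --- showing the exclusions $(\alpha,\beta)=(1,0),(0,1)$ are forced --- is not part of the statement, which asserts only existence; effort there would not close the real gaps above. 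A correct structural proof needs direct constructions for the small seed orders (powers of $2$ and $3$, and $2p$, $4p$, $6p$) together with an HDM filling lemma, both of which live in \cite{Yin:2005} and \cite{ChangMiao:2003} but not in this paper.
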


%
%

\subsection{Piecing Together}

The easier case when $m$ is odd is first addressed.

\subsubsection{The Case $m\equiv 1\pmod{2}$}
\hfill

\vskip 10pt
\begin{proposition}
\label{odd}
If $m$ is odd, then there exists a $(3m,[3^m],3)$-PCDP.
\end{proposition}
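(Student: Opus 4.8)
The plan is to reduce everything to Theorem \ref{Con-1}, which with $\mu=3$ turns a homogeneous $(m,3,1)$-CDM directly into a $(3m,[3^m],3)$-PCDP. So the whole question becomes: for which odd $m$ does such a cyclic difference matrix exist, and how do I repair the cases where it does not? The natural split is on whether Ge's result (Proposition \ref{CDM-2}) applies.

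First I would dispose of the main case. Suppose $m\geq 5$ is odd and $\gcd(m,27)\neq 9$. Then Proposition \ref{CDM-2} supplies a homogeneous $(m,3,1)$-CDM, and feeding it into Theorem \ref{Con-1} with $\mu=3$ yields an optimal $(3m,[3^m],3)$-PCDP. This already covers every odd $m\geq 5$ except those with $9\mid m$ and $27\nmid m$ (the only residues where $\gcd(m,27)=9$).

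The remaining family -- odd $m$ with $\gcd(m,27)=9$ -- is the main obstacle, since it is exactly where Proposition \ref{CDM-2} is silent. Here I would factor out the offending $9$: write $m=9t$ with $t$ odd and $3\nmid t$, so that $\gcd(t,27)=1\neq 9$. When $t=1$ this is just $m=9$, already handled by Lemma \ref{pdcp9}. When $t\geq 5$, Proposition \ref{CDM-2} does produce a homogeneous $(t,3,1)$-CDM. Taking $u=9$, Lemma \ref{pdcp9} supplies a $(27,[3^9],3)$-PCDP, and the product construction of Proposition \ref{product2} (with $v=t$) combines these into a $(3\cdot 9\cdot t,[3^{9t}],3)=(3m,[3^m],3)$-PCDP. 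The point of the factorization is precisely to absorb the divisibility-by-$9$ into the prefabricated $(27,[3^9],3)$-PCDP, leaving a CDM-friendly cofactor $t$ with $\gcd(t,27)=1$.

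Finally I would clean up the small residual values $m\in\{1,3\}$ not reached by the argument above: $m=3$ is given by Lemma \ref{pdcp9}, and $m=1$ is the trivial partition $\{0,1,2\}$ of $\bbZ_3$. I expect no genuine difficulty outside the $\gcd(m,27)=9$ family; every other value is an immediate appeal to an already-established construction, so the proof is essentially a bookkeeping of cases, with the single nontrivial idea being the $m=9t$ factorization that routes the exceptional $m$ through the product construction.
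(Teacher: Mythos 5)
Your proof is correct and takes essentially the same route as the paper: both reduce to Theorem~\ref{Con-1} via the existence of homogeneous $(m,3,1)$-CDMs and absorb the problematic power of~$3$ with the product construction of Proposition~\ref{product2} applied to the small PCDPs of Lemma~\ref{pdcp9}. The only difference is bookkeeping: the paper splits on $\gcd(m,27)\in\{1,3,9,27\}$ and sends both $\gcd=3$ and $\gcd=9$ through the product construction, whereas you invoke Ge's result (Proposition~\ref{CDM-2}) for every case except $\gcd(m,27)=9$, which is the lone case you route through Proposition~\ref{product2}.
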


\begin{proof}
 If $(m,27)=1$, then the result follows from Theorem \ref{Con-1}.
If $(m,27)\in\{3,9\}$, then apply Proposition \ref{product2} 
with $u=\gcd(m,27)$ and $v=m/\gcd(m,27)$.
The existence of the ingredients is provided by Proposition \ref{CDM-01} and Lemma \ref{pdcp9}.
If $(m,27)=27$, then the result follows from Theorem \ref{Con-1} since
there exists a homogeneous $(m,3,1)$-CDM \cite{Yin:2005}.
\end{proof}
\vskip 10pt

\subsubsection{The Case $m\equiv 0\pmod{2}$}
\hfill

\vskip 10pt
\begin{proposition} 
\label{2,10mod12}
If $m\equiv 2,10 \pmod{12}$ and $m>2$, then there exists a $(3m,[3^m],3)$-PCDP.
\end{proposition}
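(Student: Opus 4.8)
The plan is to reduce the problem to the construction of an extendible partial difference matrix and then build that object from a holey difference matrix. First I observe that $m\equiv 2,10\pmod{12}$ is equivalent to $m=2m'$ with $m'$ odd and $3\nmid m'$, i.e.\ $\gcd(m',6)=1$ and $m'\geq 5$ (the smallest case being $m=10$, $m'=5$). By Proposition~\ref{pdmpcdp} it suffices to exhibit an extendible PDM$(m)$, since this immediately yields the desired $(3m,[3^m],3)$-PCDP.

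The main construction starts from Theorem~\ref{hdm}. Writing $m'=p_1^{\alpha_1}\cdots p_t^{\alpha_t}$ with every $p_j\geq 5$, we have $(\alpha,\beta)=(0,0)$ in the notation there, so condition~(i) supplies a $(2m',4,1;2)$-HDM, i.e.\ an $(m,4,1;2)$-HDM over the cyclic group $\bbZ_m$ with hole the order-$2$ subgroup $\{0,m/2\}$. After normalizing and deleting the all-zero first row we obtain a homogeneous $(m,3,1;2)$-HDM: a $3\times(m-2)$ matrix each of whose rows is a permutation of $\bbZ_m\setminus\{0,m/2\}$ and for which, over each pair of rows, the column differences run over $\bbZ_m\setminus\{0,m/2\}$. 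I would convert this into an extendible PDM$(m)$ by deleting two columns $\mathbf{u}=[u_1,u_2,u_3]^T$ and $\mathbf{v}=[v_1,v_2,v_3]^T$ and adjoining the constant column $[m/2,m/2,m/2]^T$, exactly as in the proofs of Propositions~\ref{case1} and~\ref{case2}. The two PDM axioms then hold automatically: each row gains the previously missing value $m/2$ and loses two entries, while each row-pair loses two differences and gains the difference $0$ from the constant column. The three sets of missing elements become $D_i=\{0,u_i,v_i\}$, $i\in[3]$.

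The hard part is the extendibility condition, namely that $D_1,D_2,D_3$ together with the difference multiset $\bigcup_{i=1}^{3}\Delta D_i$ cover each element of $\bbZ_m$ at most three times. The value $0$ is covered exactly three times (once by each $D_i$, and never by any $\Delta D_i$ since $u_i,v_i\neq 0$ and $u_i\neq v_i$), so the real constraint is on the nonzero elements, and it is a genuine condition on the chosen columns $\mathbf{u},\mathbf{v}$. In Propositions~\ref{case1} and~\ref{case2} this is met by the explicit Dinitz--Stinson columns (arithmetic-progression triples such as $\{1,4\}$, $\{c,4c\}$, $\{c+1,4(c+1)\}$), but a general matrix from Theorem~\ref{hdm} carries no such explicit structure when $m'$ is composite. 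I would therefore split on $m'$: when $m'$ is a prime power $\equiv 1\pmod 4$, or a prime $\equiv 3\pmod 4$, Proposition~\ref{case1} or~\ref{case2} directly provides the extendible PDM$(m)=\mathrm{PDM}(2m')$.

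For $m'$ with at least two distinct prime factors I would instead use inflation and hole-filling. Peel off one maximal prime-power factor $q=p^{a}\,\|\,m'$ (possible here), so that the cofactor $m'/q\geq 5$ and $\gcd(q,2m'/q)=1$, keeping $\bbZ_m$ cyclic; form a $\bigl(2(m'/q),4,1;2\bigr)$-HDM by Theorem~\ref{hdm}, inflate it by a $(q,4,1)$-CDM (available from Proposition~\ref{CDM-1} since $p\geq 5$) via the Inflation Proposition~\ref{product} to obtain a cyclic $(m,4,1;2q)$-HDM, and then fill its hole of order $2q$ with an extendible PDM$(2q)$ using Proposition~\ref{compose}; Proposition~\ref{pdmpcdp} then finishes. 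I expect the main obstacle to be twofold: maintaining cyclicity forces one to peel a full maximal prime power rather than a single prime, and the fill step (like the base case) then demands an extendible PDM$(2q)$ for that prime power. The genuinely delicate case is $q=p^{a}\equiv 3\pmod 4$ with $a\geq 3$ odd, which lies outside the reach of Proposition~\ref{case2}; handling it requires a separate small construction (a finite-field/cyclic variant of the Dinitz--Stinson argument) supplied as an additional ingredient.
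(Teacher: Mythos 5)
Your overall strategy is the paper's: reduce to an extendible PDM$(m)$ via Proposition~\ref{pdmpcdp}, and build that PDM by inflating a holey DM from Theorem~\ref{hdm}, filling the hole with an extendible PDM coming from Proposition~\ref{case1} or Proposition~\ref{case2}. But your execution has a genuine gap, and you admit it yourself: when every maximal prime-power factor you could peel off is of the form $q=p^{a}\equiv 3\pmod 4$ with $a\geq 3$ (in particular when $m/2$ is itself such a prime power, e.g.\ $m=2\cdot 7^{3}=686\equiv 2\pmod{12}$), neither Proposition~\ref{case1} nor Proposition~\ref{case2} supplies the extendible PDM$(2q)$ you need to fill the hole, and the ``separate small construction'' you invoke for this case is never given. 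A proof that defers an infinite family of cases to an unsupplied ingredient is incomplete.

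The gap is created by an unnecessary restriction you impose on yourself: you claim that ``maintaining cyclicity forces one to peel a full maximal prime power rather than a single prime.'' That is not so. Proposition~\ref{product} (Yin's inflation) is stated, and holds, with no coprimality hypothesis: an $(n,k,1;w)$-HDM over $\bbZ_n$ and an $(m,k,1)$-CDM over $\bbZ_m$ yield an $(mn,k,1;mw)$-HDM over the \emph{cyclic} group $\bbZ_{mn}$. The construction lifts the HDM entries to integer representatives and forms $d_{ij}+n\,e_{il}$ over all pairs of columns; given any target $t\in\bbZ_{mn}$ outside the order-$mw$ subgroup, reducing $t$ modulo $n$ picks out a unique HDM column and the quotient then picks out a unique CDM column, exactly as in the coprimality-free product Proposition~\ref{CDM-02}. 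Consequently the paper peels off a \emph{single} prime $p_1\geq 5$ from $m=2p_1^{\alpha_1}\cdots p_t^{\alpha_t}$: it inflates an $(m/p_1,4,1;2)$-HDM (Theorem~\ref{hdm}, condition~(i), since $m/(2p_1)$ is odd and coprime to $3$) by a $(p_1,4,1)$-CDM (Proposition~\ref{CDM-01}) to get an $(m,4,1;2p_1)$-HDM, and fills the hole with an extendible PDM$(2p_1)$ --- which \emph{always} exists by Proposition~\ref{case1} or Proposition~\ref{case2} because $p_1$ is prime. Replacing your ``maximal prime power'' peeling by single-prime peeling eliminates your problematic case entirely and completes the argument.
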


\begin{proof}
 Write $m=2 p_1^{\alpha_1} \ldots p_t^{\alpha_t}$, where $p_j \geq 5$, $j\in[t]$.
 Inflate a $(m/p_1,4,1;2)$-HDM (which exists by Theorem \ref{hdm})
 by a $(p_1,4,1)$-CDM (which exists by Proposition \ref{CDM-01})
  to get an $(m,4,1,2p_1)$-HDM.  
Fill in the hole with an extendible PDM$(2p_1)$ 
 from Proposition \ref{case2} or Proposition \ref{case1} to obtain an extendible PDM$(m)$. 
 The result now follows from Proposition \ref{pdmpcdp}.
\end{proof}
\vskip 10pt

\begin{proposition} 
\label{4,8mod12}
Let $m \equiv 4,20 \pmod{24}$. Then there exists a  $(3m,[3^m],3)$-PCDP.
\end{proposition}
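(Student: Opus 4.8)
The plan is to collapse the whole congruence class onto a single small base design using the product construction of Proposition~\ref{product2}. First I would record that $m\equiv 4,20\pmod{24}$ is exactly the set of integers of the form $m=4k$ with $k$ odd and $\gcd(k,3)=1$: the condition $m\equiv 4\pmod 8$ forces $v_2(m)=2$, and $m\not\equiv 0\pmod 3$ deletes the residue $12\pmod{24}$ from the class $m\equiv 4\pmod 8=\{4,12,20\}\pmod{24}$. The natural splitting to feed into Proposition~\ref{product2} is then $u=4$, $v=k$, which produces a $(3uv,[3^{uv}],3)=(12k,[3^{4k}],3)=(3m,[3^m],3)$-PCDP. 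The key structural observation is that the companion value $u=2$ is \emph{forbidden} here, since a $(6,[3^2],3)$-PCDP cannot exist by Corollary~\ref{Lem2k-cor}; this is precisely why the present even case cannot be handled like Proposition~\ref{2,10mod12}, and why one blows up by the factor $4$ (with $12\equiv 0\pmod 4$, outside the reach of Corollary~\ref{Lem2k-cor}) rather than by $2$.

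The only ingredient not already available is the base $(12,[3^4],3)$-PCDP, which simultaneously settles $k=1$ (that is, $m=4$). I would supply it explicitly: take the four base blocks $\{0,2,5\}$, $\{1,7,9\}$, $\{3,4,8\}$, $\{6,10,11\}$ in $\bbZ_{12}$. These partition $\bbZ_{12}$, and a direct tally of the difference multiset $\Delta\D$ shows that each of $4,5,7,8$ occurs three times, each of $1,2,6,10,11$ twice, and each of $3,9$ once. Hence $\max_{g\in\bbZ_{12}^\star}\Phi_\D(g)=3$, so this is a $(12,[3^4],3)$-PCDP, and optimality is immediate from (\ref{1.4}).

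For the generic case $k\geq 5$ I would invoke Proposition~\ref{CDM-1} with $p=5$: every prime factor of $k$ is at least $5$ and $2\le 3\le p-1$, so a homogeneous $(k,3,1)$-CDM exists (equivalently one may cite Proposition~\ref{CDM-2}, since $k$ is odd and $\gcd(k,27)=1\neq 9$). Applying Proposition~\ref{product2} to this CDM together with the $(12,[3^4],3)$-PCDP yields a $(3m,[3^m],3)$-PCDP, whose index $3$ meets (\ref{1.4}) and is therefore optimal. The value $k=1$ is covered by the base design itself.

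I expect the only genuine content to be the base step: recognizing that the correct inflation factor is $u=4$ rather than $u=2$, and exhibiting the explicit $(12,[3^4],3)$-PCDP. Everything downstream is routine, because the hypothesis $\gcd(k,3)=1$ guarantees the required cyclic difference matrix for \emph{every} admissible $k$, and Proposition~\ref{product2} preserves both the all-$3$ block-size profile and the index $3$.
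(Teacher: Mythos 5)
Your proof is correct, but it takes a genuinely different route from the paper's. The paper stays inside its holey-difference-matrix machinery: Theorem~\ref{hdm} supplies an $(m,4,1;4)$-HDM, hence a homogeneous $(m,3,1;4)$-HDM; each column $[a,b,c]^T$ becomes a block $\{a_0,b_1,c_2\}$ of $\bbZ_m\times\bbZ_3\simeq\bbZ_{3m}$, and the hole $H\times\bbZ_3$ (the subgroup of order $12$) is then filled with the four blocks $(m/4)\{0,1,5\}$, $(m/4)\{2,6,9\}$, $(m/4)\{3,4,10\}$, $(m/4)\{7,8,11\}$, which are themselves a scaled $(12,[3^4],3)$-PCDP. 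You instead factor $m=4k$ with $k$ odd and $\gcd(k,3)=1$ (your characterization of the residues $4,20\pmod{24}$ is right), exhibit an explicit $(12,[3^4],3)$-PCDP (your blocks $\{0,2,5\}$, $\{1,7,9\}$, $\{3,4,8\}$, $\{6,10,11\}$ do partition $\bbZ_{12}$ and your difference tally, with maximum multiplicity $3$, checks out), and inflate it by a homogeneous $(k,3,1)$-CDM via Proposition~\ref{product2}; the CDM exists by Proposition~\ref{CDM-1} with $p=5$ (every prime factor of $k$ is at least $5$) or by Proposition~\ref{CDM-2}. This is precisely the technique the paper reserves for the odd case (Proposition~\ref{odd}, subcase $\gcd(m,27)\in\{3,9\}$), transplanted to $u=4$. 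What your route buys: it avoids Theorem~\ref{hdm} (an imported result of Yin on holey DMs) entirely, so this proposition becomes self-contained modulo ingredients proved inside the paper, with no hole-filling bookkeeping. What the paper's route buys: uniformity with the rest of Section~\ref{PCDPs}, whose remaining even cases genuinely need the HDM template --- as you correctly observe, the product trick is unavailable when the natural base would be $u=2$, since no $(6,[3^2],3)$-PCDP exists by Corollary~\ref{Lem2k-cor}. Both proofs ultimately rest on the same $12$-point ingredient; the difference is whether it enters as a hole filler or as the seed of an inflation.
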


\begin{proof}
 By Theorem \ref{hdm}, there exists an  $(m,4,1;4)$-HDM, and therefore a homogeneous
$(m,3,1;4)$-HDM.  For each column $[a,b,c]^{T}$ of the matrix, we construct a base block 
$\{a_0,b_1,c_2\}$ on $\bbZ_{m} \times \bbZ_{3}$. 
Since $\gcd(m,3)=1$, $\bbZ_m\times\bbZ_3\simeq \bbZ_{3m}$.
Add four
blocks $(m/4)\{0,1,5\}$, $(m/4)\{2,6,9\}$,
$(m/4)\{3,4,10\}$,
and $(m/4)\{7,8,11\}$.  It is easy to check that it gives the desired result.
\end{proof}
\vskip 10pt

\begin{proposition}
\label{0mod6}
Let $m\equiv 0\pmod{6}$. Then there exists a $(3m,[3^m],3)$-PCDP.
\end{proposition}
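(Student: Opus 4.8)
The plan is to combine the two production lines already assembled in this section---Proposition~\ref{6q}, which turns an $(m,4,1;6)$-HDM into the target PCDP, and Proposition~\ref{pdmpcdp}, which turns an extendible PDM$(m)$ into it---while using the product construction of Proposition~\ref{product2} to strip away the part of $m$ coprime to $6$. Writing $m=2^{a}3^{b}n$ with $a,b\ge 1$ and $\gcd(n,6)=1$, I would first dispose of $n$: when $n>1$ a homogeneous $(n,3,1)$-CDM exists by Proposition~\ref{CDM-1} (every prime factor of $n$ is at least $5$), so Proposition~\ref{product2} with $u=2^{a}3^{b}$ and $v=n$ builds the required $(3m,[3^m],3)$-PCDP from a $(3\cdot 2^{a}3^{b},[3^{2^{a}3^{b}}],3)$-PCDP. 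This leaves only the \emph{core} orders $m=2^{a}3^{b}$ with $a,b\ge 1$. As a check on the boundary, the minimal subfamily $m=6n$ with $\gcd(n,6)=1$ and $n\ge 5$ can alternatively be handled outright, since Theorem~\ref{hdm}(iii) supplies an $(m,4,1;6)$-HDM and Proposition~\ref{6q} with $m'=m/6$ then yields the PCDP directly.

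For the core orders I would seed the argument with the explicit extendible partial DMs already in hand: Example~\ref{pdm8} and Lemmas~\ref{pdm16}--\ref{pdm64} provide extendible PDM$(m)$ for a spread of values $m=2^{a}3^{b}$, each of which becomes a PCDP through Proposition~\ref{pdmpcdp}. The infinitely many remaining core orders I would reach by a recursive engine. Theorem~\ref{hdm}(i) gives an $(m_0,4,1;2)$-HDM for every $m_0=2^{a'}3^{b'}\ge 8$, and feeding this to Chang and Miao's Theorem~\ref{mult64} outputs a $(64m_0,4,1;4m_0)$-HDM and a $(72m_0,4,1;12m_0)$-HDM. Filling their holes via Proposition~\ref{compose} with extendible PDM$(4m_0)$ and PDM$(12m_0)$---produced at earlier stages of an induction on the order---gives extendible PDM$(64m_0)$ and PDM$(72m_0)$, and hence the corresponding PCDPs.

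Since $64m_0=2^{a'+6}3^{b'}$ and $72m_0=2^{a'+3}3^{b'+2}$, the two moves available to the induction shift the exponent pair $(a,b)$ by $(6,0)$ and by $(3,2)$ respectively. The plan is therefore to verify that the tabulated base cases seed enough starting pairs---covering every residue of $a\bmod 3$ and both parities of $b$---so that iterating the two moves exhausts all $(a,b)$ with $a,b\ge 1$, while every hole $4m_0=2^{a'+2}3^{b'}$ or $12m_0=2^{a'+2}3^{b'+1}$ created along the way is a strictly smaller core order, keeping the induction well-founded.

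The hard part will be exactly this bookkeeping. I expect most of the effort to go into checking that Example~\ref{pdm8} and Lemmas~\ref{pdm16}--\ref{pdm64} leave no residue class of exponents unseeded, and into handling the degenerate smallest order $m=6$---where the Proposition~\ref{6q} reduction collapses to an empty matrix---together with any isolated gap such as $m=216=2^{3}3^{3}$ that the two moves cannot reach from a smaller base. For $m=216$ a convenient sidestep is Proposition~\ref{product2} with $u=8$ and $v=27$, using the extendible PDM$(8)$ of Example~\ref{pdm8} and a homogeneous $(27,3,1)$-CDM from Proposition~\ref{CDM-2}; more generally, whenever a nontrivial odd factor of $m$ admits a cyclic difference matrix, peeling it off first shortens the induction and is the cleanest way to close the residual cases.
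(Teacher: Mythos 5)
Your toolkit is the paper's own---the seeds of Example~\ref{pdm8} and Lemmas~\ref{pdm16} and \ref{pdm64}, the two moves of Theorem~\ref{mult64} with hole-filling via Proposition~\ref{compose}, and conversion via Proposition~\ref{pdmpcdp}---and your opening reduction is a genuine (and valid) simplification: peeling off the part of $m$ coprime to $6$ at the PCDP level with Proposition~\ref{product2} and a homogeneous $(m',3,1)$-CDM from Proposition~\ref{CDM-1} replaces the paper's device of inflating an $(m/p,4,1;2)$-HDM by a $(p,4,1)$-CDM and filling with the Dinitz--Stinson extendible PDM$(2p)$ of Propositions~\ref{case2} and~\ref{case1}, which your route renders unnecessary inside this proposition. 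Your treatment of $m=6$ and of $m=6n$ with $\gcd(n,6)=1$ (Theorem~\ref{hdm}(iii) plus Proposition~\ref{6q}) coincides with the paper's.

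The genuine gap is in the core-order bookkeeping, and it is not the finite check you anticipate. The shifts you record, $(6,0)$ and $(3,2)$, relate the \emph{input} HDM $m_0$ to the output $64m_0$ or $72m_0$; but the induction is driven by the \emph{hole}, which must be filled by an already-constructed PDM, and the hole-to-output shifts are $(4,0)$ (hole $4m_0$, output $64m_0$) and $(1,1)$ (hole $12m_0$, output $72m_0$). Both moves therefore leave $a-b$ fixed or increase it by $4$, while every seed in your tables has $a-b\ge -3$ (the extreme case being PDM$(162)$, i.e.\ $(a,b)=(1,4)$). Hence no core order with $b\ge a+4$ is reachable by the two moves---$m=2\cdot 3^5=486$, $m=2^2\cdot 3^6$, and so on, an infinite family---and the side conditions in Theorem~\ref{hdm} kill further cases with $a-b\ge -3$, such as $m=216$, $432$, and $324$. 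So ``isolated gaps'' is the wrong picture: the recursion essentially cannot handle $b\ge 3$ unless $a$ is large compared with $b$. The repair is the remark you make only in passing, promoted to the main mechanism for all $b\ge 3$: for $b\ge 4$ apply Proposition~\ref{product2} with $v=3^{b-1}$ (Proposition~\ref{CDM-2} applies because $\gcd(3^{b-1},27)=27\neq 9$; note that $v=9$ itself is excluded, so one cannot peel a single factor of $9$), reducing to the already-settled case $b=1$; for $b=3$ take $v=27$ and $u=2^a$, whose PCDP comes from PDM$(2^a)$ ($3\le a\le 8$ from your tables, $a\ge 9$ by iterating your first move on pure powers of two, while $a\in\{1,2\}$ are exactly the seeds $m=54$ and $m=108$). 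This is, in substance, how the paper disposes of every $b\ge 3$: it inflates an $(m/27,4,1;w)$-HDM by a $(27,4,1)$-CDM obtained from Proposition~\ref{CDM-2} and fills the hole with a PDM$(27w)$, rather than treating these orders as exceptional.
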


\begin{proof}
Write $m=2^a3^bm'$, where $a,b\geq 1$ and $\gcd(m',6)=1$. We consider three cases:
\begin{description}
\item[$b=1$:]  When $m'=1$ and $a=1$, apply Proposition \ref{6q}
to a $(6,4,1;6)$-HDM (which exists trivially) to obtain an $(18,[3^6],3)$-PCDP.

When $m'=1$ and $2\leq a\leq 7$, 
 the result is obtained by applying Proposition \ref{pdmpcdp} to the extendible PDM$(m)$s
 obtained from Lemma
\ref{pdm16} and Lemma \ref{pdm64}.   

When $m'=1$ and $a\geq 8$,  
take a $(2^{a-6}\cdot 3,4,1;2)$-HDM, apply Theorem \ref{mult64} to obtain a 
$(2^a\cdot 3,4,1;2^{a-4}\cdot 3)$-HDM. Fill in the hole with
an extendible PDM$(2^{a-4}\cdot 3)$ (which exists by the induction hypothesis)
 to obtain an extendible PDM$(2^a\cdot 3)$.
Now apply Proposition \ref{pdmpcdp}.

When $m'>1$ and $a=1$, there exists a $(m,4,1;6)$-HDM by Theorem \ref{hdm}.
Now apply Proposition \ref{6q}.

When $m'>1$ and $a \geq 2$, let $p$ be a prime factor of $m'$ (note that $p\geq 5$).
Theorem \ref{hdm} implies the existence
of an $(m/p,4,1;2)$-HDM. Inflate this $(m/p,4,1;2)$-HDM by
 a $(p,4,1)$-CDM (which exists by
Proposition \ref{CDM-01})
to obtain an $(m,4,1;2p)$-HDM. Fill in the hole with an extendible
PDM$(2p)$ from Proposition \ref{case2} or Proposition \ref{case1} to obtain an
extendible PDM$(m)$. Now apply Proposition \ref{pdmpcdp}.

\item[$b=2$:] When $m'=1$ and $a\in[5]$, an extendible PDM$(m)$ exists by Lemma \ref{pdm16}
and Lemma \ref{pdm64}.

When $m'=1$ and $a=6$, apply Theorem \ref{mult64} with an 
$(8,4,1;2)$-HDM (which exists by Theorem \ref{hdm})
to obtain a $(576,4,1;96)$-HDM. Fill in the hole with an extendible PDM$(96)$
from Lemma \ref{pdm64} to obtain an extendible PDM$(576)$
and apply Proposition \ref{pdmpcdp}.

When $m'=1$ and $a\geq 7$, apply Theorem \ref{mult64} to
a $(m/64,4,1;2)$-HDM (which exists by Theorem \ref{hdm}) to obtain a
$(m,4,1;m/16)$-HDM. Fill in the hole with an extendible PDM$(m/16)$ (which
exists by the induction hypothesis) to obtain an extendible PDM$(m)$. Now
apply Proposition \ref{pdmpcdp}.

When $m'>1$, let $p$ be a prime factor of $m'$ (note that $p\geq 5$). Theorem \ref{hdm} implies
the existence of a $(m/p,4,1;2)$-HDM. Inflate this holey DM by a $(p,4,1)$-CDM
(which exists by Proposition \ref{CDM-01}) to obtain a $(m,4,1;2p)$-HDM.
Fill in the hole with an extendible PDM$(2p)$ (which exists by Proposition \ref{case2}
or Proposition \ref{case1}) to obtain an extendible PDM$(m)$. Now apply
Proposition \ref{pdmpcdp}.

\item[$b\geq 3$:] Theorem \ref{hdm} implies the existence of an $(m/27,4,1;w)$-HDM,
for some $w\in\{2,4,6\}$.  Inflate this $(m/27,4,1;w)$-HDM by a $(27,4,1)$-CDM 
(which exists by Proposition \ref{CDM-2}) to obtain an
$(m,4,1;27w)$-HDM. Fill in the hole 
with an extendible PDM$(27w)$ (which exists by
Example \ref{pdm8} or Lemma \ref{pdm64}) to obtain
an extendible PDM$(m)$. The result now follows from Proposition \ref{pdmpcdp}.
\end{description}
\end{proof}
\vskip 10pt

\subsubsection{Summary}
\hfill

\vskip 10pt
\begin{theorem}
There exists a $(3m,[3^m],3)$-PCDP for all $m\not\equiv 8,16\pmod{24}$, except
when $m=2$.
\end{theorem}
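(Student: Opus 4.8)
The plan is to prove this final statement purely by assembling the residue-class results already established, treating $m \bmod 24$ as the organizing parameter. Every admissible congruence class has been handled in one of the preceding propositions, so the only real content of the concluding step is to check that the cases are mutually exhaustive over all $m\not\equiv 8,16\pmod{24}$, and to dispose of the single genuine exception $m=2$.

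First I would split on the parity of $m$. When $m$ is odd, Proposition~\ref{odd} supplies a $(3m,[3^m],3)$-PCDP directly, with no further congruence restriction, so the odd case is complete. When $m$ is even, I would reduce modulo $24$ and match each even residue to a prior construction: residues $2,10,14,22\pmod{24}$ (equivalently $m\equiv 2,10\pmod{12}$) are covered by Proposition~\ref{2,10mod12}; residues $4,20\pmod{24}$ by Proposition~\ref{4,8mod12}; and residues $0,6,12,18\pmod{24}$ (equivalently $m\equiv 0\pmod{6}$) by Proposition~\ref{0mod6}.

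The key bookkeeping step is to verify that the union of these residue sets, namely $\{0,2,4,6,10,12,14,18,20,22\}$, is exactly the set of even residues modulo $24$ with $\{8,16\}$ removed. Hence every even $m$ with $m\not\equiv 8,16\pmod{24}$ is handled, which together with the odd case covers all $m\not\equiv 8,16\pmod{24}$. I would also remark that each resulting PCDP is automatically optimal: for $n=3m$ with $m$ base blocks one has $\mu=3$ and $\epsilon=0$ in the notation of (\ref{1.4}), so $\rho(3m,m)\ge 3$ and the index $3$ meets this bound.

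The only loose end — the point that needs genuine argument rather than citation — is the boundary value $m=2$, excluded by the hypothesis $m>2$ in Proposition~\ref{2,10mod12}. Here a $(3m,[3^m],3)$-PCDP would be a $(6,[3^2],3)$-PCDP; but $6\equiv 2\pmod{4}$, so Corollary~\ref{Lem2k-cor} gives $\rho(6,2)\ge 4$, and no such PCDP exists. Thus $m=2$ is a true exception and must be recorded as such. The main obstacle is therefore not any new construction but the exhaustiveness check above: the delicacy lies entirely in confirming that the earlier propositions leave no even residue uncovered except the two intended ones.
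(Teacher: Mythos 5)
Your proposal is correct and takes essentially the same route as the paper's proof: assemble Propositions \ref{odd}, \ref{2,10mod12}, \ref{4,8mod12}, and \ref{0mod6}, verify that their residue classes exhaust all $m\not\equiv 8,16\pmod{24}$ with $m>2$, and treat $m=2$ as the lone exception. Your handling of $m=2$ is in fact slightly more complete than the paper's, which only asserts ``it is easy to check'' that a $(6,[3^2],3)$-PCDP cannot exist, whereas you derive it from Corollary \ref{Lem2k-cor} (the index of any $(6,[3^2],\lambda)$-PCDP is even, so $\rho(6,2)\ge 4$).
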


\begin{proof}
Propositions \ref{odd}, \ref{2,10mod12}, \ref{4,8mod12}, and \ref{0mod6} give a
$(3m,[3^m],3)$-PCDP for all $m\not\equiv 8,16\pmod{24}$, $m>2$.
It is easy to check that a
$(6,[3^2],3)$-PCDP
cannot exist.
\end{proof}

\section{Concluding Remarks}

 In this paper, a number of new infinite families of optimal  
 PCDPs are presented. The PCDPs obtained
 can be used directly to produce
 frequency-hopping
 sequences optimal with respect to Hamming correlation
 and comma-free codes optimal with respect to redundancy. 
 They are also of independent interest in
 combinatorial design theory.
 

\section*{Authors' Biographies}

{\bf Yeow Meng Chee} (SM'08) received the B.Math. degree in computer science and
combinatorics and optimization and the M.Math. and Ph.D. degrees in computer science,
from the University of Waterloo, Waterloo, ON, Canada, in 1988, 1989, and 1996, respectively.

Currently, he is an Associate Professor at 
the Division of Mathematical Sciences, School of Physical
and Mathematical Sciences, Nanyang Technological University, Singapore.
Prior to this, he was Program Director of Interactive Digital Media R\&D in the
Media Development Authority of Singapore,
Postdoctoral Fellow at the University of Waterloo and
IBM's Z{\"u}rich Research Laboratory, General Manager of the Singapore Computer Emergency
Response Team, and 
Deputy Director of Strategic Programs at the Infocomm Development Authority, Singapore.
His research interest lies in the interplay between combinatorics and computer science/engineering,
particularly combinatorial design theory, coding theory, extremal set systems,
and electronic design automation.

\vskip 10pt 

{\bf Alan C. H. Ling} was born in Hong Kong in 1973. He received the B.Math.,
M.Math., and Ph.D. degrees in combinatorics \& optimization from the University
of Waterloo, Waterloo, ON, Canada, in 1994, 1995, and 1996, respectively.
He worked at the Bank of Montreal, Montreal, QC, Canada, and Michigan
Technological University, Houghton, prior to his present position as Associate
Professor of Computer Science at the University of Vermont, Burlington. His research
interests concern combinatorial designs, codes, and applications in computer
science.

\vskip 10pt

{\bf Jianxing Yin} received the B.Sc. degree from Suzhou University, China, in 1977.
Since 1977, he has been a teacher in the Department of Mathematics, Suzhou University, China.
He is currently a Full Professor at the same university. His research interests include
applications of combinatorial designs in coding theory and cryptography.

\end{document}